\newcommand{\RNum}[1]{\uppercase\expandafter{\romannumeral #1\relax}}
\theoremstyle{definition}
\newtheorem{theorem}{Theorem}
\newtheorem{lemma}{Lemma}
\newcommand{\snorm}[1]{\left|\left|#1\right|\right|_{1\rightarrow1}}
\newcommand{\Lk}[1]{\hat{\mathcal{L}}_{#1}}
\title{Digital Simulation of Single Qubit Markovian Open Quantum Systems: A Tutorial}
\date{}
\author[1]{I. J. David}
\author[1,2]{I. Sinayskiy}
\author[1,2,3]{F. Petruccione}
\affil[1]{School of Chemistry and Physics, University of KwaZulu-Natal, Durban, South Africa.}
\affil[2]{National Institute for Theoretical and Computational Sciences (NITheCS), South Africa.}
\affil[3]{School of Data Science and Computational Thinking, Stellenbosch University, Stellenbosch 7604, South Africa.}
\begin{document}

\maketitle

\section*{Abstract}
{\bf

One of the first proposals for the use of quantum computers was the simulation of quantum systems. Over the past three decades, great strides have been made in the development of algorithms for simulating closed quantum systems and the more complex open quantum systems. In this tutorial, we introduce the methods used in the simulation of single qubit Markovian open quantum systems. It combines various existing notations into a common framework that can be extended to more complex open system simulation problems. The only currently available algorithm for the digital simulation of single qubit open quantum systems is discussed in detail. A modification to the implementation of the simpler channels is made that removes the need for classical random sampling, thus making the modified algorithm a strictly quantum algorithm. The modified algorithm makes use of quantum forking to implement the simpler channels that approximate the total channel. This circumvents the need for quantum circuits with a large number of C-NOT gates. 
}

\tableofcontents

\section{Introduction}

The simulation of complex quantum systems is central to many important problems in drug discovery, chemistry, material science and more. Although simulating these systems remains a computationally challenging task for classical computers, the natural resources of quantum computers can be leveraged to more efficiently simulate these systems.

Generally, simulating a quantum system involves approximating a mathematical object describing the total evolution of a system by some combination of simpler objects. When the simulation is performed on a quantum computer, these simpler objects define the operations that will be performed on the quantum computer. Over the past three decades, great strides have been made in the development of algorithms for simulating closed and open quantum systems. These algorithms often strive to achieve a low enough error in the approximation of the mathematical object and find an efficient set of operations to perform the evolution.

 The simulation of closed quantum systems, often called Hamiltonian Simulation, has been extensively explored \cite{lloyd1996universal,campbell2019random,childs2012hamiltonian,childs2018toward,childs2019faster,childs2021theory,berry2007efficient,berry2015simulating,berry2015hamiltonian,berry2020time,papageorgiou2012efficiency,low2019hamiltonian}. For Hamiltonian Simulation, the mathematical object that must be approximated is the total unitary evolution generated by some Hamiltonian. The total unitary evolution is approximated by some combination of simpler unitaries, generated by the components of the decomposed Hamiltonian. Once the simpler unitaries are obtained, efficient quantum circuits for these simpler unitaries which minimise gate count and number of qubits must be constructed. Algorithms for Hamiltonian simulation are typically characterised by the recombination methods used to combine the simpler unitaries. The most commonly used recombination method is the Suzuki-Lie-Trotter (SLT) Product Formulas \cite{suzuki1990fractal,suzuki1991general}.
 
While closed quantum systems are relevant to many problems in quantum mechanics, open quantum systems describe a more complex, commonly encountered system: a quantum system that can interact with its surrounding environment, sharing information and energy \cite{breuer2002theory,rivas2012open}. These systems are crucial to our understanding of nonequilibrium dynamics and thermalization in a wide range of systems, from damped-driven spin-boson models to complex many fermion-boson models \cite{georgescu2014quantum,brown2010using}. In this tutorial, we shall only consider the simulation of only Markovian open quantum systems for which we can neglect memory effects.

Several methods that allow us to simulate open quantum systems have been developed  \cite{sweke2014simulation,bacon2001universal,sweke2015universal,childs2016efficient,cleve2016efficient,hu2020quantum,gaikwad2022simulating,kamakari2022digital,suri2022two}. For the simulation of open quantum systems, the mathematical object that must be approximated is a quantum channel: a completely positive and trace preserving (CPTP) map, also called a dynamical map. Like the Hamiltonian which generates the total unitary evolution for closed quantum systems, a superoperator called the Gorini-Kossakowski-Sudarshan-Lindblad (GKSL) generator \cite{lindblad1976generators,gorini1976completely} generates the quantum channel. The quantum channel can then be approximated by a combination of simpler channels generated by the components of the decomposed GKSL generator. These simpler channels are then executed on the quantum computer.

While the spectral decomposition was the natural choice for decomposing the Hamiltonians in Hamiltonian simulation, the method used to decompose the GKSL generator for open system simulation was developed by Bacon et al. \cite{bacon2001universal}. This method involves decomposing the GKSL generator into simpler components via linear combination of semi-groups and unitary conjugation. It was also shown by Bacon et al. that all quantum channels on finite dimensional systems can be simulated by unitary evolutions and quantum channels from a universal semigroup. The decomposition method developed was used by Sweke et al. to develop an algorithm for simulating open quantum systems \cite{sweke2014simulation,sweke2015universal}. This algorithm makes use of SLT product formulas for the recombination of the simpler quantum channels. Sweke et al. also derived the bounds on the error in the approximation as well as the bounds on the gate counts.

Despite the advancements made in the simulation of open systems, a standard or widely accepted notation is lacking. This has hindered progress in the field. This tutorial will introduce the methods used in the simulation of single qubit open quantum systems. It combines various existing notations into a common framework that can be extended to more complex open system simulation problems. The only currently available algorithm for the digital simulation of single qubit open quantum systems \cite{sweke2014simulation} will be discussed in detail. A modification that makes use of quantum forking \cite{park2019parallel} in the implementation of the simpler channels is made that removes the need for classical random sampling. This makes the modified algorithm a strictly quantum algorithm. This modification also circumvents the need for quantum circuits with a large number of C-NOT gates.

 This tutorial will be structured as follows: Section 2 formally states the problem of quantum simulation and fixes the notation used throughout this tutorial. Section 3 and 4 present the decomposition and recombination methods used by the algorithm, respectively. Section 5 presents the quantum circuits that can implement any channel from the universal semigroup of quantum channels. Lastly in Section 6, concluding remarks are made and open problems in open quantum system simulation are discussed. 

\section{Problem and Setting}
\label{background&problem}
The state space of a single qubit is a  two dimensional complex Hilbert space $\mathcal{H}_{s} \cong \mathbb{C}^{2}$. A quantum state of a single qubit is described by a density matrix $\rho \in \mathcal{M}_{2}(\mathbb{C}) \cong \mathcal{B}(\mathcal{H}_{s})$, where $\mathcal{B}(\mathcal{H}_{s})$ is the set of all bounded linear operators acting on the Hilbert space $\mathcal{H}_{s}$ and $\mathcal{M}_{2}(\mathbb{C})$ is the set of $2\times 2$ complex matrices , such that:
\begin{align}
\label{eq1}
	 \rho \geq 0,  && \mathrm{tr}(\rho)=1, && \rho=\rho^{\dagger}.
\end{align}
Quantum channels provide a general framework for describing the evolution of quantum states. These are completely positive and trace preserving (CPTP) maps \cite{breuer2002theory},
\begin{equation}
\label{eq2}
    T:\mathcal{B}(\mathcal{H}_{s}) \rightarrow \mathcal{B}(\mathcal{H}_{s}).
\end{equation}
It is a well known fact that $T$ has a Kraus representation \cite{kraus1971general},
\begin{equation}
\label{eq3}
    T(\rho)=\sum_{j=1}^{r}K_{j}\rho K_{j}^{\dagger},
\end{equation}
where $K_{j}$ are called the Kraus operators that satisfy the completeness relation $\sum_{j=1}^{r}K_{j}^{\dagger}K_{j}=\mathbb{1}_{s}$, with $\mathbb{1}_{s}$ being the identity operator acting on the space $\mathcal{H}_{s}$, and $r=\mathrm{rank}(\tau)\leq 4$. Here $\tau$ is the Choi matrix defined through the the Choi-Jamiolkowski isomorphism \cite{choi1975completely,jamiolkowski1972linear} as,\\
\begin{equation}
\label{eq4}
    \tau=(T \otimes \mathbb{1}_{s})|\Omega\rangle \langle \Omega|,
\end{equation}
where $|\Omega \rangle = \frac{1}{\sqrt{2}}(\ket{00}+\ket{11})$ is a maximally entangled state. The Choi matrix is a representation of the quantum channel $T$ and it is defined through the Choi-Jamiolkowski isomorphism. It is a matrix description of the quantum channel $T$ and it contains all the properties of $T$. For a quantum channel $T$, we can define the dual or adjoint as $\tilde{T}$ and it can be calculated through the relation $\mathrm{tr}(A\tilde{T}(B))=\mathrm{tr}(T(A)B)$ for any $A,B \in \mathcal{B}(\mathcal{H}_{s})$. Using this relation we can also find the Kraus representation of the dual of the channel,
\begin{equation}
\label{eq5}
	\tilde{T}(\rho)=\sum_{j=1}^{r}K_{j}^{\dagger}\rho K_{j}.
\end{equation}

So far we have only considered the state of a single qubit $\rho$ that is time independent, but for most practical cases we must consider a time dependent state $\rho(t)$ for some $t\geq 0$.

The channels above describe discrete time evolution as the channel $T$ does not depend on some continuous parameter and are specified by time independent Kraus operators. However we are interested in Markovian continuous time evolution. Which is described by a continuous single parameter semigroup of quantum channels $\{T_{t}\}$, which satisfy:\\
\begin{equation}
\label{eq6}
    T_{t}T_{s}=T_{t+s}, \hspace{10mm} T_{0}=\mathbb{1} \hspace{10mm} t,s \in \mathbb{R}_{+}.
\end{equation}
Also if we introduce time dependence to the state of our single qubit, then we write the density matrix, $\rho(t)$ which describes the quantum state at some time $t \geq 0$ then we have that,\\
\begin{equation}
\label{eq7}
    \rho(t)=T_{t}(\rho(0)).
\end{equation}
Every semigroup $\{T_{t}\}$ has a unique generator $\mathcal{L}: \mathcal{B}(\mathcal{H}_{s}) \rightarrow \mathcal{B}(\mathcal{H}_{s})$ such that,\\
\begin{equation}
\label{eq8}
    T_{t}=e^{t\mathcal{L}}=\sum_{j=0}^{\infty}\frac{t^{j}}{j!}\mathcal{L}^{j},
\end{equation}
where $\mathcal{L}$ satisfies the master equation,\\
\begin{equation}
\label{eq9}
    \frac{d}{dt}\rho(t)=\mathcal{L}(\rho(t)).
\end{equation}
The generator $\mathcal{L}$ is the generator of a continuous one parameter Markovian semigroup $\{T_{t}\}$ if and only if it can be written in the celebrated Gorrini-Kossakowski-Sudarshan-Lindblad (GKSL) form \cite{gorini1976completely,lindblad1976generators},\\
\begin{equation}
\label{eq10}
    \mathcal{L}(\rho)= -i[H,\rho] +\sum_{i,j=1}^{3} A_{ij} \big([F_{i},\rho F_{j}^{\dagger}]+[F_{i}\rho ,F_{j}^{\dagger}]  \big),
\end{equation}
where $H=H^{\dagger} \in \mathcal{M}_{2}(\mathbb{C})$ is the Hamiltonian and $A \in \mathcal{M}_{3}(\mathbb{C})$ is a positive semi-definite matrix called the GKS matrix. The matrices $\{F_{i}\}$ are a basis for the space of traceless matrices in $\mathcal{M}_{2}(\mathbb{C})$ and without loss of generality we chose the basis $\{F_{i}\}$ to the be the normalized Pauli matrices i.e. $\{F_{i}\}=\frac{1}{\sqrt{2}}\{\sigma_{i}\}_{i=1}^{3}=\frac{1}{\sqrt{2}}\{ \sigma_{1}, \sigma_{2}, \sigma_{3}\}$.\\
\\
One of the main aspects of digital quantum simulation of an open quantum system is finding an approximation to the quantum channels that describe the evolution of the system. 
 To quantify the error in approximations of quantum channels we make use of the $1\rightarrow1$ superoperator norm defined in terms of the general $p\rightarrow q$ Schatten norm. One should note quantum channels are also called superoperators since they are operators which act on other operators. The general $p\rightarrow q$ Schatten norm for some superoperator $T$ is defined as:\\
\begin{equation}
\label{eq11}
    ||T||_{p \rightarrow q}= \sup_{||A||_{p}=1}||T(A)||_{q}
\end{equation}
where $||A||_{p}:= (\mathrm{tr}(|A|^{p}))^{\frac{1}{p}}$ and $|A|=\sqrt{A^{\dagger}A}$. Using the above equation we see that  the $1 \rightarrow 1$ superoperator norm is:
\begin{equation}
	\label{eq12}
	||T||_{1 \rightarrow 1}=\sup_{||A||_{1}=1}||T(A)||_{1},
\end{equation}
\\
where $||A||_{1}=\mathrm{tr}(\sqrt{A^{\dagger}A})$. The $1\rightarrow 1$ superoperator norm also satisfies the following standard properties. For any superoperators $T$ and $V$ we have that,
\begin{align}
	||T+V||_{1\rightarrow 1} \leq ||T||_{1\rightarrow 1}+||V||_{1\rightarrow 1}, && ||TV||_{1\rightarrow 1}\leq ||T||_{1\rightarrow 1}||V||_{1\rightarrow 1}.
\end{align}

Now that we have outlined some notation and defined some basic objects we can state the problem:\\
\\
\textbf{Problem Statement:} Given a continuous one parameter semigroup $\{T_{t}\}$, generated by $\mathcal{L}$, specified by a GKS matrix $A\geq 0 \in \mathcal{M}_{3}(\mathbb{C})$ and a Hamiltonian $H=H^{\dagger} \in \mathcal{M}_{2}(\mathbb{C})$. Find a quantum circuit, using a polynomial number of gates, that approximates $T_{t}=\mathrm{exp}(t\mathcal{L})$ such that the maximum error in the final state, as quantified by the $1\rightarrow1$ superoperator norm is at most $\epsilon$.\\
\\
The strategy to solve this problem consists of three steps. First, we make use of linear combination of semigroups to decompose the channel $T_{t}$ into constituent channels. Then we use unitary conjugation to further decompose the constituent channels into unitary transformations and a channel from some universal semigroup \cite{bacon2001universal}. The second step makes use of SLT product formulas \cite{sweke2014simulation,suzuki1990fractal} to approximate the channel $T_{t}$ using combinations of the constituent channels. Lastly, we convexly decompose the channels from the universal semigroup into extreme channels \cite{ruskai2002analysis} and use quantum forking \cite{park2019parallel} to implement the quantum circuits for the constituent channels.

\section{Decomposition of the arbitrary generator}

The first step is to decompose $\mathcal{L}$ into a combination of generators of simpler semigroups. We do this using linear combination of semigroups and unitary conjugation \cite{bacon2001universal}. Given the GKSL generator for an arbitrary single qubit channel,\\
\begin{equation}
\label{eq13}
 \mathcal{L}(\rho)=-i[H,\rho] + \frac{1}{2}\sum_{i,j=1}^{3}A_{ij}\big( [\sigma_{i},\rho \sigma_{j}] +[\sigma_{i}\rho,\sigma_{j}] \big),
\end{equation}
\\
where $A \geq 0 \in \mathcal{M}_{3}(\mathbb{C})$ and $H=H^{\dagger} \in \mathcal{M}_{2}(\mathbb{C})$. We first need to decompose this generator using linear combination of semigroups. Since $A \geq 0$ we can use the spectral decomposition to write\\
\begin{equation}
\label{eq14}
    A=\sum_{k=1}^{3}\lambda_{k}A_{k}.
\end{equation}
By substituting equation (\ref{eq14}) into equation (\ref{eq13}) and by letting $\mathcal{L}_{0}(\rho)=-i[H,\rho]$ we have,\\
\begin{align}
\label{eq15}
    \mathcal{L}(\rho)&= \mathcal{L}_{0}(\rho) +\frac{1}{2}\sum_{i,j=1}^{3} \bigg( \sum_{k=1}^{3}\lambda_{k}A_{k} \bigg)_{ij} \big(  [\sigma_{i},\rho \sigma_{j}] +[\sigma_{i}\rho,\sigma_{j}] \big),\nonumber \\
    \nonumber\\
    &= \mathcal{L}_{0}(\rho) +\sum_{k=1}^{3}\lambda_{k} \frac{1}{2}\bigg(\sum_{i,j=1}^{3} (A_{k})_{ij}\big(  [\sigma_{i},\rho \sigma_{j}] +[\sigma_{i}\rho,\sigma_{j}] \big) \bigg).
\end{align}
Defining,
\begin{equation}
\label{eq16}
    \mathcal{L}_{k}(\rho)=\frac{1}{2}\sum_{i,j=1}^{3}(A_{k})_{ij}\big(  [\sigma_{i},\rho \sigma_{j}] +[\sigma_{i}\rho,\sigma_{j}] \big),
\end{equation}
we get,\\
\begin{equation}
\label{eq17}
    \mathcal{L}(\rho)=\mathcal{L}_{0}(\rho)+\sum_{k=1}^{3}\lambda_{k}\mathcal{L}_{k}(\rho).
\end{equation}
Letting $\lambda_{0}=1$ we can then write equation (\ref{eq17}) more compactly as\\
\begin{equation}
\label{eq18}
    \mathcal{L}(\rho)=\sum_{k=0}^{3}\lambda_{k}\mathcal{L}_{k}(\rho).
\end{equation}
This allows us to write $T_{t}$ as,\\
\begin{equation}
\label{eq19}
    T_{t}=e^{t\mathcal{L}}=\mathrm{exp}\bigg( t\sum_{k=0}^{3}\lambda_{k}\mathcal{L}_{k} \bigg).
\end{equation}
At this point it is useful for one to split the exponential in equation (\ref{eq19}) into simpler parts which we use in the recombination section to approximate the total channel. This can be achieved via the Lie-Trotter product formula which states that, for any sum of operators $\sum_{k}V_{k}$ acting on some vector space,
\begin{equation}
	\exp(\sum_{k}V_{k})=\lim_{n\rightarrow \infty}\bigg(\prod_{k}\exp(V_{k}/n)\bigg)^{n}.
\end{equation}
If we define $T_{t}^{(k)}:=\mathrm{exp}(t\mathcal{L}_{k})$ and via a straight forward application of the Lie-Trotter product formula \cite{suzuki1990fractal} we have that,\\
\begin{align}
\label{eq20}
    T_{t}&=\lim_{n\rightarrow \infty}\Bigg[ \prod_{k=0}^{3}e^{\big(\frac{t}{n}\lambda_{k}\mathcal{L}_{k} \big)}  \Bigg]^{n} =\lim_{n\rightarrow \infty}\Bigg[ \prod_{k=0}^{3} T_{\big( \frac{t\lambda_{k}}{n}\big)}^{(k)} \Bigg]^{n}.
\end{align}
In the language of \cite{bacon2001universal} we say that $\{T_{t}\}$ can be constructed via a linear combination of semigroups $\{T_{t}^{(k)}\}$. The phrase linear combination of semigroups comes from the fact that we write $\mathcal{L}$ as a linear combination of $\mathcal{L}_{k}$ which generate the semigroups $\{T^{(k)}_{t}\}$. Equation (\ref{eq20}), tells us that one can simulate $T_{t}$ if one can efficiently simulate $T_{t}^{(k)}$ (i.e. the constituent channels) and use $T_{t}^{(k)}$ in some recombination strategy to approximate $T_{t}$.\\
\\
Armed with this realisation that one has to be able to efficiently simulate $T^{(k)}_{t}$ to simulate the channel $T_{t}$. We ask ourselves, can we decompose the channel $T^{(k)}_{t}$ into simpler operations that are easier to efficiently implement on a quantum computer? The answer is yes, Bacon et al. proposed the use of an operation called unitary conjugation to decompose the channel $T^{(k)}_{t}$ into unitary transformations and a channel from some universal semigroup \cite{bacon2001universal}. We now define the operation of unitary conjugation and then prove some results that describe the action of unitary conjugation on a quantum channel. \\
\\
We define unitary conjugation (UC) of a channel $T_{t}$ as the procedure of transforming $T_{t}$ according to $\mathcal{U}^{\dagger}T_{t}\mathcal{U}$ i.e.\\
\begin{equation}
 \label{eq21}
    \mathrm{UC}: T_{t} \mapsto \mathcal{U}^{\dagger}T_{t}\mathcal{U}
\end{equation}
where $\mathcal{U}(X)=UX U^{\dagger}$ for some unitary operator $U$ and any operator $X$. UC preserves all Markovian semigroup properties and it is can be shown that UC effectively applies $T_{t}$ in an alternative basis, however this is not necessary for this tutorial. We recall that $\mathcal{L}_{0}$ ($k=0$), generates Hamiltonian evolution which can be simulated using a single qubit unitary operation. We are interested in the generators of dissipative evolution $\mathcal{L}_{k}$ for $k \in \{1,2,3\}$ and want to use unitary conjugation to further decompose the channels $T^{(k)}_{t}$ for $k \in \{1,2,3\}$. To use UC to decompose $T_{t}^{(k)}$, we need to understand how UC of $T_{t}$ affects the corresponding GKS matrix defining the generator of $T_{t}$. This is made clear by the following theorem from \cite{bacon2001universal}.\\

\begin{theorem}
For a single qubit channel, unitary conjugation of $T_{t}$ by $U \in \mathrm{SU}(2)$ results in conjugation of the GKS matrix $A$ by a corresponding element in $\mathrm{SO}(3)$ which is the adjoint representation of $\mathrm{SU}(2)$.
\end{theorem}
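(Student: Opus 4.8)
The plan is to exploit the semigroup structure $T_{t}=e^{t\mathcal{L}}$ so that the statement about channels reduces to a statement about generators, and then to carry the single-qubit identity $U^{\dagger}\sigma_{i}U=\sum_{k}R_{ki}\sigma_{k}$ through the dissipative part of the GKSL form. First I would observe that the superoperator $\mathcal{U}$ with $\mathcal{U}(X)=UXU^{\dagger}$ is invertible, with $\mathcal{U}^{-1}(X)=U^{\dagger}XU=\mathcal{U}^{\dagger}(X)$. Since conjugation by an invertible map commutes with forming the exponential series \eqref{eq8}, this gives
\[
\mathcal{U}^{\dagger}T_{t}\mathcal{U}=\mathcal{U}^{-1}e^{t\mathcal{L}}\mathcal{U}=e^{t(\mathcal{U}^{-1}\mathcal{L}\mathcal{U})},
\]
so unitary conjugation of the channel corresponds to the similarity transformation $\mathcal{L}\mapsto\mathcal{L}':=\mathcal{U}^{\dagger}\mathcal{L}\mathcal{U}$ of the generator. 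It therefore suffices to compute $\mathcal{L}'(\rho)=U^{\dagger}\,\mathcal{L}(U\rho U^{\dagger})\,U$ and to read off the resulting GKS matrix.

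Next I would substitute the GKSL form \eqref{eq13} into $\mathcal{L}'(\rho)=U^{\dagger}\mathcal{L}(U\rho U^{\dagger})U$ and push the conjugating unitaries inside each term. The Hamiltonian part yields $-i[U^{\dagger}HU,\rho]$, i.e.\ a new Hermitian Hamiltonian $H'=U^{\dagger}HU$, confirming that the generator stays in GKSL form while leaving $A$ untouched. For the dissipative part it is convenient to first rewrite a single term as $\tfrac12\big([\sigma_{i},\rho\sigma_{j}]+[\sigma_{i}\rho,\sigma_{j}]\big)=\sigma_{i}\rho\sigma_{j}-\tfrac12\{\sigma_{j}\sigma_{i},\rho\}$, after which conjugation replaces every $\sigma_{i}$ by $U^{\dagger}\sigma_{i}U$ and sends $\{\sigma_{j}\sigma_{i},\rho\}$ to $\{(U^{\dagger}\sigma_{j}U)(U^{\dagger}\sigma_{i}U),\rho\}$.

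The key input is the $2$-to-$1$ homomorphism $\mathrm{SU}(2)\to\mathrm{SO}(3)$: because conjugation by $U$ preserves tracelessness, Hermiticity and the Hilbert--Schmidt inner product $\mathrm{tr}(\sigma_{i}\sigma_{j})=2\delta_{ij}$, there is a real orthogonal matrix $R$ with $U^{\dagger}\sigma_{i}U=\sum_{k}R_{ki}\sigma_{k}$, and connectedness of $\mathrm{SU}(2)$ forces $\det R=+1$, i.e.\ $R\in\mathrm{SO}(3)$ (this $R$ is precisely the adjoint representation of $U$). Substituting $U^{\dagger}\sigma_{i}U=\sum_{k}R_{ki}\sigma_{k}$ and $U^{\dagger}\sigma_{j}U=\sum_{l}R_{lj}\sigma_{l}$ and relabelling indices, both the sandwich term and the anticommutator term acquire the same coefficient $\sum_{i,j}A_{ij}R_{ki}R_{lj}=(RAR^{T})_{kl}$, so the whole dissipative part regroups as
\[
\sum_{k,l}(RAR^{T})_{kl}\Big(\sigma_{k}\rho\sigma_{l}-\tfrac12\{\sigma_{l}\sigma_{k},\rho\}\Big).
\]
Hence the transformed generator has the same GKSL form with GKS matrix $A'=RAR^{T}=RAR^{-1}$, which is exactly conjugation of $A$ by the corresponding $\mathrm{SO}(3)$ element.

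Finally I would note that $A'$ inherits positive semidefiniteness from $A$, being an orthogonal congruence, so $\mathcal{L}'$ is again a legitimate GKSL generator. The main difficulty here is bookkeeping rather than conceptual: one must fix the index convention for $R$ consistently across the two substitutions of $U^{\dagger}\sigma_{i}U=\sum_{k}R_{ki}\sigma_{k}$ so that the double sum collapses cleanly to $RAR^{T}$ without a stray transpose. The only genuinely nontrivial ingredient is the existence of the covering map $\mathrm{SU}(2)\to\mathrm{SO}(3)$, which may be quoted as a standard fact.
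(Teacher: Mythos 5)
Your proposal is correct, and its overall skeleton matches the paper's: reduce the statement about channels to one about generators via the exponential series (the paper does this with its Lemmas 1 and 2, you do it by noting that similarity transformations commute with power series), push the conjugation through the dissipative part of the GKSL form, expand $U^{\dagger}\sigma_{i}U$ in the Pauli basis, and collapse the double sum to read off $A'=RAR^{T}$. Where you genuinely diverge is in the key sub-step of identifying the basis-change matrix as an element of $\mathrm{SO}(3)$. The paper first proves only that the matrix $c$ is \emph{unitary} (via $\mathrm{tr}(\sigma_{\alpha}\sigma_{\beta})=2\delta_{\alpha\beta}$), and then runs an explicitly infinitesimal Lie-algebra computation: it writes $U=\exp(ir_{\gamma}\sigma_{\gamma}/2)$, expands to first order in $r_{\gamma}$, extracts $(C^{T})_{\alpha\beta}=\delta_{\alpha\beta}+ir_{\gamma}(i\epsilon_{\gamma\alpha\beta})$, and recognizes the structure constants $i\epsilon_{\gamma\alpha\beta}$ as the generators of $\mathrm{SO}(3)$. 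You instead invoke the global covering-map argument: Hermiticity and tracelessness of $U^{\dagger}\sigma_{i}U$ force the expansion coefficients to be real, preservation of the Hilbert--Schmidt inner product forces $R^{T}R=\mathbb{1}$, and connectedness of $\mathrm{SU}(2)$ pins $\det R=+1$. Your route is cleaner and actually more rigorous for finite (non-infinitesimal) $U$, since the paper's first-order expansion strictly establishes the claim only near the identity and then appeals to the group being generated by such elements; what the paper's computation buys in exchange is the explicit correspondence $U=\exp(ir_{\gamma}\sigma_{\gamma}/2)\mapsto C^{T}=\exp(ir_{\gamma}G_{\gamma})$, which the rest of the paper uses when it needs to construct the concrete $U_{k}\in\mathrm{SU}(2)$ realizing a given $C_{k}\in\mathrm{SO}(3)$ in equations (\ref{eq68})--(\ref{eq70}). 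Your extra observations (the Hamiltonian part transforms as $H\mapsto U^{\dagger}HU$, and $A'$ inherits positive semidefiniteness) are correct and harmless additions not emphasized in the paper.
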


To prove Theorem 1 we need the following two lemmas below, which outline some properties of the map $\mathcal{U}(\rho)$.
\begin{lemma}
Given $\mathcal{U}(\rho)=U\rho U^{\dagger}$ (with $\mathcal{U}^{\dagger}(\rho)=U^{\dagger}\rho U$)  for some unitary operator $U$. $\mathcal{U}(\rho)$ satisfies,\\  
\begin{align}
\mathcal{U}(\mathcal{U}^{\dagger}(\rho))=\mathbb{1}(\rho)=\mathcal{U}^{\dagger}(\mathcal{U}(\rho)).
\end{align}

\end{lemma}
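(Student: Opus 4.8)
The plan is to prove both equalities by direct substitution of the definitions $\mathcal{U}(\rho)=U\rho U^{\dagger}$ and $\mathcal{U}^{\dagger}(\rho)=U^{\dagger}\rho U$, and then invoke the unitarity of $U$, namely $UU^{\dagger}=U^{\dagger}U=\mathbb{1}_{s}$. Since both maps are defined by conjugation, composing them reduces to grouping the operators on either side of $\rho$ and collapsing the resulting $UU^{\dagger}$ and $U^{\dagger}U$ factors to the identity.

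First I would establish the left-hand equality. Starting from $\mathcal{U}(\mathcal{U}^{\dagger}(\rho))$, I substitute the inner map to get $\mathcal{U}(U^{\dagger}\rho U)$, then apply the outer map to obtain
\begin{align}
\mathcal{U}(\mathcal{U}^{\dagger}(\rho))=U\left(U^{\dagger}\rho U\right)U^{\dagger}=\left(UU^{\dagger}\right)\rho\left(UU^{\dagger}\right).
\end{align}
Using $UU^{\dagger}=\mathbb{1}_{s}$ on both sides collapses this to $\rho$, which is exactly $\mathbb{1}(\rho)$.

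Next I would treat the right-hand equality in the same manner. Expanding $\mathcal{U}^{\dagger}(\mathcal{U}(\rho))$ by substituting the inner map gives $\mathcal{U}^{\dagger}(U\rho U^{\dagger})$, and applying the outer map yields
\begin{align}
\mathcal{U}^{\dagger}(\mathcal{U}(\rho))=U^{\dagger}\left(U\rho U^{\dagger}\right)U=\left(U^{\dagger}U\right)\rho\left(U^{\dagger}U\right),
\end{align}
which again reduces to $\rho=\mathbb{1}(\rho)$ by $U^{\dagger}U=\mathbb{1}_{s}$. This completes both directions.

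There is essentially no hard step here: the lemma is a routine consequence of the definition of conjugation together with unitarity, and the only thing to be careful about is tracking the order of the operators so that the $U$ and $U^{\dagger}$ factors pair up correctly into identities on each side of $\rho$. The content of the lemma is simply that $\mathcal{U}$ and $\mathcal{U}^{\dagger}$ are mutually inverse superoperators, which is precisely what is needed later to justify that unitary conjugation $\mathcal{U}^{\dagger}T_{t}\mathcal{U}$ is a well-defined, invertible change of basis preserving the semigroup structure.
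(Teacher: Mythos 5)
Your proof is correct and follows essentially the same route as the paper's: direct substitution of the definitions of $\mathcal{U}$ and $\mathcal{U}^{\dagger}$, followed by collapsing the $UU^{\dagger}$ and $U^{\dagger}U$ factors to the identity via unitarity. Nothing is missing; both arguments are the same elementary computation.
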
 
\begin{proof}
We can prove this directly from the definition of $\mathcal{U}(\rho)$. Recalling that $UU^{\dagger}=U^{\dagger}U=\mathbb{1}$ we get,
\begin{align}
\label{eq22}
    \mathcal{U}(\mathcal{U}^{\dagger}(\rho))&=\mathcal{U}(U^{\dagger}\rho U),\nonumber\\
    &=UU^{\dagger}\rho UU^{\dagger},\nonumber\\\
    &=\rho=\mathbb{1}(\rho).
\end{align}
This implies that, $\mathcal{U}\mathcal{U}^{\dagger}=\mathbb{1}$. Similarly for $\mathcal{U}^{\dagger}\mathcal{U}$ we can show that,
\begin{align}
\label{eq23}
    \mathcal{U}^{\dagger}(\mathcal{U}(\rho))&=\mathcal{U}^{\dagger}(U\rho U^{\dagger}),\nonumber\\
    &=U^{\dagger}U\rho U^{\dagger}U,\nonumber\\
    &=\rho=\mathbb{1}(\rho),
\end{align}
which implies that $\mathcal{U}^{\dagger}\mathcal{U}=\mathbb{1}$. 
\end{proof}
\begin{lemma}
Given $\mathcal{U}(\rho)=U\rho U^{\dagger}$ for some unitary operator $U$, and $p \geq 0$ for $k \in \mathbb{Z}$. We have that,
\begin{equation}
\label{eq24}
    (\mathcal{U}^{\dagger}\mathcal{L}\mathcal{U})^{p}=\mathcal{U}^{\dagger}\mathcal{L}^{p}\mathcal{U}.
\end{equation}
\end{lemma}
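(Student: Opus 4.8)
The plan is to prove the identity by induction on the non-negative integer $p$, with the cancellation relations from Lemma 1 doing all the real work. The only facts I need are that composition of superoperators is associative and that $\mathcal{U}\mathcal{U}^{\dagger}=\mathcal{U}^{\dagger}\mathcal{U}=\mathbb{1}$, both of which are already available. For the base case I would take $p=0$: here $(\mathcal{U}^{\dagger}\mathcal{L}\mathcal{U})^{0}=\mathbb{1}$ by convention, while the right-hand side is $\mathcal{U}^{\dagger}\mathcal{L}^{0}\mathcal{U}=\mathcal{U}^{\dagger}\mathbb{1}\,\mathcal{U}=\mathcal{U}^{\dagger}\mathcal{U}=\mathbb{1}$ directly from Lemma 1, so the two agree. (The case $p=1$ is the trivial identity $\mathcal{U}^{\dagger}\mathcal{L}\mathcal{U}=\mathcal{U}^{\dagger}\mathcal{L}^{1}\mathcal{U}$ and can serve as an alternative anchor.)

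For the inductive step I would assume $(\mathcal{U}^{\dagger}\mathcal{L}\mathcal{U})^{p}=\mathcal{U}^{\dagger}\mathcal{L}^{p}\mathcal{U}$ and compute
\begin{equation}
(\mathcal{U}^{\dagger}\mathcal{L}\mathcal{U})^{p+1}=(\mathcal{U}^{\dagger}\mathcal{L}\mathcal{U})^{p}\,(\mathcal{U}^{\dagger}\mathcal{L}\mathcal{U})=\mathcal{U}^{\dagger}\mathcal{L}^{p}\,(\mathcal{U}\mathcal{U}^{\dagger})\,\mathcal{L}\mathcal{U}.
\end{equation}
Applying $\mathcal{U}\mathcal{U}^{\dagger}=\mathbb{1}$ from Lemma 1 collapses the middle factor, leaving $\mathcal{U}^{\dagger}\mathcal{L}^{p}\mathcal{L}\mathcal{U}=\mathcal{U}^{\dagger}\mathcal{L}^{p+1}\mathcal{U}$, which completes the induction. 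Equivalently, one can give a direct (non-inductive) argument by writing out all $p$ copies of $\mathcal{U}^{\dagger}\mathcal{L}\mathcal{U}$ in a row and observing that every adjacent pair $\mathcal{U}\mathcal{U}^{\dagger}$ telescopes to the identity, so that only the outermost $\mathcal{U}^{\dagger}$ and $\mathcal{U}$ survive around the $p$-fold product $\mathcal{L}^{p}$.

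There is no genuine obstacle here: the statement is essentially a bookkeeping fact about how a conjugation distributes over powers. The one point worth stating carefully is that Lemma 1 is precisely what licenses the interior cancellation, and that one should treat $p=0$ explicitly (rather than starting at $p=1$) since the claim is asserted for all $p\geq 0$; otherwise the $\mathcal{U}^{\dagger}\mathcal{U}=\mathbb{1}$ step for the empty product might be overlooked.
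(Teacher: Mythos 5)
Your proposal is correct and follows essentially the same route as the paper: induction on $p$ with base case $p=0$ handled via $\mathcal{U}^{\dagger}\mathcal{U}=\mathbb{1}$ from Lemma 1, and the inductive step collapsing the interior factor $\mathcal{U}\mathcal{U}^{\dagger}$ to the identity. The telescoping remark you add is a nice observation but the core argument matches the paper's proof exactly.
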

\begin{proof}
We prove the formula above using induction. For the base case i.e. $p=0$ we have,\\
\begin{equation}
\label{eq25}
    (\mathcal{U}^{\dagger}\mathcal{L}\mathcal{U})^{0}=\mathbb{1},
\end{equation}
and 
\begin{equation}
\label{eq26}
    \mathcal{U}^{\dagger}\mathcal{L}^{0}\mathcal{U}=\mathcal{U}^{\dagger}\mathbb{1}\mathcal{U}=\mathcal{U}^{\dagger}\mathcal{U}=\mathbb{1},
\end{equation}
where we make use of lemma 1 in the last equality in equation (\ref{eq26}). From equations (\ref{eq25}) and (\ref{eq26}) we see that the formula holds for the base case. Now we assume that the formula is true for $p=m$, that is,
\begin{equation}
\label{eq27}
    (\mathcal{U}^{\dagger}\mathcal{L}\mathcal{U})^{m}=\mathcal{U}^{\dagger}\mathcal{L}^{m}\mathcal{U},
\end{equation}
and we proceed to show that the formula holds for $p=m+1$. We start by writing,
\begin{align}
\label{eq28}
   (\mathcal{U}^{\dagger}\mathcal{L}\mathcal{U})^{m+1}&=(\mathcal{U}^{\dagger}\mathcal{L}\mathcal{U})^{m}(\mathcal{U}^{\dagger}\mathcal{L}\mathcal{U}),\nonumber\\
   &=\mathcal{U}^{\dagger}\mathcal{L}^{m}\mathcal{U}(\mathcal{U}^{\dagger}\mathcal{L}\mathcal{U}), \hspace{5mm} \mathrm{(By\hspace{1mm}eq. \hspace{1mm}(\ref{eq27}))}\nonumber\\
   &=\mathcal{U}^{\dagger}\mathcal{L}^{m}\mathcal{U}\mathcal{U}^{\dagger}\mathcal{L}\mathcal{U},\nonumber\\
   &=\mathcal{U}^{\dagger}\mathcal{L}^{m}\mathcal{L}\mathcal{U}, \hspace{5mm} \mathrm{(By\hspace{1mm}Lemma \hspace{1mm}1)}\nonumber\\
   &=\mathcal{U}^{\dagger}\mathcal{L}^{m+1}\mathcal{U}.
\end{align}
Hence, by induction the formula holds.
\end{proof}
We now have the necessary results to prove Theorem 1.
\begin{proof}[proof of Theorem 1.]
Suppose $\{T_{t}\}$ has a generator $\mathcal{L}$ and a GKS matrix $A$ then $T_{t}=\mathrm{exp}(t\mathcal{L})$. Now for some $U \in \mathrm{SU}(2)$ we want to see what unitary conjugation does to the channel $T_{t}$,
\begin{align}
\label{eq29}
    \mathcal{U}^{\dagger}T_{t}\mathcal{U}&=\mathcal{U}^{\dagger}\mathrm{exp}(t\mathcal{L})\mathcal{U},\nonumber\\
    &=\mathcal{U}^{\dagger}\sum_{p=0}^{\infty}\frac{t^{p}}{p!}\mathcal{L}^{p}\mathcal{U},\nonumber\\
    &=\sum_{k=0}^{\infty}\frac{t^{p}}{p!}\mathcal{U}^{\dagger}\mathcal{L}^{p}\mathcal{U},\nonumber\\
    &=\sum_{k=0}^{\infty}\frac{t^{p}}{p!}(\mathcal{U}^{\dagger}\mathcal{L}\mathcal{U})^{p} ,\hspace{5mm} \mathrm{(By\hspace{1mm}Lemma\hspace{1mm}2)}\nonumber\\
    &=\mathrm{exp}(t\mathcal{U}^{\dagger}\mathcal{L}\mathcal{U}).
\end{align}
So unitary conjugation of $T_{t}$ effectively applies unitary conjugation to the generator $\mathcal{L}$ i.e. $\mathcal{L} \mapsto \mathcal{U}^{\dagger}\mathcal{L}\mathcal{U}$. For the remainder of this proof we make use of the index summation convention where repeated indicies are summed over and the summation range shall always be from 1 to 3 unless otherwise specified. Given the generator in the GKSL form we can always write $\mathcal{L}(\rho)=\mathcal{L}_{H}(\rho)+\mathcal{L}_{D}(\rho)$, where $\mathcal{L}_{H}(\rho)=-i[H,\rho]$ is the Hamiltonian part and $\mathcal{L}_{D}(\rho)=\frac{1}{2}A_{ij}([\sigma_{i},\rho\sigma_{j}]+[\sigma_{i}\rho,\sigma_{j}])$ the dissipative part. We are only interested in how unitary conjugation effects the dissipative part $\mathcal{L}_{D}$ since it contains the GKS matrix $A$. Applying the unitary conjugation map to $\mathcal{L}_{D}$ yields,
\begin{align}
\label{eq31}
    (\mathcal{U}^{\dagger}\mathcal{L}_{D}\mathcal{U})(\rho)&=U^{\dagger}\bigg( \frac{1}{2}A_{ij}\big( [\sigma_{i},U\rho U^{\dagger} \sigma_{j}]+[\sigma_{i}U\rho U^{\dagger},\sigma_{j}] \big) \bigg)U, \nonumber\\
&=\frac{1}{2}A_{ij}\big( U^{\dagger}[\sigma_{i},U\rho U^{\dagger} \sigma_{j}]U+U^{\dagger}[\sigma_{i}U\rho U^{\dagger},\sigma_{j}]U \bigg).
\end{align}

We can simplify the first term in equation (\ref{eq31}) as follows,
\begin{align}
\label{eq32}
    U^{\dagger}[\sigma_{i},U\rho U^{\dagger} \sigma_{j}]U&=U^{\dagger}(\sigma_{i}U\rho U^{\dagger} \sigma_{j}-U\rho U^{\dagger} \sigma_{j}\sigma_{i})U,\nonumber\\
    &=U^{\dagger}\sigma_{i}U\rho U^{\dagger} \sigma_{j}U-\rho U^{\dagger} \sigma_{j}\sigma_{i}U,\nonumber\\
    &=U^{\dagger}\sigma_{i}U\rho U^{\dagger} \sigma_{j}U-\rho U^{\dagger} \sigma_{j}UU^{\dagger}\sigma_{i}U,\nonumber\\
    &=[U^{\dagger}\sigma_{i}U,\rho U^{\dagger}\sigma_{j}U].
\end{align}
In a similar method to the one used in (\ref{eq32}) we show that,
\begin{equation}
\label{eq33}
    U^{\dagger}[\sigma_{i}U\rho U^{\dagger},\sigma_{j}]U=[U^{\dagger}\sigma_{i}U \rho,U^{\dagger}\sigma_{j}U].
\end{equation}
By substituting (\ref{eq32}) and (\ref{eq33}) into (\ref{eq31}) we get,
\begin{align}
\label{eq34}
     (\mathcal{U}^{\dagger}\mathcal{L}_{D}\mathcal{U})(\rho)&=\frac{1}{2}A_{ij}\big([U^{\dagger}\sigma_{i}U,\rho U^{\dagger}\sigma_{j}U]+ [U^{\dagger}\sigma_{i}U \rho,U^{\dagger}\sigma_{j}U]\big).
\end{align}
So unitary conjugation induces a change of basis $\frac{1}{\sqrt{2}}\sigma_{i} \mapsto U^{\dagger}\frac{1}{\sqrt{2}}\sigma_{i}U$, this basis is still Hermitian, orthonormal and traceless. We can expand this basis in terms of the old one as follows,
\begin{equation}
\label{eq35}
    U^{\dagger}\frac{1}{\sqrt{2}}\sigma_{\alpha}U=c_{\alpha \gamma}\frac{1}{\sqrt{2}}\sigma_{\gamma}.
\end{equation}
 By multiply two operators from this basis we can gain more insight into the nature of the matrix $c_{\alpha\gamma}$. First taking the product of two elements in the new basis yields,
\begin{align}
\label{eq36}
    \frac{1}{2}U^{\dagger}\sigma_{\alpha}UU^{\dagger}\sigma_{\beta}U
    &=\frac{1}{2}c_{\alpha \gamma}c_{\beta \nu}^{*}\sigma_{\gamma}\sigma_{\nu}.
\end{align}

Taking the trace of equation (\ref{eq36}) and using the fact that $\mathrm{tr}(\sigma_{\alpha}\sigma_{\beta})=2\delta_{\alpha \beta}$ we have,
\begin{align}
\label{eq37}
    \mathrm{tr}(\frac{1}{2}U^{\dagger}\sigma_{\alpha}UU^{\dagger}\sigma_{\beta}U)&=\frac{1}{2} c_{\alpha \gamma}c_{\beta \nu}^{*}\mathrm{tr}(\sigma_{\gamma}\sigma_{\nu}),\nonumber\\
    &=\frac{1}{2}c_{\alpha \gamma}c_{\beta \nu}^{*}2\delta_{\gamma \nu},\nonumber\\
    &=c_{\alpha \gamma}c_{\beta \gamma}^{*}.
\end{align}
But we also know that the transformed basis $U^{\dagger}\frac{1}{\sqrt{2}}\sigma_{\alpha}U$ is still orthonormal i.e.\\ $\mathrm{tr}(\frac{1}{2}U^{\dagger}\sigma_{\alpha}UU^{\dagger}\sigma_{\beta}U)=\delta_{\alpha\beta}$, leading to the observation that,
\begin{align}
	c_{\alpha \gamma}c_{\beta \gamma}^{*}=\delta_{\alpha\beta}.
\end{align}
In other words $c_{\alpha \gamma}$ is a unitary matrix. Equipped with this information about the matrix $c_{\alpha\beta}$ we can substitute equation (\ref{eq36}) into (\ref{eq35}),
\begin{align}
\label{eq38}
     (\mathcal{U}^{\dagger}\mathcal{L}_{D}\mathcal{U})(\rho)&= \frac{1}{2}A_{\alpha\beta}\big([c_{\alpha \gamma}\sigma_{\gamma},\rho c_{\beta \nu}^{*}\sigma_{\nu}]+ [c_{\alpha \gamma}\sigma_{\gamma}\rho,c_{\beta \nu}^{*}\sigma_{\nu}]\big), \nonumber\\
     &=\frac{1}{2}c_{\alpha \gamma}A_{\alpha\beta}c_{\beta \nu}^{*}\big([\sigma_{\gamma},\rho \sigma_{\nu}]+[\sigma_{\gamma}\rho,\sigma_{\nu}]\big),\nonumber\\
     &=A'_{\gamma\nu}\big([\sigma_{\gamma},\rho \sigma_{\nu}]+[\sigma_{\gamma}\rho,\sigma_{\nu}]\big).
\end{align}
Here $A'$ is the transformed GKS matrix and $A'_{\gamma,\nu}=c_{\alpha \gamma}A_{\alpha \beta}c_{\beta \nu}^{*}$. Defining the matrix $C=c_{\gamma \alpha}$ then we have,
\begin{equation}
\label{eq39}
    A'=C^{T}AC^{*}.
\end{equation}
So the effect of unitary conjugation is to conjugate the GKS matrix by a matrix $C^{T}$. We note that $C^{T}$ is arbitrary but determined by $U \in \mathrm{SU}(2)$ in the following way. Suppose we choose $\{\frac{1}{2}\sigma_{\alpha}\}_{\alpha=1}^{3}$ to be the generators of $\mathrm{SU}(2)$ then the set $\{\frac{1}{2}\sigma_{\alpha}\}_{\alpha=1}^{3}$ forms a basis for the Lie algebra $\mathfrak{su}(2)$ of the Lie group $\mathrm{SU}(2)$. The elements from the basis of $\mathfrak{su}(2)$ also satisfy,
\begin{equation}
\label{eq40}
    \frac{1}{4}[\sigma_{\alpha},\sigma_{\beta}]=i\epsilon_{\alpha \beta \gamma}\frac{1}{2}\sigma_{\gamma},
\end{equation}
where $\epsilon_{\alpha \beta \gamma}$ is the Levi-Cevita symbol and are the structure constants for $\mathfrak{su}(2)$. By setting $U=\mathrm{exp}(ir_{\gamma}\sigma_{\gamma}/2)$ we can expand $U$ in a Taylor series, about zero, to $1^{\mathrm{st}}$ order in an infinitesimal $r_{\gamma}$ and substitute this into the new basis $U^{\dagger}\frac{1}{\sqrt{2}}\sigma_{\alpha}U$,
\begin{align}
\label{eq41}
    U^{\dagger}\frac{1}{\sqrt{2}}\sigma_{\alpha}U&=(\mathbb{1}-i\frac{r_{\gamma}}{2}\sigma_{\gamma})\frac{1}{\sqrt{2}}\sigma_{\alpha}(\mathbb{1}+i\frac{r_{\gamma}}{2}\sigma_{\gamma}),\nonumber\\
    &=\frac{1}{\sqrt{2}}\sigma_{\alpha} +\frac{1}{2\sqrt{2}}ir_{\gamma}\sigma_{\alpha}\sigma_{\gamma}-\frac{1}{2\sqrt{2}}ir_{\gamma}\sigma_{\gamma}\sigma_{\alpha} + \mathcal{O}(r_{\gamma}^{2}),\nonumber\\
    &=\frac{1}{\sqrt{2}}\sigma_{\alpha}-\frac{1}{2\sqrt{2}}ir_{\gamma}[\sigma_{\gamma},\sigma_{\alpha}],\nonumber\\
    &=\frac{1}{\sqrt{2}}\sigma_{\alpha}-ir_{\gamma}(i\epsilon_{\gamma \alpha \beta}\frac{1}{\sqrt{2}}\sigma_{\beta}).
\end{align}
But from equation (\ref{eq35}) $U^{\dagger}\frac{1}{\sqrt{2}}\sigma_{\alpha}U=c_{\alpha \beta}\frac{1}{\sqrt{2}}\sigma_{\beta}$ allowing us to write,
\begin{equation}
\label{eq42}
   c_{\alpha \beta}\frac{1}{\sqrt{2}}\sigma_{\beta}=\frac{1}{\sqrt{2}}\sigma_{\alpha}-ir_{\gamma}(i\epsilon_{\gamma \alpha \beta}\frac{1}{\sqrt{2}}\sigma_{\beta}).
\end{equation}
Multiplying both sides of equation (\ref{eq42}) from the left by $\frac{1}{\sqrt{2}}\sigma_{\beta '}$ we have,
\begin{equation}
\label{eq43}
     \frac{1}{2}c_{\alpha \beta}\sigma_{\beta '}\sigma_{\beta}=\frac{1}{2}\sigma_{\beta '}\sigma_{\alpha}-\frac{1}{2}ir_{\gamma}(i\epsilon_{\gamma \alpha \beta}\sigma_{\beta '}\sigma_{\beta}).
\end{equation}
Finally taking the trace of equation (\ref{eq43}) and using the fact that $\mathrm{tr}(\sigma_{\beta}\sigma_{\alpha})=2\delta_{\alpha \beta}$ we obtain,
\begin{align}
\label{eq44}
 c_{\alpha \beta}\delta_{\beta ' \beta}&=\delta_{\beta ' \alpha}-ir_{\gamma}(i\epsilon_{\gamma \alpha \beta}\delta_{\beta ' \beta})
    \implies c_{\alpha \beta '}=\delta_{\alpha \beta ' }-ir_{\gamma}(i\epsilon_{\gamma \alpha \beta '})=(C)_{\alpha \beta '}.
\end{align}
It is clear from (\ref{eq44}) that $C^{T}$ is,
\begin{equation}
\label{eq45}
    (C^{T})_{\alpha \beta}=(C)_{\beta \alpha}=\delta_{\beta \alpha}-ir_{\gamma}(i\epsilon_{\gamma \beta \alpha}),
\end{equation}
but the Levi-Cevita symbol is totally anti-symmetric allowing one to write,
\begin{align}
\label{eq46}
    (C^{T})_{\alpha \beta}=\delta_{\alpha \beta}+ir_{\gamma}(i\epsilon_{\gamma \alpha \beta}). 
\end{align}
We observe that $C^{T}$ is in a Lie group generated by $(G_{\gamma})_{\alpha \beta}=i\epsilon_{\gamma \alpha \beta}$ which implies that \\$C^{T}=\mathrm{exp}(ir_{\gamma}G_{\gamma} )$. Since we know how each $G_{\gamma}$ is defined by picking $\gamma,\alpha,\beta \in \{1,2,3\}$ we can explicitly calculate the matrix elements for each $G_{\gamma}$ as,
\begin{align}
\label{eq48}
    G_{1}=i\begin{pmatrix}
          0 & 0 & 0\\
          0 & 0 & 1\\
          0 & -1 & 0\\
        \end{pmatrix}, \hspace{5mm} 
        G_{2}=i\begin{pmatrix}
         0 & 0 & -1\\
         0 & 0 & 0\\
         1 & 0 & 0\\
        \end{pmatrix},\hspace{5mm}
         G_{3}=i&\begin{pmatrix}
         0 & 1 & 0\\
         -1 & 0 & 0\\
         0 & 0 & 0\\
        \end{pmatrix}.
\end{align}
We recognize the above matrices as the generators of $\mathrm{SO}(3)$, hence $C^{T}\in \mathrm{SO}(3)$ which is the adjoint representation of $\mathrm{SU}(2)$. As a brief mathematical aside: it is elementary fact of group theory that the adjoint representation of the Lie algebra of $\mathrm{SU}(N)$ is generated by the structure constants of the Lie algebra of $\mathrm{SU}(N)$ (For more information on the adjoint representation and its use in this proof one could refer to Appendix A. as well as the following books \cite{hall2013lie,gilmore2008lie,schuller2015lectures}). Therefore, unitary conjugation of the channel $T_{t}$ by $U \in \mathrm{SU}(2)$ leads to conjugation of the GKS matrix $A$ by $C^{T} \in \mathrm{SO}(3)$ i.e. $\mathcal{U}^{\dagger}T_{t}\mathcal{U}$ leads to a new GKS matrix $A'=C^{T}AC$.
\end{proof}
\noindent So far we have seen how linear combination of semigroups allows us to write channels from the semigroup $\{T_{t}\}$ as a limit of a  product of constituent channels from the semigroup $\{T^{(k)}_{t}\}$. It was also shown how unitary conjugation affects the channel $T_{t}$. Theorem 2 below will show how using linear combination of semigroups and unitary conjugation one can simulate a channel $T_{t}$, it is a modification of theorem 3 in \cite{bacon2001universal}. This modification was made so that we can account for the differing notation in the quantum information community as well as the open quantum systems community. As we know researchers working in the field of open quantum systems define the ground state of a single qubit as the column vector $(1,0)^{T}$, while the quantum information community defines the excited state with this column vector. The range of the parameter $\theta$ in Theorem 2. below accounts for this differing definitions for ground and excited states.
\begin{theorem}
To simulate, using linear combination and unitary conjugation, an arbitrary Markovian semigroup generated by $\mathcal{L} \in \mathcal{B}(\mathcal{B}(\mathcal{H}_{s}))$ with $\mathcal{H}_{s} \cong \mathbb{C}^{2}$, it is necessary and sufficient to be able to simulate all Markovian semigroups whose generators are specified by the GKS matrix $A(\theta)=\vec{a}(\theta)\vec{a}(\theta)^{\dagger}$ where $\vec{a}(\theta)=(\cos(\theta),-i\sin (\theta),0)^{T}$ and $\theta \in [-\frac{\pi}{4},\frac{\pi}{4}]$.
\end{theorem}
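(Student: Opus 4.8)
The plan is to prove the two implications separately, with essentially all of the substance lying in sufficiency. Necessity is immediate: for every $\theta \in [-\pi/4,\pi/4]$ the matrix $A(\theta) = \vec{a}(\theta)\vec{a}(\theta)^{\dagger}$ is a rank-one outer product and hence positive semi-definite, so it is itself a legitimate GKS matrix; the semigroups it generates therefore already lie inside the class of arbitrary Markovian semigroups, and any scheme able to simulate all of them must in particular simulate these. I would thus devote the argument to sufficiency, namely that the ability to simulate every $A(\theta)$-semigroup, combined with linear combination of semigroups and unitary conjugation, suffices to simulate an arbitrary $\mathcal{L}$.

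First I would reduce an arbitrary generator to a positive combination of rank-one pieces. Since $A \geq 0$, the spectral decomposition (\ref{eq14}) gives $A = \sum_{k=1}^{3}\lambda_{k}\ket{v_{k}}\bra{v_{k}}$ with $\lambda_{k}\geq 0$ and unit eigenvectors $v_{k}\in\mathbb{C}^{3}$, and the linear-combination-of-semigroups construction (\ref{eq18})--(\ref{eq20}) reduces the simulation of $T_{t}$ to that of the Hamiltonian part $\mathcal{L}_{0}$ and of each constituent semigroup $T^{(k)}_{t}$ whose GKS matrix is the rank-one projector $\ket{v_{k}}\bra{v_{k}}$. The Hamiltonian part generates a single-qubit unitary and needs no further treatment, while each positive weight $\lambda_{k}$ is absorbed as a rescaling of the evolution time. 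Hence it remains to show that every rank-one GKS matrix $\ket{v}\bra{v}$ can be carried to some $A(\theta)$ by unitary conjugation.

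The core of the proof is therefore a normal-form statement for complex unit vectors under the residual symmetry. By Theorem 1, unitary conjugation by $U\in\mathrm{SU}(2)$ acts on the GKS matrix as $A \mapsto O A O^{T}$ with $O = C^{T}\in\mathrm{SO}(3)$ (real, so the conjugate in (\ref{eq39}) simplifies), and since the adjoint representation $\mathrm{SU}(2)\to\mathrm{SO}(3)$ is surjective this realises every rotation $O$. Because $O$ is real, $\ket{v}\bra{v}\mapsto\ket{Ov}\bra{Ov}$, so I must exhibit, for any unit $v = x + iy$ with $x,y\in\mathbb{R}^{3}$, a rotation $O$ and a global phase $e^{i\phi}$ (which leaves $\ket{v}\bra{v}$ invariant) with $e^{i\phi}Ov = \vec{a}(\theta)$ for some $|\theta|\leq\pi/4$. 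I expect this normalisation to be the main obstacle, and I would carry it out in two moves. First, choose the phase $\phi$ so that the rotated real and imaginary parts become orthogonal: imposing $x'\cdot y' = 0$ yields $\tan(2\phi) = -2\,(x\cdot y)/(|x|^{2}-|y|^{2})$, which is always solvable (taking $2\phi = \pi/2$ in the degenerate case $|x|=|y|$). Second, apply $O\in\mathrm{SO}(3)$ to send the resulting orthogonal pair onto the $e_{1}$ and $e_{2}$ axes, fixing the image of the third axis so that $\det O = +1$. Writing the two orthogonal norms as $\cos\theta$ and $\sin\theta$ via $|x|^{2}+|y|^{2}=1$ and assigning the larger norm to the real $(e_{1})$ direction forces $\theta\in[0,\pi/4]$; the remaining sign freedom in the orientation of $e_{2}$, compatible with $\mathrm{SO}(3)$ through the choice of $e_{3}$, extends the range to $\theta\in[-\pi/4,\pi/4]$ and is precisely what accommodates the two ground/excited-state conventions noted above.

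Finally I would reassemble the pieces. Each $\ket{v_{k}}\bra{v_{k}}$ equals $O_{k}^{T}A(\theta_{k})O_{k}$ for a suitable $O_{k}\in\mathrm{SO}(3)$, so by Theorem 1 the constituent semigroup $T^{(k)}_{t}$ is the unitary conjugate $\mathcal{U}_{k}^{\dagger}\,T^{A(\theta_{k})}_{t}\,\mathcal{U}_{k}$ of an $A(\theta_{k})$-semigroup (with $U_{k}\in\mathrm{SU}(2)$ chosen so that its adjoint image is $O_{k}$). Combining these conjugated $A(\theta_{k})$-semigroups through the Lie-Trotter recombination (\ref{eq20}) together with the freely-simulable Hamiltonian flow $\mathcal{L}_{0}$ reconstructs $T_{t}$, which establishes sufficiency and completes the proof.
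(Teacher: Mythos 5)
Your sufficiency argument is essentially the paper's own proof: spectral decomposition of $A\geq 0$ into rank-one pieces $\vec{a}_{k}\vec{a}_{k}^{\dagger}$, absorption of the weights $\lambda_{k}$ and the Hamiltonian part via linear combination of semigroups, the global-phase adjustment $\tan(2\psi)=-k_{2}/k_{1}$ to make the real and imaginary parts of $\vec{a}$ orthogonal with $|\vec{a}^{R}|\geq|\vec{a}^{I}|$, a rotation in $\mathrm{SO}(3)$ (lifted to $\mathrm{SU}(2)$ conjugation via Theorem 1) to reach the normal form $\vec{a}(\theta)$ with $\theta\in[-\tfrac{\pi}{4},\tfrac{\pi}{4}]$, and Lie--Trotter recombination at the end. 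That part is sound and matches the paper step for step.

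The genuine gap is your treatment of necessity. You read ``necessary'' as: each $A(\theta)$-semigroup is itself Markovian, so any scheme that simulates all Markovian semigroups simulates these in particular. That statement is vacuous --- it holds verbatim for \emph{any} subfamily (for instance the single matrix $A(0)$), so it cannot be the content of the theorem; under your reading the word ``all'' in the statement does no work. What the theorem actually asserts, and what the paper proves, is \emph{minimality} of the family: no $A(\theta)$ can be obtained, within the allowed operations, from the other members $A(\theta')$ with $\theta'\neq\theta$, so none of them may be dropped from the universal set. The paper establishes this in two steps. First, $A(\theta)$ is rank one and hence extreme in the cone of positive semidefinite matrices, so it admits no nontrivial decomposition as a positive combination of other matrices $A(\theta')$; this rules out linear combination of semigroups. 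Second, a rigidity computation shows that if $e^{i\psi}G\vec{a}(\theta)=\vec{a}(\theta')$ with $G\in\mathrm{SO}(3)$ and both vectors already in normal form (i.e.\ $k_{2}=k_{2}'=0$, $k_{1},k_{1}'\geq 0$), then necessarily $\psi=0$, $G\vec{a}(\theta)=\vec{a}(\theta)$, and hence $\theta=\theta'$; this rules out unitary conjugation. Your proposal contains neither step, so the claim that the whole one-parameter family over $\theta\in[-\tfrac{\pi}{4},\tfrac{\pi}{4}]$ is genuinely needed remains unproven in your write-up.
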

\begin{proof}
(Sufficiency) First, without any loss of generality we assume $H = 0$. Let $A \geq 0 \in \mathcal{M}_{3}(\mathbb{C})$, be the GKS matrix specifying the generator of the Markovian semigroup we wish to simulate. We fix the basis $\{F_{i}\}=\frac{1}{\sqrt{2}}\{ \sigma_{i}\}_{i=1}^{3}$ without any loss of generality. Since $A \geq 0$ we use the spectral decomposition to express $A$ as in equation (\ref{eq14}). Since we can always write each projector $A_{k}$ as an outer product i.e. $A_{k}=\vec{a}_{k} \vec{a}_{k}^{\dagger}$, we have,
\begin{equation}
\label{eq49}
    A=\sum_{k=1}^{3}\lambda_{k}\vec{a}_{k} \vec{a}_{k}^{\dagger},
\end{equation}
where $\lambda_{k}\geq 0$ and $\vec{a}_{k}^{\dagger}\cdot \vec{a}_{k}=1$ for $k=1,2,3$. By linear combination of semigroups it is sufficient to simulate all GKS matrices $\vec{a}_{k}\vec{a}_{k}^{\dagger}$ with $|\vec{a}_{k}|=1$, for simplicity from this point onwards we drop the subscript $k$ and just write $\vec{a}\vec{a}^{\dagger}$ and $|\vec{a}|=1$. Any vector $\vec{a}$ can be split into real and imaginary parts, $\vec{a}^{R}$ and $\vec{a}^{I}$ respectively,
\begin{equation}
\label{eq50}
    \vec{a}=\vec{a}^{R}+i\vec{a}^{I}.
\end{equation}
Since $\vec{a}$ only appears in outer products multiplying $\vec{a}$ by an overall phase $e^{i\psi}$ leaves $\vec{a}\vec{a}^{\dagger}$ invariant i.e. if $\vec{a}'=e^{i\psi}\vec{a}$ then,
\begin{equation}
\label{eq51}
    \vec{a}'\vec{a}'^{\dagger}=e^{i\psi}\vec{a}e^{-i\psi}\vec{a}^{\dagger}=\vec{a}\vec{a}^{\dagger}.
\end{equation}
Multiplying by an overall phase changes $\vec{a}$ as follows,
\begin{align}
\label{eq52}
    \vec{a}'&=e^{i\psi}\vec{a},\nonumber\\
    &=e^{i\psi}\vec{a}^{R}+ie^{i\psi}\vec{a}^{I},\nonumber\\
    &=\vec{a}'^{R}+i\vec{a}'^{I},
\end{align}
where $\vec{a}'^{R}=\vec{a}^{R}\cos(\psi)-\vec{a}^{I}\sin(\psi)$ and $\vec{a}'^{I}=\vec{a}^{R}\sin(\psi)+\vec{a}^{I}\cos(\psi)$. So if we wish to simulate $\vec{a}\vec{a}^{\dagger}$ we could simulate $\vec{a}'\vec{a}'^{\dagger}$ for any value of $\psi$. Defining the following two parameters,
\begin{equation}
\label{eq53}
    k_{1}=|\vec{a}^{R}|^{2}-|\vec{a}^{I}|^{2} \hspace{3mm} \mathrm{and} \hspace{3mm} k_{2}=2\vec{a}^{R}\cdot\vec{a}^{I},
\end{equation}
and multiplying $\vec{a}$ by an overall phase changes $k_{1}$ and $k_{2}$ as follows,
\begin{align}
\label{eq54}
    k_{1}'&=|\vec{a}'^{R}|^{2}-|\vec{a}'^{I}|^{2},\nonumber\\
    &=|\vec{a}^{R}\cos(\psi)-\vec{a}^{I}\sin(\psi)|^{2}-|\vec{a}^{R}\sin(\psi)+\vec{a}^{I}\cos(\psi)|^{2},\nonumber\\
    &=|\vec{a}^{R}|^{2}\cos(2\psi) -|\vec{a}^{I}|\cos(2\psi)-\sin(2\psi)(2\vec{a}^{R}\cdot\vec{a}^{I}),\nonumber\\
    &=\cos(2\psi)k_{1}-\sin(2\psi)k_{2},
\end{align}
and
\begin{align}
\label{eq55}
    k_{2}'&=2\vec{a}'^{R}\cdot\vec{a}'^{I},\nonumber\\
    &=2(\vec{a}^{R}\cos(\psi)-\vec{a}^{I}\sin(\psi))\cdot(\vec{a}^{R}\sin(\psi)+\vec{a}^{I}\cos(\psi)),\nonumber\\
    &=\sin(2\psi)(|\vec{a}^{R}|^{2}-|\vec{a}^{I}|^{2}) +\cos(2\psi)(2\vec{a}^{R}\cdot\vec{a}^{I}),\nonumber\\
    &=\sin(2\psi)k_{1}+\cos(2\psi)k_{2}.
\end{align}

We can express the transformation of $k_{1}$ and $k_{2}$ as a matrix equation,
\begin{equation}
\label{eq56}
    \begin{pmatrix}
    k_{1}' \\
    k_{2}'
    \end{pmatrix}=\begin{pmatrix}
    \cos(2\psi) & -\sin(2\psi) \\
    \sin(2\psi) & \cos(2\psi)
    \end{pmatrix}\begin{pmatrix}
    k_{1}\\
    k_{2}
    \end{pmatrix}.
\end{equation}
Since we can choose $\psi$ arbitrarily we can make the choice,
\begin{equation}
\label{eq57}
    \tan(2\psi)=-\frac{k_{2}}{k_{1}},
\end{equation}
such that $k_{2}'=0$, in which case $\vec{a}'^{R}\cdot\vec{a}'^{I}=0$. In addition we can choose, $k_{1}'=k_{1}/\cos(2\psi) \geq 0$, such that $|\vec{a}'^{R}|\geq |\vec{a}'^{I}|$. Hence due to the phase freedom in $\vec{a}$ we can assume, without loss of generality, that $\vec{a}^{R}\cdot\vec{a}^{I}=0$ and $|\vec{a}^{R}|\geq |\vec{a}^{I}|$. Since $\vec{a}^{R}$ and $\vec{a}^{I}$ are orthogonal and $|\vec{a}|=1$ we can parameterize $\vec{a}$ as,
\begin{equation}
\label{eq58}
    \vec{a}=\vec{a}^{R}+i\vec{a}^{I}=\cos(\theta)\hat{a}^{R}+i\sin(\theta)\hat{a}^{I},
\end{equation}
where $|\hat{a}^{R}|=|\hat{a}^{I}|=1$ are unit vectors and $\theta \in [-\frac{\pi}{4},\frac{\pi}{4}]$. Now performing a unitary transformation on $\vec{a}\vec{a}^{\dagger}$ by $G \in \mathrm{SO}(3)$ we get,
\begin{equation}
\label{59}
    \vec{a}\vec{a}^{\dagger} \xrightarrow[]{G} G\vec{a}\vec{a}^{\dagger}G^{T} = (G\vec{a})(G\vec{a})^{\dagger}.
\end{equation}
Since $G \in \mathrm{SO}(3)$ is real it does not mix the real and imaginary parts of $\vec{a}$, so $G$ simultaneously rotates the vectors $\vec{a}^{R}$ and $\vec{a}^{I}$. Hence $G$ can be chosen to align the unit vectors $\hat{a}^{R}$ and $\hat{a}^{I}$ to the $+x$ and $-y$ axis respectively, that is, 
\begin{equation}
\label{eq60}
    G\hat{a}^{R}=\hat{x} \hspace{3mm} \mathrm{and} \hspace{3mm} G\hat{a}^{I}=-\hat{y}.
\end{equation}
So we can write,
\begin{equation}
\label{eq61}
    \vec{a}(\theta)=G\vec{a}=(\cos(\theta),-i\sin(\theta),0)^{T},
\end{equation}
where $\theta \in [-\frac{\pi}{4},\frac{\pi}{4}]$ and,
\begin{equation}
\label{eq62}
    A(\theta)=\vec{a}(\theta)\vec{a}(\theta)^{\dagger}=\begin{pmatrix}
    \cos^{2}(\theta) & i\cos(\theta)\sin(\theta) & 0\\
    -i\cos(\theta)\sin(\theta) & \sin^{2}(\theta) & 0\\
    0 & 0 & 0\\
    \end{pmatrix}.
\end{equation}
Hence we see that using the methods of linear combination of semigroups and unitary conjugation to simulate a Markovian semigroup, whose generator is specified by the GKS matrix $A$, it suffices to be able to simulate all Markovian semigroups whose generator is specified by the GKS matrix $A(\theta)$.\\
\\
(Necessity) We now show that using linear combination and unitary conjugation it is not possible to simulate the Markovian semigroup whose generator is specified by the GKS matrix $A(\theta)$, through simulation of some other combination and/or transformation of Markovian semigroups whose generators are specified by GKS matrices of the form of $A(\theta')$.\\
\\
Firstly we note that $A(\theta)=\vec{a}(\theta)\vec{a}(\theta)^{\dagger}$ is a rank 1 matrix and a projector onto the eigenspace of a single eigenvector of $A$. Since rank 1 matrices are extreme in the cone of positive semidefinite matrices \cite{agler1988positive}, $A(\theta)$ cannot be written as a convex combination of positive semidefinite matrices. Hence no such $A(\theta)$ can be simulated using linear combination of semigroups that are specified by other such matrices $A(\theta')$.\\
\\
We now note that multiplying the vector $\vec{a}$ by an overall phase $e^{i\psi}$ and the rotation $G \in \mathrm{SO}(3)$ commute. This tells us that to show that we cannot use unitary conjugation to simulate the Markovian semigroup whose generator is specified by $A(\theta)$, it suffices to show that, for an $\psi \in [0,2\pi]$ and $G \in \mathrm{SO}(3)$ if\\
\begin{equation}
\label{eq63}
    e^{i\psi}G\vec{a}(\theta)=\vec{a}(\theta'),
\end{equation}
\\
where $\theta,\theta' \in [-\frac{\pi}{4},\frac{\pi}{4}]$, then $\theta=\theta'$. To see this let $k_{2}'=k_{2}=0$ and $k_{1}',k_{1} \geq 0$ in equation (\ref{eq56}), this leads to the following two equations,\\
\begin{equation}
\label{eq64}
    k_{1}'=\cos(2\psi)k_{1} \hspace{3mm} \mathrm{and} \hspace{3mm} \sin(2\psi)k_{1}=0.
\end{equation}
\\
Since by assumption $k_{1}',k_{1} \geq 0$ we see that these equations are only satisfied when $\psi=0$, hence the phase transformation $e^{i\psi}$ is trivial and leaves $\theta$ unchanged. Now we know that $\vec{a}(\theta)=\cos(\theta)\hat{x}-i\sin(\theta)\hat{y}$ so,\\
\begin{equation}
\label{eq65}
    G\vec{a}(\theta)=\cos(\theta)G\hat{x}+i\sin(\theta)G(-\hat{y})=\cos(\theta)\hat{x}-i\sin(\theta)\hat{y}.
\end{equation}
\\
From the above equation we see that $G$ does not change $\theta$, also since $G$ was chosen to align the unit vectors $\hat{a}^{R}$ and $\hat{a}^{I}$ to the unit vectors $+\hat{x}$ and $-\hat{y}$ respectively. We see that $G\vec{a}(\theta)=\vec{a}(\theta)$. Hence we see that,\\
\begin{equation}
\label{eq66}
    e^{i\psi}G\vec{a}(\theta)=a(\theta),
\end{equation}\\
and hence $\theta=\theta'$. Therefore it is not possible to use linear combination of semigroups and unitary conjugation to simulate the Markovian semigroup, whose generator is specified $A(\theta)$, with Markovian semigroups whose generators are specified by matrices of the same form $A(\theta')$. 
\end{proof}
\noindent Now by making use of Theorem. 2 we can decompose the constituent channels $T_{t}^{(k)}$. We start with the GKS matrix $A_{k}=\vec{a}_{k}\vec{a}_{k}^{\dagger}$ then by choosing some $C_{k} \in \mathrm{SO}(3)$, and by performing a unitary transformation on $\vec{a}_{k} \vec{a}_{k}^{\dagger}$ with $C_{K}$ we get:
\begin{equation}
\label{eq67}
    C_{k}A_{k}C_{k}^{T}= C_{k}\vec{a}_{k} \vec{a}_{k}^{\dagger}C_{k}^{T}=(C_{k}\vec{a}_{k})(C_{k}\vec{a}_{k})^{\dagger}=\vec{a}(\theta_{k})\vec{a}(\theta_{k})^{\dagger}=A(\theta_{k}).
\end{equation}
The above equation implies that,
\begin{equation}
\label{eq68}
     A_{k}=C_{k}^{T}A(\theta_{k})C_{k}.
\end{equation}
Now we define the channel $T_{t}^{(\theta_{k})}:=\exp(t \mathcal{L}_{\theta_{k}})$, where,
\begin{equation}
\label{eq69}
	    \mathcal{L}_{\theta_{k}}(\rho)=\frac{1}{2}\sum_{i,j=1}^{3}(A(\theta_{k}))_{ij}\big( [\sigma_{i},\rho \sigma_{j}]+[\sigma_{i}\rho,\sigma_{j}]\big),
\end{equation}
and the GKS matrix $A(\theta_{k})$ is given in equation (\ref{eq62}). Using the definitions above we see that unitary conjugation of the channel $T_{t}^{(\theta_{k})}$ by $U_{k} \in \mathrm{SU}(2)$ conjugates the GKS matrix $A(\theta_{k})$ by some $C_{k} \in \mathrm{SO}(3)$ i.e. equation (\ref{eq68}). Hence unitary conjugation of the channel $T_{t}^{(\theta_{k})}$ yields the channel $T_{t}^{(k)}$ i.e. 
\begin{equation}
\label{eq70}
    T_{t}^{(k)}(\rho)=(\mathcal{U}_{k}^{\dagger}T_{t}^{(\theta_{k})}\mathcal{U}_{k})(\rho)=U_{k}^{\dagger}[T_{t}^{(\theta_{k})}(U_{k}\rho U_{k}^{\dagger})]U_{k}.
\end{equation}
This tells us that if we wish to simulate the channels from the semigroup $\{T_{t}^{(k)}\}$ we need to be able to efficiently simulate channels from the universal semigroup $\{ T_{t}^{(\theta_{k})}\}$ and apply a unitary transformation $U_{k} \in \mathrm{SU(2)}$. In the next section we shall discuss a recombination strategy to use the constituent channels $T^{(k)}_{t}$ in second order SLT product formulas to approximate the channel $T_{t}$.

\section{Recombination Strategy for Approximating \texorpdfstring{$T_{t}$}{}}

In this section, using methods developed for simulating Hamiltonian dynamics \cite{berry2007efficient,papageorgiou2012efficiency}, 
we wish to show that we can use the second order Suzuki-Lie-Trotter (SLT) product formulas \cite{suzuki1990fractal,suzuki1991general} 
to simulate the channel $T_{t}$ up to an arbitrary accuracy $\epsilon$. In particular we wish to use a finite product of elements from the semigroup $\{T^{(k)}_{t}\}$ with $T^{(k)}_{t}:=e^{t\mathcal{L}_{k}}$, to approximate the channel $T_{t}$ and we also want to place a bound on the number of implementations of $T^{(k)}_{t}$ required in this product. \\
\\
\noindent Given the generator of the channel $T_{t}$ in the form as in equation (\ref{eq18}) we can absorb the scalars $\lambda_{k}$ into the superoperators $\mathcal{L}_{k}$ by defining, $\Lk{k}:=\lambda_{k}\mathcal{L}_{k}$ and writing the generator as,
\begin{align}
    \label{eq146}
    \mathcal{L}=\sum^{3}_{k=0}\Lk{k}.
\end{align}
Using the $1\rightarrow 1$ superoperator norm we define the quantity $\Lambda:= \max_{k \in \{0,1,2,3\}}\snorm{\Lk{k}}$, this allows us to bound the $1\rightarrow 1$ norm of the generator $\snorm{\mathcal{L}}$ in the following way,
\begin{align}
    \label{eq147}
\snorm{\mathcal{L}}&=\snorm{\left(\sum_{k=0}^{3}\Lk{k}\right) }\leq \sum_{k=0}^{3}\snorm{\Lk{k}} \leq 4\Lambda.
\end{align}
We consider the total evolution $T_{t}=e^{t\mathcal{L}}$ and by defining the parameter $\tau:=t/N$ we can write one $N^{\mathrm{th}}$ of the total evolution as,
\begin{align}
\label{eq148}
    T_{\tau}=e^{\tau\mathcal{L}}=\exp\bigg(\tau\sum_{k=0}^{3}\Lk{k}\bigg).
\end{align}

Now we define the second order SLT product formula $S_{2}(\alpha)$ for some $\alpha \geq 0$\cite{suzuki1990fractal,suzuki1991general} as,
\begin{align}
    \label{eq149}
    S_{2}(\alpha)=S_{2}(\Lk{0},\Lk{1},\Lk{2},\Lk{3},\alpha)=\prod_{k=0}^{3}e^{\frac{\alpha}{2}\Lk{k}}\prod_{k'=3}^{0}e^{\frac{\alpha}{2}\Lk{k'}}.
\end{align}

Now to calculate the bound on the error in the approximation of the total channel $T_{t}$ we need to find the bound on the difference $\snorm{T_{\tau}-S_{2}(\tau)}$ and then consider $N$ applications of the SLT integrator $S_{2}(\tau)$. The following theorem shall encapsulate the bound on the error in our approximation as well as the number of implementations $N$ required to approximate the total channel.\\
\begin{theorem}
    Given a quantum channel $T_{t}=e^{t\mathcal{L}}$ with a generator $\mathcal{L}=\sum_{k=0}^{3}\Lk{k}$, and for $0 \leq \epsilon \leq 1$ then there exists some $N$ such that,
    \begin{align}
        \label{eq150}
        \snorm{\exp(\sum_{k=0}^{3}\Lk{k})-S_{2}(\tau)^{N}}\leq \epsilon,
    \end{align}
    with $\epsilon \geq \frac{(4t \Lambda)^{3}}{3N^{2}}$ and $\tau=\frac{t}{N}$.
\end{theorem}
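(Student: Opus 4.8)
The plan is to reduce the global statement to a single-step (local) error estimate and then propagate it over the $N$ Trotter steps by a telescoping argument, in the spirit of the Hamiltonian-simulation bounds of \cite{berry2007efficient,papageorgiou2012efficiency}. Since $\{T_{t}\}$ is a semigroup and $\tau = t/N$, we have $T_{t} = T_{\tau}^{N}$, so the quantity to control is $\snorm{T_{\tau}^{N}-S_{2}(\tau)^{N}}$. I would first prove the local bound $\snorm{T_{\tau} - S_{2}(\tau)} \leq \frac{(4\tau\Lambda)^{3}}{3}$ and then show it implies the claim.

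For the local bound, the essential input is that $S_{2}(\tau)$ is a \emph{symmetric} (palindromic) product formula, which forces it to agree with $e^{\tau\mathcal{L}}$ through second order in $\tau$. I would establish this cheaply by observing that reversing the order of the eight factors and sending $\tau \mapsto -\tau$ returns the same operator, i.e. $S_{2}(-\tau) = S_{2}(\tau)^{-1}$. Writing $S_{2}(\tau) = \mathbb{1} + \tau P_{1} + \tau^{2} P_{2} + O(\tau^{3})$, the relation $S_{2}(\tau)S_{2}(-\tau) = \mathbb{1}$ forces $P_{1} = \mathcal{L}$ and $P_{2} = \tfrac{1}{2}\mathcal{L}^{2}$, which are exactly the first- and second-order Taylor coefficients of $e^{\tau\mathcal{L}}$. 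Hence $g(\tau):=e^{\tau\mathcal{L}}$ and $h(\tau):=S_{2}(\tau)$ satisfy $g^{(m)}(0)=h^{(m)}(0)$ for $m=0,1,2$, and Taylor's theorem with integral remainder gives
\begin{equation}
g(\tau)-h(\tau) = \frac{1}{2}\int_{0}^{\tau}(\tau-s)^{2}\big(g'''(s)-h'''(s)\big)\,ds .
\end{equation}

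The core estimates are then the uniform bounds on the two third derivatives. For the exact flow, $g'''(s) = \mathcal{L}^{3}e^{s\mathcal{L}}$; since $e^{s\mathcal{L}}$ is CPTP for $s\geq 0$ it is a trace-norm contraction, so $\snorm{g'''(s)} \leq \snorm{\mathcal{L}}^{3} \leq (4\Lambda)^{3}$ by the bound $\snorm{\mathcal{L}}\leq 4\Lambda$ of equation (\ref{eq147}). For the formula, differentiating the product of eight exponentials three times and using submultiplicativity together with $\snorm{e^{\frac{s}{2}\Lk{k}}}\leq 1$ gives $\snorm{h'''(s)} \leq \big(\sum_{k=0}^{3}\snorm{\Lk{k}}\big)^{3} \leq (4\Lambda)^{3}$. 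Substituting $\snorm{g'''(s)}+\snorm{h'''(s)}\leq 2(4\Lambda)^{3}$ and evaluating $\int_{0}^{\tau}(\tau-s)^{2}\,ds = \tau^{3}/3$ yields the local bound $\snorm{T_{\tau}-S_{2}(\tau)}\leq (4\Lambda)^{3}\tau^{3}/3 = \frac{(4\tau\Lambda)^{3}}{3}$.

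Finally, I would telescope using $A^{N}-B^{N} = \sum_{j=0}^{N-1}A^{j}(A-B)B^{N-1-j}$ with $A = T_{\tau}$ and $B = S_{2}(\tau)$. Both $A$ and $B$ are CPTP, so $\snorm{A},\snorm{B}\leq 1$, and submultiplicativity with the triangle inequality collapse the sum to $\snorm{T_{t} - S_{2}(\tau)^{N}}\leq N\,\snorm{T_{\tau} - S_{2}(\tau)} \leq N\frac{(4\tau\Lambda)^{3}}{3} = \frac{(4t\Lambda)^{3}}{3N^{2}}$, using $\tau = t/N$. Since this upper bound tends to $0$ as $N\to\infty$, for any $0\leq\epsilon\leq 1$ one can choose $N$ with $\frac{(4t\Lambda)^{3}}{3N^{2}}\leq\epsilon$, which is precisely the stated condition, and the theorem follows. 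I expect the main obstacle to be the clean justification of the third-derivative bound for $S_{2}(s)$: the product-rule expansion produces many terms, and one must argue carefully (e.g. via the generalized product rule or a Duhamel/integral representation) that their total $1\rightarrow1$ norm never exceeds $\big(\sum_{k}\snorm{\Lk{k}}\big)^{3}$, while also confirming that every exponential factor is genuinely a contraction in the $1\rightarrow1$ norm.
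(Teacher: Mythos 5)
Your proof is correct, and it reaches the theorem by a genuinely different route for the key local-error lemma. The propagation step is essentially the paper's: the paper proves $\snorm{T^{N}-V^{N}}\leq N\snorm{T-V}$ by induction, which is your telescoping identity $A^{N}-B^{N}=\sum_{j=0}^{N-1}A^{j}(A-B)B^{N-1-j}$ in disguise; both rest on $\snorm{\cdot}=1$ for CPTP maps, and both final steps (choosing $N$ so that $\frac{(4t\Lambda)^{3}}{3N^{2}}\leq\epsilon$) coincide. The real difference is in the local bound $\snorm{T_{\tau}-S_{2}(\tau)}\leq\frac{(4\tau\Lambda)^{3}}{3}$. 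The paper expands $T_{\tau}$ and $S_{2}(\tau)$ as explicit Taylor series, \emph{asserts} the second-order cancellation without proof, bounds both tails term by term using the multinomial identity extracted from $\left(e^{x/2}\right)^{8}=e^{4x}$ together with a remainder estimate borrowed from Childs et al., obtaining $\frac{(4\tau\Lambda)^{3}}{3}e^{4\tau\Lambda}$, and then discards the factor $e^{4\tau\Lambda}$ by the approximation $e^{4t\Lambda/N}\approx 1$ ``for large enough $N$.'' You instead prove the second-order agreement structurally, via the palindromic symmetry $S_{2}(-\tau)=S_{2}(\tau)^{-1}$, and then apply Taylor's theorem with integral remainder and uniform third-derivative bounds. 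Your route buys something concrete: the constant $\frac{(4\tau\Lambda)^{3}}{3}$ comes out rigorously for every $N$, with no asymptotic hand-waving, and the symmetry argument generalizes directly to higher-order symmetric formulas; the paper's route is more elementary (pure series manipulation) but, as written, its local lemma only holds up to that large-$N$ approximation. The obstacle you flag, bounding $\snorm{h'''(s)}$, resolves exactly as you hope: the generalized product rule plus the multinomial theorem gives $\snorm{h'''(s)}\leq\big(\sum_{k=0}^{3}\snorm{\Lk{k}}\big)^{3}\leq(4\Lambda)^{3}$, since each of the eight factors contributes $\tfrac{1}{2}\snorm{\Lk{k}}$ and each $\Lk{k}$ appears twice — this is the same combinatorics as the paper's $\left(e^{x/2}\right)^{8}=e^{4x}$ identity — and each factor $e^{\frac{s}{2}\Lk{k}}$ is genuinely a $1\rightarrow1$ contraction because $\Lk{k}=\lambda_{k}\mathcal{L}_{k}$ with $\lambda_{k}\geq0$ is itself a GKSL generator, so its exponential is CPTP. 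One small repair to your write-up: the relation $S_{2}(\tau)S_{2}(-\tau)=\mathbb{1}$ cannot by itself force $P_{1}=\mathcal{L}$, since at first order it is vacuous; you must read $P_{1}=\sum_{k}\Lk{k}=\mathcal{L}$ off the product expansion directly (which is immediate), and then the symmetry forces $P_{2}=\tfrac{1}{2}P_{1}^{2}=\tfrac{1}{2}\mathcal{L}^{2}$.
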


To prove the following theorem we require some results which are stated in the following lemmas below.\\
\begin{lemma}
    For $\tau=\frac{t}{N}>0$ we have that,
    \begin{align}
    \label{eq151}
        \snorm{\exp(\sum_{k=0}^{3}\Lk{k})-S_{2}(\tau)}\leq \frac{(4t\Lambda)^{3}}{3N^{3}}.
    \end{align}
    \end{lemma}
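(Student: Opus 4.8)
The plan is to control the single-step error $E(s) := \exp(s\mathcal{L}) - S_{2}(s)$, where $\mathcal{L} = \sum_{k=0}^{3}\Lk{k}$, by showing that $\exp(s\mathcal{L})$ and $S_{2}(s)$ agree to second order in $s$ and then estimating the third-order Taylor remainder. Both $\exp(s\mathcal{L})$ and $S_{2}(s) = \prod_{k=0}^{3} e^{\frac{s}{2}\Lk{k}}\prod_{k'=3}^{0} e^{\frac{s}{2}\Lk{k'}}$ are entire operator-valued functions of $s$, so $E(s)$ is smooth and the Taylor machinery applies. At the end I will set $s=\tau=t/N$.

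First I would establish that $E(0)=E'(0)=E''(0)=0$. The cleanest route is to exploit the palindromic structure of $S_{2}$: reversing the order of its factors and sending $s\mapsto -s$ produces $S_{2}(-s)$, and in the product $S_{2}(s)S_{2}(-s)$ adjacent factors cancel pairwise by telescoping, so $S_{2}(-s)=S_{2}(s)^{-1}$. Writing $S_{2}(s)=e^{\Omega(s)}$ near $s=0$, this symmetry forces $\Omega$ to be odd, and since the exponents sum to $\sum_{k}\Lk{k}=\mathcal{L}$ we get $\Omega(s)=s\mathcal{L}+O(s^{3})$; hence $S_{2}(s)=\exp(s\mathcal{L}+O(s^{3}))$ coincides with $\exp(s\mathcal{L})$ through order $s^{2}$. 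Equivalently, one may differentiate the product directly using the generalized Leibniz rule and check that the first and second derivatives at $s=0$ reproduce $\mathcal{L}$ and $\mathcal{L}^{2}$. I expect this second-order cancellation to be the crux of the argument: it is exactly where the symmetric arrangement of the factors is used, and it is what makes $S_{2}$ a second-order rather than a first-order formula.

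With $E(0)=E'(0)=E''(0)=0$, Taylor's theorem with the integral form of the remainder gives
\begin{equation}
E(\tau) = \frac{1}{2}\int_{0}^{\tau}(\tau-s)^{2}\,E^{(3)}(s)\,ds,
\end{equation}
so that $\snorm{E(\tau)}\le \frac{\tau^{3}}{6}\sup_{s\in[0,\tau]}\snorm{E^{(3)}(s)}$ after using the triangle inequality and $\int_{0}^{\tau}(\tau-s)^{2}\,ds=\tau^{3}/3$. It then remains to bound $\snorm{E^{(3)}(s)}\le \snorm{\mathcal{L}^{3}\exp(s\mathcal{L})}+\snorm{S_{2}^{(3)}(s)}$. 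For the first term I use submultiplicativity together with $\snorm{\mathcal{L}}\le 4\Lambda$ from equation (\ref{eq147}) and the fact that $\exp(s\mathcal{L})=T_{s}$ is a quantum channel, hence $\snorm{\exp(s\mathcal{L})}\le 1$, giving $\snorm{\mathcal{L}^{3}\exp(s\mathcal{L})}\le(4\Lambda)^{3}$. For the second term I write $S_{2}(s)=\prod_{l} e^{s b_{l}}$ with each $b_{l}\in\{\tfrac12\Lk{k}\}$ and $\sum_{l}\snorm{b_{l}}\le 8\cdot\tfrac12\Lambda=4\Lambda$, apply the generalized Leibniz rule to the product, bound each differentiated factor by $\snorm{b_{l}^{d_{l}}e^{s b_{l}}}\le\snorm{b_{l}}^{d_{l}}$ (each $e^{sb_{l}}$ being a channel of norm at most $1$), and resum the multinomial coefficients to obtain $\snorm{S_{2}^{(3)}(s)}\le(\sum_{l}\snorm{b_{l}})^{3}\le(4\Lambda)^{3}$.

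Combining these bounds yields $\snorm{E^{(3)}(s)}\le 2(4\Lambda)^{3}$ uniformly in $s\in[0,\tau]$, and therefore
\begin{equation}
\snorm{\exp(\tau\mathcal{L})-S_{2}(\tau)}\le \frac{\tau^{3}}{6}\cdot 2(4\Lambda)^{3}=\frac{(4\tau\Lambda)^{3}}{3}=\frac{(4t\Lambda)^{3}}{3N^{3}},
\end{equation}
using $\tau=t/N$ in the last step. The only genuinely delicate part is the second-order cancellation established in the first step; the remainder estimate is then routine given the subadditivity and submultiplicativity of the $1\rightarrow 1$ norm and the contractivity $\snorm{\,\cdot\,}\le 1$ of the constituent channels.
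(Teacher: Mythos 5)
Your proof is correct, and it takes a genuinely different route from the paper's. The paper expands both $\exp(\tau\mathcal{L})$ and $S_{2}(\tau)$ as power series, \emph{asserts} the second-order agreement (so that only terms of order $l\geq 3$ survive in the difference), bounds the two tails term by term using sub-additivity, sub-multiplicativity, and the combinatorial identity obtained by equating coefficients in $\exp(x/2)^{8}=\exp(4x)$, and then invokes the tail estimate $\sum_{n\geq k}y^{n}/n!\leq \frac{y^{k}}{k!}e^{y}$. That route produces $\frac{(4t\Lambda)^{3}}{3N^{3}}\exp(4t\Lambda/N)$, and the paper only reaches the stated bound through the non-rigorous step ``for large enough $N$ we can approximate $\exp(4t\Lambda/N)\approx 1$''. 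Your argument replaces the series comparison with Taylor's theorem with integral remainder plus a uniform bound on $E'''$, and this is exactly where you gain: because $\exp(s\mathcal{L})$ and each factor $e^{\frac{s}{2}\Lk{k}}$ of $S_{2}(s)$ is itself a quantum channel with $1\rightarrow 1$ norm equal to $1$, the third derivatives are bounded by $(4\Lambda)^{3}$ with no exponential overhead, so you obtain the clean constant $\frac{(4t\Lambda)^{3}}{3N^{3}}$ exactly, with no asymptotic approximation. Your proof also supplies what the paper merely assumes: the cancellation $E(0)=E'(0)=E''(0)=0$, which you derive from the palindromic symmetry $S_{2}(-s)=S_{2}(s)^{-1}$ (or by direct differentiation); I checked that the commutator terms in $S_{2}''(0)-\mathcal{L}^{2}$ indeed cancel pairwise for the symmetric ordering, so this step is sound. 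The only cost of your approach is that it needs the calculus of operator-valued functions (smoothness of $E$, the generalized Leibniz rule, the integral form of the remainder) rather than pure power-series manipulation, but in exchange it is both rigorous where the paper's proof is not and self-contained where the paper's proof defers to a known result.
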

\begin{proof}
    We know that $S_{2}(\tau)$ will approximate $T_{\tau}$ up to order $\tau^{2}$ i.e. $\snorm{T_{\tau}-S_{2}(\tau)}\in \mathcal{O}(\tau^{3})$, so we have that in the difference $T_{\tau}-S_{2}(\tau)$ the only remaining terms in the Taylor expansion are of order $\tau^{3}$ so we have
    \begin{align}
        \label{eq152}
        \exp(\tau\sum_{k=0}^{3}\Lk{k})-S_{2}(\tau)=\sum_{l=3}^{\infty}R_{l}(\tau)-W_{l}(\tau),
    \end{align}
    where $R_{l}(\tau)$ are the remaining terms in the Taylor expansion of $\exp(\tau\sum_{k=0}^{3}\Lk{k})$ and $W_{l}(\tau)$ are the remaining terms in the Taylor expansion of $S_{2}(\tau)$. We can then write $R_{l}(\tau)$ as,
    \begin{align}
        \label{eq153}
        R_{l}(\tau)=\frac{\tau^{l}}{l!}\bigg(\sum_{k=0}^{3}\Lk{k}\bigg)^{l}.
    \end{align}
    We can then bound this quantity by using the fact that $\snorm{\mathcal{L}}\leq 4\Lambda$, so that we are able to write,
    \begin{align}
        \label{eq154}
        \snorm{R_{l}(\tau)}&=\snorm{\frac{\tau^{l}}{l!}\mathcal{L}^{l}},\nonumber\\
        &= \frac{\tau^{l}}{l!}\snorm{\mathcal{L}^{l}},\nonumber\\
        &\leq \frac{\tau^{l}}{l!}\snorm{\mathcal{L}}^{l},\nonumber\\
        &\leq \frac{\tau^{l}}{l!}(4\Lambda)^{l}.
    \end{align}
    To write the form of the term $W_{l}(\tau)$ we need to consider the Taylor expansion of $S_{2}(\tau)$, which is given in terms of the product of Taylor expansions of each exponential in equation (\ref{eq149}),
\begin{align}
    \label{eq155}
    S_{2}(\tau)&=\sum_{j_{1},...,j_{8}=0}^{\infty}\frac{(\tau/2)^{j_{1}+...+j_{8}}}{j_{1}!j_{2}!...j_{8}!}\Lk{0}^{j_{1}}\Lk{1}^{j_{2}}\Lk{2}^{j_{3}}\Lk{3}^{j_{4}}\Lk{3}^{j_{5}}\Lk{2}^{j_{6}}\Lk{1}^{j_{7}}\Lk{0}^{j_{8}},\\
    \label{eq156}
    &=\sum_{l=0}^{\infty}\hspace{1mm}\sum\limits_{\substack{j_{1},...,j_{8}=0 \\ \sum_{m} j_{m}=l}}^{l}\frac{(\tau/2)^{j_{1}+...+j_{8}}}{j_{1}!j_{2}!...j_{8}!}\Lk{0}^{j_{1}}\Lk{1}^{j_{2}}\Lk{2}^{j_{3}}\Lk{3}^{j_{4}}\Lk{3}^{j_{5}}\Lk{2}^{j_{6}}\Lk{1}^{j_{7}}\Lk{0}^{j_{8}},
\end{align}
where we have converted the eight infinite sums in (\ref{eq156}) to finite sums by restricting $j_{1},...,j_{8}$ such that they sum to $l$. This allows us to write the term $W_{l}(\tau)$ as,
\begin{align}
    \label{eq157}
    W_{l}(\tau)=\sum\limits_{\substack{j_{1},...,j_{8}=0 \\ \sum_{m} j_{m}=l}}^{l}\frac{(\tau/2)^{j_{1}+...+j_{8}}}{j_{1}!j_{2}!...j_{8}!}\Lk{0}^{j_{1}}\Lk{1}^{j_{2}}\Lk{2}^{j_{3}}\Lk{3}^{j_{4}}\Lk{3}^{j_{5}}\Lk{2}^{j_{6}}\Lk{1}^{j_{7}}\Lk{0}^{j_{8}}.
\end{align}
To bound the term $W_{l}(\tau)$ we use the fact that $\snorm{\cdot}$ is sub-multiplicative and sub-additive, so that we can write,
\begin{align}
\label{eq158}
    \snorm{W_{l}(\tau)}&=\snorm{\sum\limits_{\substack{j_{1},...,j_{8}=0 \\ \sum_{m} j_{m}=l}}^{l}\frac{(\tau/2)^{j_{1}+...+j_{8}}}{j_{1}!j_{2}!...j_{8}!}\Lk{0}^{j_{1}}\Lk{1}^{j_{2}}\Lk{2}^{j_{3}}\Lk{3}^{j_{4}}\Lk{3}^{j_{5}}\Lk{2}^{j_{6}}\Lk{1}^{j_{7}}\Lk{0}^{j_{8}}},\nonumber\\
    &\leq \sum\limits_{\substack{j_{1},...,j_{8}=0 \\ \sum_{m} j_{m}=l}}^{l}\frac{(\tau/2)^{j_{1}+...+j_{8}}}{j_{1}!j_{2}!...j_{8}!}\snorm{\Lk{0}^{j_{1}}\Lk{1}^{j_{2}}\Lk{2}^{j_{3}}\Lk{3}^{j_{4}}\Lk{3}^{j_{5}}\Lk{2}^{j_{6}}\Lk{1}^{j_{7}}\Lk{0}^{j_{8}}},\nonumber\\
    &\leq \sum\limits_{\substack{j_{1},...,j_{8}=0 \\ \sum_{m} j_{m}=l}}^{l}\frac{(\tau/2)^{j_{1}+...+j_{8}}}{j_{1}!j_{2}!...j_{8}!}\snorm{\Lk{0}^{j_{1}}}\snorm{\Lk{1}^{j_{2}}}...\snorm{\Lk{0}^{j_{8}}},\nonumber\\
    &\leq \sum\limits_{\substack{j_{1},...,j_{8}=0 \\ \sum_{m} j_{m}=l}}^{l}\frac{(\tau/2)^{j_{1}+...+j_{8}}}{j_{1}!j_{2}!...j_{8}!}\snorm{\Lk{0}}^{j_{1}}\snorm{\Lk{1}}^{j_{2}}...\snorm{\Lk{0}}^{j_{8}},\nonumber\\
    &\leq \sum\limits_{\substack{j_{1},...,j_{8}=0 \\ \sum_{m} j_{m}=l}}^{l}\frac{(\tau/2)^{j_{1}+...+j_{8}}}{j_{1}!j_{2}!...j_{8}!}\Lambda^{j_{1}+j_{2}+...+j_{8}}.
\end{align}
To calculate the sum in equation (\ref{eq158}) we must consider the taylor expansion of the following expression,
\begin{align}
\label{eq159}
    \exp(\frac{x}{2})^{8}&=\exp(\frac{x}{2})\exp(\frac{x}{2})...\exp(\frac{x}{2}),\nonumber\\
    &=\bigg(\sum_{j_{1}=0}^{\infty}\frac{(x/2)^{j_{1}}}{j_{1}!}\bigg)\bigg(\sum_{j_{2}=0}^{\infty}\frac{(x/2)^{j_{2}}}{j_{2}!}\bigg)...\bigg(\sum_{j_{8}=0}^{\infty}\frac{(x/2)^{j_{8}}}{j_{8}!}\bigg),\nonumber\\
    &=\sum_{j_{1},j_{2},...,j_{8}=0}^{\infty}\frac{1}{j_{1}!j_{2}!...j_{8}!}\bigg(\frac{x}{2}\bigg)^{j_{1}+j_{2}+...+j_{8}},\nonumber\\
    &=\sum_{p=0}^{\infty}\hspace{1mm}\sum\limits_{\substack{j_{1},j_{2},...,j_{8}=0\\\sum_{m}j_{m}=p}}^{p}\frac{1}{j_{1}!j_{2}!...j_{8}!}\bigg(\frac{x}{2}\bigg)^{j_{1}+j_{2}+...+j_{8}}.
\end{align}
But we know that,
\begin{align}
\label{eq160}
    \exp(\frac{x}{2})^{8}=\exp(4x)=\sum_{p=0}^{\infty}\frac{4^{p}x^{p}}{p!}.
\end{align}
Now by equating equations (\ref{eq159}) and (\ref{eq160}) we see that,
\begin{align}
\label{eq161}
    \sum\limits_{\substack{j_{1},j_{2},...,j_{8}=0\\\sum_{m}j_{m}=p}}^{p}\frac{1}{j_{1}!j_{2}!...j_{8}!}\bigg(\frac{x}{2}\bigg)^{j_{1}+j_{2}+...+j_{8}}=\frac{4^{p}x^{p}}{p!}.
\end{align}
By using $x=\tau\Lambda$ in equation (\ref{eq161})we can write the sum in equation (\ref{eq158}) as,
\begin{align}
\label{eq162}
\sum\limits_{\substack{j_{1},...,j_{8}=0 \\ \sum_{m} j_{m}=l}}^{l}\frac{(\tau/2)^{j_{1}+...+j_{8}}}{j_{1}!j_{2}!...j_{8}!}\Lambda^{j_{1}+j_{2}+...+j_{8}}=\frac{4^{l}\tau^{l}\Lambda^{l}}{2^{l}l!},
\end{align}
this implies that the bound on $\snorm{W_{l}(\tau)}$ is,
\begin{align}
    \label{eq163}
    \snorm{W_{l}(\tau)}\leq \frac{4^{l}\tau^{l}\Lambda^{l}}{l!}.
\end{align}
Now, using sub-additive property of the 1$\rightarrow$1 superoperator norm, we can write
\begin{align}
    \label{eq164}
    \snorm{\exp(\tau\sum_{k=0}^{3}\Lk{k})-S_{2}(\tau)}&=\snorm{\sum_{l=3}^{\infty}R_{l}(\tau)-W_{l}(\tau)},\nonumber\\
    &\leq\sum_{l=3}^{\infty}\snorm{R_{l}(\tau)}+\snorm{W_{l}(\tau)},\nonumber\\
    &\leq2\sum_{l=3}\frac{(4\tau\Lambda)^{l}}{l!}.
\end{align}
Now by using the Lemma F.2 from the supplementary information of \cite{childs2018toward} we have that for $y>0$,
\begin{align}
    \label{eq165}
    \sum_{n=k}^{\infty}\frac{y^{n}}{n!}\leq \frac{y^{k}}{k!}\exp(y).
\end{align}
Using this upper bound on the remainder terms in the Taylor expansion of the exponential function we can write,
\begin{align}
    \label{eq166}
    \snorm{\exp(\tau\sum_{k=0}^{3}\Lk{k})-S_{2}(\tau)}&\leq 2 \frac{(4\tau\Lambda)^{3}}{3!}\exp(4\tau\Lambda)=\frac{(4t\Lambda)^{3}}{3N^{3}}\exp(\frac{4t\Lambda}{N}).
\end{align}
For large enough $N$ we can approximate $\exp(\frac{4t\Lambda}{N})\approx 1$ which gives the desired result,
\begin{align}
    \label{eq167}
     \snorm{\exp(\tau\sum_{k=0}^{3}\Lk{k})-S_{2}(\tau)}&\leq\frac{(4t\Lambda)^{3}}{3N^{3}},
\end{align}
completing the proof.
\end{proof}
\begin{lemma}
    For quantum channels $T$ and $V$ and for some $N\geq0 \in \mathbb{Z}$,
    \begin{align}
        \label{eq168}
        \snorm{T^{N}-V^{N}}\leq N\snorm{T-V}.
    \end{align}
\end{lemma}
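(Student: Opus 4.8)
\section*{Proof proposal}

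The plan is to reduce the claim to two structural facts about the $1\rightarrow1$ superoperator norm that are already available: its sub-additivity and sub-multiplicativity (recorded earlier in the excerpt), together with the contractivity of quantum channels, namely that every CPTP map obeys $\snorm{T}\le 1$. This last property is the essential ingredient that turns an a priori exponential estimate into the linear bound $N\snorm{T-V}$, and it is the only place where the hypothesis that $T$ and $V$ are channels (rather than arbitrary superoperators) is used. I would recall the standard justification: a CPTP map is Hermiticity-preserving, so its $1\rightarrow1$ norm is attained on Hermitian inputs, and on Hermitian operators a positive trace-preserving map does not increase the trace norm; hence $\snorm{T}\le1$, and likewise $\snorm{T^{k}}\le\snorm{T}^{k}\le1$.

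With this in hand I would argue by induction on $N$, mirroring the style of the proof of Lemma~2. The base cases are immediate: $N=0$ gives $\snorm{\mathbb{1}-\mathbb{1}}=0$, and $N=1$ is the trivial inequality $\snorm{T-V}\le\snorm{T-V}$. For the inductive step I would use the telescoping decomposition
\begin{equation}
T^{N+1}-V^{N+1}=T\,(T^{N}-V^{N})+(T-V)\,V^{N},
\end{equation}
verified by expanding the right-hand side. Applying sub-additivity and then sub-multiplicativity yields
\begin{equation}
\snorm{T^{N+1}-V^{N+1}}\le \snorm{T}\,\snorm{T^{N}-V^{N}}+\snorm{T-V}\,\snorm{V}^{N}.
\end{equation}
Invoking the inductive hypothesis $\snorm{T^{N}-V^{N}}\le N\snorm{T-V}$ together with the contractivity bounds $\snorm{T}\le1$ and $\snorm{V}^{N}\le1$ then gives $\snorm{T^{N+1}-V^{N+1}}\le(N+1)\snorm{T-V}$, closing the induction.

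An equivalent non-inductive route is to write the full telescoping sum
\begin{equation}
T^{N}-V^{N}=\sum_{k=0}^{N-1}T^{\,N-1-k}\,(T-V)\,V^{k},
\end{equation}
bound each of the $N$ summands by $\snorm{T}^{N-1-k}\snorm{T-V}\snorm{V}^{k}\le\snorm{T-V}$, and add them up. I expect no genuine obstacle in the manipulations themselves; the one point deserving care is the justification of $\snorm{T}\le1$, since without contractivity the same steps would only deliver the weaker estimate $\sum_{k}\snorm{T}^{N-1-k}\snorm{V}^{k}\snorm{T-V}$. In the intended application $T=S_{2}(\tau)$ and $V=T_{\tau}$ are both CPTP, with $S_{2}(\tau)$ a composition of the channels $e^{(\alpha/2)\Lk{k}}$, so the channel hypothesis holds and the lemma applies.
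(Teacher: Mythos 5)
Your proposal is correct and follows essentially the same route as the paper's own proof: induction on $N$ using the telescoping identity $T^{m+1}-V^{m+1}=T(T^{m}-V^{m})+(T-V)V^{m}$, sub-additivity and sub-multiplicativity of $\snorm{\cdot}$, and the contractivity $\snorm{T}=1$ of quantum channels. Your explicit justification of contractivity (which the paper simply asserts) and the aside on the non-inductive telescoping sum are fine additions but do not change the substance of the argument.
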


\begin{proof}
    We prove this by induction. For $N=0,1$ the equality in equation (\ref{eq168}) holds and to show this would be trivial, so for the base case in our inductive proof we choose $N=2$,
    \begin{align}
        \label{eq169}
        \snorm{T^{2}-V^{2}}&=\snorm{T^{2}-TV+TV-V^{2}},\nonumber\\
        &=\snorm{T(T-V)+(T-V)V},\nonumber\\
        &\leq \snorm{T(T-V)}+\snorm{(T-V)V},\nonumber\\
        &=\leq \snorm{T}\snorm{T-V}+\snorm{T-V}\snorm{V}.
    \end{align}
    By recalling that for any quantum channel $T$ by definition $\snorm{T}=1$, this allows us to write,
    \begin{align}
        \label{eq170}
        \snorm{T^{2}-V^{2}}&\leq \snorm{T-V}+\snorm{T-V}=2\snorm{T-V}.
    \end{align}
    Hence we have verified that the inequality in equation (\ref{eq168}) holds for $N=2$, we now assume that it holds for $N=m$ and show that it is true for $N=m+1$.
    \begin{align}
        \label{eq171}
        \snorm{T^{m+1}-V^{m+1}}&=\snorm{T^{m+1}-TV^{m}+TV^{m}-V^{m+1}},\nonumber\\
        &=\snorm{T(T^{m}-V^{m})+(T-V)V^{m}},\nonumber\\
        &\leq \snorm{T}\snorm{T^{m}-V^{m}}+\snorm{T-V}\snorm{V^{m}},\nonumber\\
        &\leq m\snorm{T-V}+\snorm{T-V}\snorm{V}^{m},\\
        &\leq (m+1)\snorm{T-V}.
    \end{align}
    Therefore by induction the inequality in (\ref{eq168}) holds true for all integers $N\geq0$.
\end{proof}
We are now able to write the proof for theorem 5. using the above two lemmas.
\begin{proof}{of Theorem 5.}
Given that $\tau=t/N$ we can write $\exp(t\mathcal{L})=\exp(\tau\mathcal{L})^{N}$ which allows us to write,
\begin{align}
    \label{eq172}
    \snorm{\exp(t\mathcal{L})-(S_{2}(\tau))^{N}}&=\snorm{\exp(\tau\mathcal{L})^{N}-(S_{2}(\tau))^{N}},\nonumber\\
    &\leq N\snorm{\exp(\tau\mathcal{L})-S_{2}(\tau)},\nonumber\\
    &\leq N \frac{(4t\Lambda)^{3}}{3N^{3}}=\frac{(4t\Lambda)^{3}}{3N^{2}},
\end{align}
where in the second line of (\ref{eq172}) we used lemma 6. and in the third line we use lemma 5. By choosing $\epsilon \geq \frac{(4t\Lambda)^{3}}{3N^{2}}$ we get that $\snorm{\exp(t\mathcal{L})-(S_{2}(\tau))^{N}}\leq \epsilon$, which completes the proof.
\end{proof}

Finally, using the fact that $\epsilon \geq \frac{(4t\Lambda)^{3}}{3N^{2}}$ we can obtain a bound on the number of implementations of $T^{(k)}_{t}$ required to simulate $T_{t}$ up to a precision $\epsilon$. We first see that,
\begin{align}
	N\geq\frac{(4t\Lambda)^{3/2}}{(3\epsilon)^{1/2}},
\end{align} 
this tells us that we have at least $\frac{(4t\Lambda)^{3/2}}{(3\epsilon)^{1/2}}$  implementations of $S_{2}(\tau)$, and in each $S_{2}(\tau)$ we have seven constituent channels $T^{(k)}_{t}$ which implies that the number of channels $T^{(k)}_{t}$ needed to simulate the channel $T_{t}$, which will be denoted by $N_{exp}$ is,
\begin{align}
	N_{exp}\geq 7N =\frac{7(4t\Lambda)^{3/2}}{(3\epsilon)^{1/2}}.
\end{align}
This places a bound on the number of implementations of $T^{(k)}_{t}$ needed to simulate $T_{t}$ and this bound will be used in the next sections to place a bound on the number of single qubit and CNOT gates needed to simulate the channel $T_{t}$.

\section{Simulation of the Constituent Channels}
We have seen from equation (\ref{eq70}), that to simulate the channels $T_{t}^{(k)}$ we need to be able to simulate channels from the semigroup $\{ T_{t}^{(\theta_{k})}\}$ as well as apply some unitary transformation. We note that since the unitary transformation comes from $\mathrm{SU}(2)$ it can be easily implemented with a single qubit unitary gate. So we are interested in finding a quantum circuit that can simulate channels from the semigroup $\{ T_{t}^{(\theta_{k})}\}$.\\
\\
We do this using the techniques from \cite{ruskai2002analysis} to decompose the channel $T_{t}^{(\theta_{k})}$ into a convex sum of quasi-extreme channels, which are channels that are generalised extreme points of the space of quantum channels and have only 2 Kraus operators. We can then use the Stinespring representation \cite{stinespring1955positive} to find quantum circuits that correspond to these quasi-extreme channels. Lastly we use the method of quantum forking \cite{park2019parallel} to implement the convex mixture of these quasi-extreme channels. This will give us a quantum circuit that simulates the channel $T_{t}^{(\theta_{k})}$ for some $\theta_{k} \in [-\frac{\pi}{4},\frac{\pi}{4}]$.\\
\subsection{Calculation of the Choi matrix of the channel \texorpdfstring{$T_{t}^{(\theta_{k})}$}{}}
We now consider the channel $T_{t}^{(\theta_{k})}=\exp(t\mathcal{L}_{\theta_{k}})$, whose generator is given in equation (\ref{eq69}). We now make use of the matrix representation of the generator to find the matrix representation of the channel. We shall denote the matrix representation of a channel or its generator by bold face. The generator $\mathcal{L}_{\theta_{k}}$ has the following matrix representation,
\begin{equation}
	\label{eq71}
	(\mathbf{L}_{\theta_{k}})_{a,b}=\mathrm{tr}[G_{a}^{\dagger}\mathcal{L}_{\theta_{k}}(G_{b})],
\end{equation}
where the operators $\{G_{a}\}$ is a basis for the space $\mathcal{M}_{2}(\mathbb{C})$, and is given as $\{G_{1}=\frac{1}{\sqrt{2}}\mathbb{1}, G_{2}=\frac{1}{\sqrt{2}}\sigma_{1},\\  G_{3}=\frac{1}{\sqrt{2}}\sigma_{2},  G_{4}=\frac{1}{\sqrt{2}}\sigma_{3}\}$. Now we note that the choice of this basis leads to a matrix representation of the channel that is called the affine map representation, refer to appendix A for more details on this representation of the quantum channel. Now we can use equation (\ref{eq69}) and (\ref{eq71}) to find the matrix representation of the generator $\mathcal{L}_{\theta_{k}}$ as,
\begin{equation}
	\label{eq72}
	\mathbf{L}_{\theta_{k}}=\begin{pmatrix}
		0 & 0 & 0 & 0\\
		0 & -2\sin^{2}(\theta_{k}) & 0  & 0\\
		0 & 0 & -2\cos^{2}(\theta_{k}) & 0\\
		-4\cos(\theta_{k})\sin(\theta_{k}) & 0 & 0 & -2\\
	\end{pmatrix}.
\end{equation}
Now to find the matrix representation of the channel $T_{t}^{(\theta_{k})}$ we make use of the following,
\begin{equation}
	\label{eq73}
	\mathbf{T}_{t}^{(\theta_{k})}=\exp(t\mathbf{L}_{\theta_{k}}).
\end{equation}
This leads to the following matrix representation of $T_{t}^{(\theta_{k})}$,
\begin{equation}
	\label{eq74}
	\mathbf{T}_{t}^{(\theta_{k})}=\begin{pmatrix}
		1 & 0 & 0 & 0\\
		0 & \lambda_{1} & 0 & 0\\
		0 & 0 & \lambda_{2} & 0\\
		m_{3} & 0 & 0 & \lambda_{3}\\
	\end{pmatrix},
\end{equation}
where
\begin{align}
	\label{eq75}
	\lambda_{1}&=\exp(-2t\sin^{2}(\theta_{k})),\nonumber\\
	\lambda_{2}&=\exp(-2t\cos^{2}(\theta_{k})),\nonumber\\
	\lambda_{3}&=\exp(-2t),\nonumber\\
	m_{3}&=\sin(2\theta_{k})(\lambda_{3}-1).
\end{align}
Now to obtain the desired convex decomposition of the channel $T_{t}^{(\theta_{k})}$, we need to find the Choi matrix \cite{jamiolkowski1972linear} of this channel i.e $\tau_{(\theta_{k})}=(T_{t}^{(\theta_{k})} \otimes \mathbb{1})|\Omega \rangle \langle \Omega|$, where $|\Omega\rangle=\frac{1}{\sqrt{2}}(|00\rangle +|11\rangle)$. We do this using the following relation \cite{wolf2012quantum},
\begin{equation}
	\label{eq76}
	\tau_{(\theta_{k})}=\frac{1}{4}\sum_{i,j=0}^{3}(\mathbf{T}_{t}^{(\theta_{k})})_{i+1,j+1}(\sigma_{i} \otimes \sigma_{j}^{T}),
\end{equation}
where $\sigma_{0}=\mathbb{1}$. Using equation (\ref{eq76}) we find the Choi matrix to be,
\begin{equation}
	\label{eq77}
	\tau_{(\theta_{k})}=\frac{1}{4}\begin{pmatrix}
		a^{2} & 0 & 0 & \lambda_{1}+\lambda_{2}\\
		0 & b^{2} & \lambda_{1}-\lambda_{2} & 0\\
		0 & \lambda_{1}-\lambda_{2} & c^{2} & 0\\
		\lambda_{1}+\lambda_{2} & 0 & 0 & d^{2}\\
	\end{pmatrix},
\end{equation}
with
\begin{align}
	\label{eq78}
	a&=(1 + \lambda_{3} + m_{3})^{\frac{1}{2}},\nonumber\\
	b&=(1 - \lambda_{3} + m_{3})^{\frac{1}{2}},\nonumber\\
	c&=(1 - \lambda_{3} - m_{3})^{\frac{1}{2}},\nonumber\\
	d&=(1 + \lambda_{3} - m_{3})^{\frac{1}{2}}.
\end{align}
\subsection{Convex decomposition of \texorpdfstring{$T_{t}^{(\theta_{k})}$}{} into quasi-extreme channels}
At this point it is useful to adopt the notation used in \cite{ruskai2002analysis} and define the Choi matrix in terms of the matrix in terms of the matrix $\beta(T_{t}^{(\theta_{k})})=2\tau_{(\theta_{k})}$, this implies that $\tau_{(\theta_{k})}=\frac{1}{2}\beta(T_{t}^{(\theta_{k})})$. \\
The $\beta(\cdot)$ matrix defined in terms of the Choi matrix plays a significant role as this definition implies that $\beta$ defines an affine isomorphism between space of all CPTP maps on the space of states of a single qubit, denoted by $\mathcal{Q}$, and the image $\beta(\mathcal{Q}) \subset \mathcal{M}_{4}(\mathbb{C})$. In particular, there is a one-to-one correspondence between extreme points of $\mathcal{Q}$ and those of the image $\beta(\mathcal{Q})$. This will correspondence will allow us to determine the extreme points of $\mathcal{Q}$, which will aid in decomposing $T^{(\theta_{k})}_{t}$ into quasi-extreme channels.\\  
From this we can also find the Choi matrix of the dual channel $\tilde{T}_{t}^{(\theta_{k})}$ using the following relation,
\begin{equation}
	\label{eq79}
	\beta(\tilde{T}_{t}^{(\theta_{k})})=(U_{23}^{\dagger}\beta(T_{t}^{(\theta_{k})})U_{23})^{*},
\end{equation}
where $U_{23}$ is the permutation matrix that swaps the second and third rows of a $4 \times 4$ matrix and is defined as,
\begin{equation}
	\label{eq80}
	U_{23}=U_{23}^{\dagger}=\begin{pmatrix}
		1 & 0 & 0 & 0\\
		0 & 0 & 1 & 0\\
		0 & 1 & 0 & 0\\
		0 & 0 & 0 & 1\\
	\end{pmatrix}.
\end{equation}
Using equation (\ref{eq79}) we find the matrix $\beta(\tilde{T}_{t}^{(\theta_{k})})$,
\begin{equation}
	\label{eq81}
	\beta(\tilde{T}_{t}^{(\theta_{k})})=\frac{1}{2}\begin{pmatrix}
		a^{2} & 0 & 0 & \lambda_{1}+\lambda_{2}\\
		0 & c^{2} & \lambda_{1}-\lambda_{2} & 0\\
		0 & \lambda_{1}-\lambda_{2} & b^{2} & 0\\
		\lambda_{1}+\lambda_{2} & 0 & 0 & d^{2}\\
	\end{pmatrix}.
\end{equation}
Now the Choi matrix corresponding to the dual channel, $\tilde{T}_{t}^{(\theta_{k})}$, is $\tilde{\tau}_{(\theta_{k})}=\frac{1}{2}\beta(\tilde{T}_{t}^{(\theta_{}k)})$. We shall now make use of the following two lemmas and theorem from \cite{ruskai2002analysis}, to obtain the convex decomposition of $\tau_{(\theta_{k})}$, we state them below in a modified form to suit our notation.\\
\begin{lemma}
	Any contraction in $\mathcal{M}_{2}(\mathbb{C})$ can be written as the convex combination of two unitary matrices.
\end{lemma}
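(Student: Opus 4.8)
The plan is to invoke the singular value decomposition (SVD), which is the natural tool for decomposing a general matrix into unitary ingredients. Any matrix $A \in \mathcal{M}_{2}(\mathbb{C})$ admits a factorization $A = U\Sigma V^{\dagger}$, where $U, V \in \mathcal{M}_{2}(\mathbb{C})$ are unitary and $\Sigma = \mathrm{diag}(s_{1}, s_{2})$ is diagonal with the singular values $s_{1}, s_{2} \geq 0$. The content of the hypothesis is then exactly that the largest singular value of a contraction satisfies $\|A\| \leq 1$, so that $s_{1}, s_{2} \in [0,1]$. This is the only place the contraction assumption enters, and confirming it is the crux of the argument; everything else is algebraic bookkeeping.

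With the singular values confined to $[0,1]$, the key observation is that each $s_{j}$ can be realized as the real part of a unit-modulus complex number. Concretely, for each $j$ set $\theta_{j} = \arccos(s_{j}) \in [0, \tfrac{\pi}{2}]$, so that
\begin{equation}
s_{j} = \cos\theta_{j} = \tfrac{1}{2}\left(e^{i\theta_{j}} + e^{-i\theta_{j}}\right).
\end{equation}
Defining the two diagonal unitary matrices $D_{\pm} = \mathrm{diag}(e^{\pm i\theta_{1}}, e^{\pm i\theta_{2}})$, this gives the diagonal identity $\Sigma = \tfrac{1}{2}(D_{+} + D_{-})$, exhibiting $\Sigma$ itself as an equal-weight convex combination of two unitaries.

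Finally I would reassemble the factorization. Substituting the decomposition of $\Sigma$ yields
\begin{equation}
A = U\Sigma V^{\dagger} = \tfrac{1}{2}\left(U D_{+} V^{\dagger} + U D_{-} V^{\dagger}\right),
\end{equation}
and since each of $U D_{\pm} V^{\dagger}$ is a product of unitary matrices it is itself unitary. Hence $A$ is written as the convex combination $\tfrac{1}{2}W_{+} + \tfrac{1}{2}W_{-}$ with $W_{\pm} = U D_{\pm} V^{\dagger}$ unitary, which is the desired conclusion. As noted, the main obstacle is purely conceptual rather than computational: one must be careful that "contraction" refers to the operator norm $\|A\| \leq 1$ so that all singular values genuinely lie in $[0,1]$, which is precisely what makes the substitution $s_{j} = \cos\theta_{j}$ legitimate.
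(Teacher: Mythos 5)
Your proof is correct and follows essentially the same route as the paper: both write the contraction via its singular value decomposition, parameterize the singular values as $\cos\theta_{j}$ (legitimate precisely because $\|R\|\leq 1$), and split each cosine as $\tfrac{1}{2}(e^{i\theta_{j}}+e^{-i\theta_{j}})$ to regroup the factorization into an equal-weight average of two unitaries. No gaps to report.
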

\begin{proof}
	Let $R\in \mathcal{M}_{2}(\mathbb{C})$ be a contraction then $||R|| \leq 1$, and its singular value decomposition can be written in the form
	\begin{equation}
	\label{eq82}
		R=V\begin{pmatrix}
			\cos(\theta_{1}) & 0\\
			0 & \cos(\theta_{2})\\
		\end{pmatrix}W^{\dagger},
	\end{equation}
where $V,W^{\dagger} \in \mathrm{U}(2)$. Now by using the fact that $\cos(\theta)=\frac{1}{2}(e^{i\theta}+e^{-i\theta})$, we have that,
\begin{align}
\label{eq83}
	R&=V\frac{1}{2}\begin{pmatrix}
		e^{i\theta_{1}}+e^{-i\theta_{1}} & 0\\
		0 & e^{i\theta_{2}}+e^{-i\theta_{2}}\\
	\end{pmatrix}W^{\dagger},\nonumber\\
	&=\frac{1}{2}V\begin{pmatrix}
		e^{i\theta_{1}} & 0\\
		0 & e^{i\theta_{2}}\\
	\end{pmatrix}W^{\dagger}+\frac{1}{2}V\begin{pmatrix}
		e^{-i\theta_{1}} & 0\\
		0 & e^{-i\theta_{2}}\\
	\end{pmatrix}W^{\dagger}.
\end{align}
From the above it is clear that a contraction $R$ can be written as a convex combination of two unitary matrices.
\end{proof}
\begin{lemma}
	Given a matrix
	\begin{equation}
	\label{eq84}
		J=\begin{pmatrix}
			A & C\\
			C^{\dagger} & B\\
		\end{pmatrix},
	\end{equation}
	(1) J is positive semidefinite if and only if $A\geq 0$, $B \geq 0$ and $C=\sqrt{A}R\sqrt{B}$ for some contraction $R$.\\
	(2) Moreover the set of positive semidefinite matrices with fixed $A$ and $B$ is a convex set whose extreme points satisfy $C=\sqrt{A}U\sqrt{B}$, for some unitary matrix $U$.
	\end{lemma}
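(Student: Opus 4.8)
The plan is to prove both parts by reducing the positivity of the block matrix $J$ to a statement about a single contraction $R$, and then to locate the extreme points by transporting the known extreme-point structure of the operator-norm unit ball through an affine map.

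For part (1) I would first extract $A\ge 0$ and $B\ge 0$ directly from $J\ge 0$: evaluating the quadratic form of $J$ on vectors $(x,0)^{T}$ gives $x^{\dagger}Ax\ge 0$ and on $(0,y)^{T}$ gives $y^{\dagger}By\ge 0$. For the factorisation of $C$, the route that avoids assuming $A$ or $B$ invertible is to use $J\ge 0 \iff J=M^{\dagger}M$. Partitioning $M=[\,M_{1}\mid M_{2}\,]$ conformally with the blocks gives $A=M_{1}^{\dagger}M_{1}$, $B=M_{2}^{\dagger}M_{2}$ and $C=M_{1}^{\dagger}M_{2}$. Polar-decomposing each column block as $M_{1}=V_{1}\sqrt{A}$ and $M_{2}=V_{2}\sqrt{B}$ with $V_{1},V_{2}$ partial isometries then yields $C=\sqrt{A}\,(V_{1}^{\dagger}V_{2})\,\sqrt{B}$, and $R:=V_{1}^{\dagger}V_{2}$ is a contraction as a product of partial isometries. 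For the converse, given $A,B\ge 0$ and $C=\sqrt{A}R\sqrt{B}$ with $\|R\|\le1$, I would exhibit the explicit congruence
\begin{equation}
J=\begin{pmatrix}\sqrt{A}&0\\0&\sqrt{B}\end{pmatrix}\begin{pmatrix}\mathbb{1}&R\\R^{\dagger}&\mathbb{1}\end{pmatrix}\begin{pmatrix}\sqrt{A}&0\\0&\sqrt{B}\end{pmatrix},
\end{equation}
and note that the middle factor is positive semidefinite precisely when $\|R\|\le1$, since its Schur complement is $\mathbb{1}-R^{\dagger}R\ge0$. Because congruence by the Hermitian matrix $\mathrm{diag}(\sqrt{A},\sqrt{B})$ preserves positivity, $J\ge0$ follows.

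For part (2) I would fix $A$ and $B$ and observe that such a $J$ is completely determined by its off-diagonal block $C$, so the set in question is in affine bijection with $\Sigma=\{\,\sqrt{A}R\sqrt{B}:\|R\|\le1\,\}$, the image of the compact convex unit ball of contractions in $\mathcal{M}_{2}(\mathbb{C})$ under the linear map $R\mapsto\sqrt{A}R\sqrt{B}$. The key input is that the extreme points of this contraction ball are exactly the unitary matrices: the preceding lemma writes any contraction as a convex combination of two unitaries, and the two unitaries it produces are distinct unless every singular value equals $1$, i.e. unless $R$ is already unitary, so no non-unitary contraction is extreme, while a unitary is readily seen to be extreme. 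Since every extreme point of the image of a compact convex set under an affine map is the image of some extreme point of the domain, every extreme point of $\Sigma$ must have the form $\sqrt{A}U\sqrt{B}$ with $U$ unitary, which is exactly the assertion.

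The main obstacle is the degenerate case where $A$ or $B$ is singular. The naive inversion $R=A^{-1/2}CB^{-1/2}$ (and the quick Schur-complement shortcut for the forward direction) breaks down there, which is why I would route part (1) through the $M^{\dagger}M$ factorisation and polar decomposition: the partial isometries absorb the kernels of $\sqrt{A}$ and $\sqrt{B}$ gracefully and still yield a genuine contraction $R$. The same singularity makes $R\mapsto\sqrt{A}R\sqrt{B}$ non-injective in part (2), so $R$ is not uniquely recoverable from $C$; this is harmless because the affine-image argument only requires that extreme points pull back to extreme points, not a bijection, so I would lean on that general principle rather than inverting the map.
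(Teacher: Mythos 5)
Your proof is correct, and for part (1) it takes a genuinely different route from the paper's. The paper proves (1) by citing Theorem 7.7.7 of Horn and Johnson --- $J\geq 0$ if and only if $\varrho(C^{\dagger}A^{-1}CB^{-1})\leq 1$ --- and then rewrites that spectral-radius condition as $\varsigma(A^{-\frac{1}{2}}CB^{-\frac{1}{2}})\leq 1$, so that $R=A^{-\frac{1}{2}}CB^{-\frac{1}{2}}$ is the desired contraction; this presupposes $A$ and $B$ invertible (it manipulates $A^{-1}$, $B^{-\frac{1}{2}}$, etc.), and both directions of the equivalence are inherited from the cited theorem rather than argued. Your route --- Gram factorisation $J=M^{\dagger}M$, polar decompositions $M_{1}=V_{1}\sqrt{A}$, $M_{2}=V_{2}\sqrt{B}$ giving $C=\sqrt{A}\,(V_{1}^{\dagger}V_{2})\,\sqrt{B}$ for the forward direction, and for the converse the congruence of $J$ to $\mathrm{diag}(\sqrt{A},\sqrt{B})\,\bigl(\begin{smallmatrix}\mathbb{1} & R\\ R^{\dagger} & \mathbb{1}\end{smallmatrix}\bigr)\,\mathrm{diag}(\sqrt{A},\sqrt{B})$ together with the Schur complement $\mathbb{1}-R^{\dagger}R\geq 0$ --- is self-contained, proves both implications explicitly, and, as you point out, remains valid when $A$ or $B$ is singular, which is exactly where the paper's formulas degenerate; what the paper's approach buys in exchange is brevity. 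For part (2) you follow essentially the same path as the paper: extreme points of the image of the compact convex contraction ball under the affine map $R\mapsto\sqrt{A}R\sqrt{B}$ are images of extreme points of the ball, and those are the unitaries. Your write-up is more careful in two respects: you derive the non-extremality of non-unitary contractions from Lemma 3 rather than citing it as well known (though you should still record the short strict-convexity argument that unitaries \emph{are} extreme, which you leave as ``readily seen''), and you correctly observe that the general extreme-point principle does not require the affine map to be injective --- a point that matters precisely in the singular case, and which the paper glosses over.
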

\begin{proof}
	The proof of (1) follows from \cite{horn1994topics}, suppose that $A,B \geq 0$ then by theorem (7.7.7) in \cite{horn1990matrix} the block matrix $J$ is positive semidefinite if and only if
	\begin{align}
	\label{eq85}
		1 \geq \varrho(C^{\dagger}A^{-1}CB^{-1})&=\varrho(C^{\dagger}A^{-1}CB^{-\frac{1}{2}}B^{-\frac{1}{2}}),\\
		&=\varrho(B^{-\frac{1}{2}}C^{\dagger}A^{-1}CB^{-\frac{1}{2}}),\hspace{5mm} \text{since }\varrho(AB)=\varrho(BA)\hspace{2mm} \forall A,B\in \mathcal{M}_{n}(\mathbb{C})\nonumber\\
		&=\varrho(B^{-\frac{1}{2}}C^{\dagger}A^{-\frac{1}{2}}A^{-\frac{1}{2}}CB^{-\frac{1}{2}}),\\
		&=\varrho((A^{-\frac{1}{2}}CB^{-\frac{1}{2}})^{\dagger}(A^{-\frac{1}{2}}CB^{-\frac{1}{2}})),\nonumber\\
		&=\varsigma(A^{-\frac{1}{2}}CB^{-\frac{1}{2}})^{2}, \hspace{5mm}\text{since } \varrho(A^{\dagger}A)=\varsigma(A)^{2},
	\end{align}
where $\varrho(\cdot)$ is the spectral radius and $\varsigma(\cdot)$ the spectral norm. Setting $R=A^{-\frac{1}{2}}CB^{-\frac{1}{2}}$ (since $\varsigma(R) \leq 1 \implies R$ is a contraction) we have $C=\sqrt{A}R\sqrt{B}$ as desired.\\
The proof of (2) follows from a well know result that the extreme points of the set of contractions in $\mathcal{M}_{2}(\mathbb{C})$ are unitary (see \cite{horn1994topics,pendersen1979}). Now if we define the affine mapping $f : R \mapsto \sqrt{A}R\sqrt{B}$ for a fixed $A,B \geq 0$, where $\mathrm{dom}(f)$ is the set of positive semidefinite matrices in $\mathcal{M}_{2}(\mathbb{C})$ and $\mathrm{Im}(f) \subset \mathcal{M}_{2}(\mathbb{C})$. Now the set of contractions in $\mathcal{M}_{2}(\mathbb{C})$ is a compact and convex set, and extreme points of the $\mathrm{Im}(f)$ are images of extreme points \cite{ruskai2002analysis}. If $R \in \mathrm{Dom}(f)$ is an extreme point then it is unitary, and setting $U=R$ we have that $C=f(R)=f(U)=\sqrt{A}U\sqrt{B}$ \cite{ruskai2002analysis,horn1994topics}.
\end{proof}
The following theorem, from \cite{ruskai2002analysis}, is stated without proof as the proof is lengthy and can be found in \cite{ruskai2002analysis}.
\begin{theorem}
	A quantum channel $T_{t}$ is a generalised extreme point of the set of all quantum channels of the same dimension if and only if $\beta(\tilde{T}_{t}) $ is of the form,
	\begin{equation}
	\label{eq86}
		\beta(\tilde{T}_{t})=\begin{pmatrix}
			A & \sqrt{A}U\sqrt{B}\\
			\sqrt{B}U^{\dagger}\sqrt{A} & B
		\end{pmatrix},
		\end{equation}
		for some unitary matrix $U$.
\end{theorem}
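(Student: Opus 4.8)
The plan is to work entirely at the level of the matrices $\beta(T_t)$ and $\beta(\tilde T_t)$, exploiting the affine isomorphism $\beta$ together with the fact that the duality $T_t \mapsto \tilde T_t$ is itself an affine bijection (an involution) carrying the set $\mathcal{Q}$ of single-qubit channels onto the set of unital completely positive maps, and hence carrying (generalised) extreme points to (generalised) extreme points. The reason for phrasing everything through the dual is that trace preservation of $T_t$ is equivalent to unitality of $\tilde T_t$, i.e.\ $\tilde T_t(\mathbb{1})=\mathbb{1}$, and in the block decomposition $\beta(\tilde T_t)=\left(\begin{smallmatrix} A & C \\ C^\dagger & B\end{smallmatrix}\right)$ this unitality is precisely the clean partial-trace condition $A+B=\mathbb{1}$ on the diagonal blocks, whereas the trace-preservation constraint written directly on $\beta(T_t)$ is less convenient. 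Thus the admissible matrices form a convex set $\mathcal{U}=\{M\geq 0 : A+B=\mathbb{1}\}$, and the theorem is the assertion that the extreme points of $\mathcal{U}$ are exactly those $M$ whose off-diagonal block has the form $C=\sqrt{A}U\sqrt{B}$ with $U$ unitary.

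First I would invoke Lemma 8(1): since $\tilde T_t$ is completely positive, $\beta(\tilde T_t)\geq 0$, so automatically $C=\sqrt{A}R\sqrt{B}$ for some contraction $R\in\mathcal{M}_2(\mathbb{C})$. This reduces the whole question to deciding when $R$ may be taken unitary. For the easy direction I would show that a genuinely non-unitary $R$ obstructs extremality: by Lemma 7 write $R=\tfrac12(U_1+U_2)$ with $U_1\neq U_2$ unitary, and set $M_i=\left(\begin{smallmatrix} A & \sqrt{A}U_i\sqrt{B} \\ \sqrt{B}U_i^\dagger\sqrt{A} & B\end{smallmatrix}\right)$. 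Each $M_i$ is positive semidefinite by Lemma 8(1) (a unitary is a contraction) and shares the diagonal blocks $A,B$, hence obeys $A+B=\mathbb{1}$ and lies in $\mathcal{U}$; since $M=\tfrac12(M_1+M_2)$ is a nontrivial convex combination, $M$, and therefore $T_t$, is not extreme.

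For the converse I would assume $C=\sqrt{A}U\sqrt{B}$ with $U$ unitary and show $M=\beta(\tilde T_t)$ is a generalised extreme point. The key structural observation is a rank bound: one checks directly that $M=VV^\dagger$ with $V=\left(\begin{smallmatrix}\sqrt{A} \\ \sqrt{B}U^\dagger\end{smallmatrix}\right)$ a $4\times 2$ matrix, so $\mathrm{rank}(M)\leq 2$ and $\tilde T_t$ has at most two Kraus operators. If $M=pM_1+(1-p)M_2$ with $M_i\in\mathcal{U}$ and $0<p<1$, positivity forces $\mathrm{range}(M_i)\subseteq\mathrm{range}(M)=\mathrm{range}(V)$, which severely constrains the admissible off-diagonal blocks; combining this with Lemma 8(2) (for fixed diagonal blocks the extreme off-diagonals are exactly $\sqrt{A}U\sqrt{B}$ with $U$ unitary) and the affine isomorphism $\beta$, one concludes that no nontrivial decomposition survives within the closure, i.e.\ $M$ is a generalised extreme point.

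The main obstacle, and the reason the theorem speaks of generalised rather than ordinary extreme points, is that the constraint $A+B=\mathbb{1}$ fixes only the sum of the diagonal blocks, not $A$ and $B$ individually, so an extreme point of $\mathcal{U}$ need not have fixed diagonal blocks, while Lemma 8(2) controls only the slices of $\mathcal{U}$ with $A,B$ held constant. One must therefore rule out decompositions that simultaneously perturb the diagonal and off-diagonal blocks, and this becomes delicate exactly when $A$ or $B$ is rank-deficient: then $\sqrt{A}$ is not invertible, the contraction $R$ is no longer uniquely recovered from $C$, and the factorisation $M=VV^\dagger$ degenerates. Treating these singular boundary cases carefully is precisely what forces the weaker notion of a generalised extreme point and constitutes the technical heart of Ruskai et al.'s argument, which is why we have imported the result rather than reproducing the full proof here.
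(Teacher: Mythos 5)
The first thing to note is that the paper does not prove this statement at all: Theorem 4 is imported verbatim from Ruskai et al.\ \cite{ruskai2002analysis}, with the explicit remark that it ``is stated without proof as the proof is lengthy.'' So there is no in-paper argument to compare against, and the only question is whether your sketch stands on its own as a proof. It does not. The converse (hard) direction --- $C=\sqrt{A}U\sqrt{B}$ with $U$ unitary implies $T_{t}$ is a generalised extreme point --- is asserted via ``one concludes that no nontrivial decomposition survives,'' and then, in your final paragraph, conceded to be ``the technical heart of Ruskai et al.'s argument'' which you ``import rather than reproduce.'' Importing the result is exactly what the paper already does; a proof attempt must close precisely that gap. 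Your appeal to Lemma 4(2) of the paper (your ``Lemma 8(2)'') cannot close it, because that lemma only controls convex decompositions in which the diagonal blocks $A,B$ are held fixed, while extremality in $\mathcal{Q}$ requires excluding decompositions that move $A$ and $B$ as well --- the very difficulty you name and then set aside. There is also a subtler issue in the ``easy'' direction: a generalised extreme point is a point of the \emph{closure} of the set of extreme points, so showing that a non-unitary $R$ makes $M$ a proper convex combination only shows $M$ is not extreme; to conclude it is not a \emph{generalised} extreme point one additionally needs that the set of matrices of the unitary form is closed, which you never argue.

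Second, the structural premise that organises your whole argument is false in this paper's conventions. You claim unitality of $\tilde{T}_{t}$ reads $A+B=\mathbb{1}$ on the diagonal blocks of $\beta(\tilde{T}_{t})$. But from (\ref{eq78}), (\ref{eq81}) and (\ref{eq88}) one has $A=\frac{1}{2}\mathrm{diag}(a^{2},c^{2})$, $B=\frac{1}{2}\mathrm{diag}(b^{2},d^{2})$, hence $A+B=\mathrm{diag}(1+m_{3},\,1-m_{3})\neq\mathbb{1}$ whenever $m_{3}=\sin(2\theta_{k})(\lambda_{3}-1)\neq0$. What the paper's normalisation actually fixes is the matrix of block traces, $\mathrm{tr}(A)=\mathrm{tr}(B)=1$ and $\mathrm{tr}(C)=0$; the clean constraint $A+B=\mathbb{1}$ belongs to Ruskai et al.'s convention, in which the blocks of the matrix are $\tilde{T}_{t}(E_{ij})$ for matrix units $E_{ij}$. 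This is not cosmetic: your easy direction replaces $C=\sqrt{A}R\sqrt{B}$ by $C_{i}=\sqrt{A}U_{i}\sqrt{B}$ and declares each $M_{i}$ admissible ``since the diagonal blocks are unchanged.'' Under the paper's convention admissibility constrains $C$ itself, through $\mathrm{tr}(C_{i})=0$, and this is not automatic for the $U_{i}$ produced by Lemma 3 (it happens to hold for the anti-diagonal matrices of Section 5.2, but not in general). As written, your $M_{i}$ need not be the $\beta$-matrices of duals of trace-preserving channels, so the decomposition need not witness non-extremality inside $\mathcal{Q}$. The fix is either to work in Ruskai et al.'s convention from the start (where your step is sound) or to add the missing trace condition --- and, in either case, to actually prove the converse direction rather than cite it.
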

We can now write find the convex decomposition of $\tau_{(\theta_{k})}$ . We start by writing the matrix $\beta(\tilde{T}_{t}^{(\theta_{k})})$ in block form as follows:
\begin{equation}
\label{eq87}
	\beta(\tilde{T}_{t}^{(\theta_{k})})=\begin{pmatrix}
		A & C\\
		C^{\dagger} & B\\
	\end{pmatrix},
\end{equation}
where 
\begin{align}
	\label{eq88}
	A=\frac{1}{2}\begin{pmatrix}
		a^{2} & 0\\
		0 & c^{2}
	\end{pmatrix}, && B=\frac{1}{2}\begin{pmatrix}
		b^{2} & 0\\
		0 & d^{2}\\
	\end{pmatrix}, && C=\frac{1}{2}\begin{pmatrix}
		0 & \lambda_{1}+\lambda_{2}\\
		\lambda_{1}-\lambda_{2} & 0\\
	\end{pmatrix}.
\end{align}
Now since $\beta(\tilde{T}_{t}^{(\theta_{k})}) \geq 0$, we can make use of Lemma 4. so that we can write $C=\sqrt{A}R\sqrt{B}$, for some contraction $R \in \mathcal{M}_{2}(\mathbb{C})$, which leads to,
 \begin{equation}
 \label{eq89}
	\beta(\tilde{T}_{t}^{(\theta_{k})})=\begin{pmatrix}
		A & \sqrt{A}R\sqrt{B}\\
		\sqrt{B}R^{\dagger}\sqrt{A} & B\\
	\end{pmatrix}.
\end{equation}
Now we can compute the contraction $R$ from the fact that $C=\sqrt{A}R\sqrt{B}$,
\begin{align}
	\label{eq90}
	R&=A^{-\frac{1}{2}}CB^{-\frac{1}{2}},\nonumber\\
	&=\sqrt{2}\begin{pmatrix}
		\frac{1}{a} & 0\\
		0 & \frac{1}{c}\\
	\end{pmatrix} \frac{1}{2}\begin{pmatrix}
		0 & \lambda_{1}+\lambda_{2}\\
		\lambda_{1}-\lambda_{2} & 0\\
	\end{pmatrix}\sqrt{2}\begin{pmatrix}
		\frac{1}{b} & 0\\
		0 & \frac{1}{d}\\
	\end{pmatrix} ,\nonumber\\
	&=\begin{pmatrix}
		0 & \frac{\lambda_{1}+\lambda_{2}}{ad}\\
		\frac{\lambda_{1}-\lambda_{2}}{bc} & 0\\
	\end{pmatrix}.
\end{align}
Now by making use of Lemma 3. we know that any contraction in $\mathcal{M}_{2}(\mathbb{C})$ can be written as a convex sum of two unitary matrices, so if we let $V=\mathbb{1}$ and $W^{\dagger}=\sigma_{1}$ in equation (\ref{eq82}) we have:
\begin{align}
	\label{eq91}
	R=\begin{pmatrix}
		0 & \cos(\phi_{1})\\
		\cos(\phi_{2}) & 0\\
	\end{pmatrix}=\begin{pmatrix}
		0 & \frac{\lambda_{1}+\lambda_{2}}{ad}\\
		\frac{\lambda_{1}-\lambda_{2}}{bc} & 0\\
	\end{pmatrix},
\end{align}
this gives the angles $\phi_{1}$ and $\phi_{2}$ as,
\begin{align}
	\label{eq92}
	\phi_{1}=\arccos(\frac{\lambda_{1}+\lambda_{2}}{ad}), && \phi_{2}=\arccos(\frac{\lambda_{1}-\lambda_{2}}{bc}).
\end{align}
By making use of equation (\ref{eq83}) we can write $R$ as,
\begin{align}
	\label{eq93}
	R=\frac{1}{2}U_{1}+\frac{1}{2}U_{2},
\end{align}
where
\begin{align}
	\label{eq94}
	U_{1}=\begin{pmatrix}
		0 & e^{i\phi_{1}}\\
		e^{i\phi_{2}} & 0\\
	\end{pmatrix}, && 	U_{2}=\begin{pmatrix}
		0 & e^{-i\phi_{1}}\\
		e^{-i\phi_{2}} & 0\\
	\end{pmatrix}.
\end{align}
We can now write,
\begin{align}
	\label{eq95}
	\beta(\hat{T}_{t}^{(\theta_{k})})&=\begin{pmatrix}
		A & \sqrt{A}(\frac{1}{2}U_{1}+\frac{1}{2}U_{2})\sqrt{B}\\
		\sqrt{B}(\frac{1}{2}U_{1}^{\dagger}+\frac{1}{2}U_{2}^{\dagger})\sqrt{A} & B\\
	\end{pmatrix},\nonumber\\
	\nonumber\\
	&=\frac{1}{2}\begin{pmatrix}
		A & \sqrt{A}U_{1}\sqrt{B}\\
		\sqrt{B}U_{1}^{\dagger}\sqrt{A} & B\\
	\end{pmatrix}+\frac{1}{2}\begin{pmatrix}
		A & \sqrt{A}U_{2}\sqrt{B}\\
		\sqrt{B}U_{2}^{\dagger}\sqrt{A} & B\\
	\end{pmatrix}.
\end{align}
If we define,
\begin{equation}
	\label{eq96}
	\beta(\tilde{T}_{(t,j)}^{(\theta_{k})})=\begin{pmatrix}
		A & \sqrt{A}U_{j}\sqrt{B}\\
		\sqrt{B}U_{j}^{\dagger}\sqrt{A} & B\\
	\end{pmatrix},
\end{equation}
for $j=1,2$, then we can write:
\begin{equation}
	\label{eq97}
	\beta(\tilde{T}_{t}^{(\theta_{k})})=\frac{1}{2}\beta(\tilde{T}_{(t,1)}^{(\theta_{k})})+\frac{1}{2}\beta(\tilde{T}_{(t,2)}^{(\theta_{k})}).
\end{equation}
Now by Theorem 4. the matrix $\beta(\tilde{T}_{(t,j)}^{(\theta_{k})})$ is a generalised extreme point so the channel $T_{(t,j)}^{(\theta_{k})}$ is a generalised extreme point in the space of channels \cite{ruskai2002analysis}, also known as a quasi extreme channel. We can now begin to find the convex decomposition of the Choi matrix $\tau_{(\theta_{k})}$ in terms of the Choi matrices of these quasi extreme channels, as follows. By making use of equation (\ref{eq79}) as well as the fact that $\tau_{(\theta_{k})}=\frac{1}{2}\beta(T_{t}^{(\theta_{k})})$ we can find the desired convex decomposition of the Choi matrix $\tau_{(\theta_{k})}$ in the following way. From equation (\ref{eq79}) and (\ref{eq97}) we have that,
\begin{align}
	\label{eq98}
	\beta(T_{t}^{(\theta_{k})})&=U_{23}\beta(\tilde{T}_{t}^{(\theta_{k})})^{*}U_{23}^{\dagger},\nonumber\\
	&=U_{23}(\frac{1}{2}\beta(\tilde{T}_{(t,1)}^{(\theta_{k})})^{*}+\frac{1}{2}\beta(\tilde{T}_{(t,2)}^{(\theta_{k})})^{*})U_{23}^{\dagger},\nonumber\\
	&=\frac{1}{2}U_{23}\beta(\tilde{T}_{(t,1)}^{(\theta_{k})})^{*}U_{23}^{\dagger}+\frac{1}{2}U_{23}\beta(\tilde{T}_{(t,2)}^{(\theta_{k})})^{*}U_{23}^{\dagger},\nonumber\\
	&=\frac{1}{2}\beta(T_{(t,1)}^{(\theta_{k})})+\frac{1}{2}\beta(T_{(t,2)}^{(\theta_{k})}),
\end{align}
where $\beta(T_{(t,j)}^{(\theta_{k})})=U_{23}\beta(\tilde{T}_{(t,j)}^{(\theta_{k})})^{*}U_{23}^{\dagger}$, for $j=1,2$. Now by defining, $\tau_{(\theta_{k},j)}=\frac{1}{2}\beta(T_{(t,j)}^{(\theta_{k})})$, we can write the convex decomposition of $\tau_{(\theta_{k})}$ as,
\begin{equation}
	\label{eq99}
	\tau_{(\theta_{k})}=\frac{1}{2}\tau_{(\theta_{k},1)}+\frac{1}{2}\tau_{(\theta_{k},2)}.
\end{equation}
Using equations (\ref{eq79}), (\ref{eq88}), (\ref{eq94}) and (\ref{eq96}) we can find the Choi matrices $\tau_{(\theta_{k},j)}$ for $j=1,2$ to be,
\begin{align}
	\label{eq100}
	\tau_{(\theta_{k},1)}&=\frac{1}{4}\begin{pmatrix}
		a^{2} & 0 & 0 & ade^{-i\phi_{1}}\\
		0 & b^{2} & bce^{i\phi_{2}} & 0 \\
		0 &  bce^{-i\phi_{2}} & c^{2} & 0\\
		ade^{i\phi_{1}} & 0 & 0 & d^{2}\\
	\end{pmatrix},\nonumber\\
	\nonumber\\
	\tau_{(\theta_{k},2)}&=\frac{1}{4}\begin{pmatrix}
		a^{2} & 0 & 0 & ade^{i\phi_{1}}\\
		0 & b^{2} & bce^{-i\phi_{2}} & 0 \\
		0 &  bce^{i\phi_{2}} & c^{2} & 0\\
		ade^{-i\phi_{1}} & 0 & 0 & d^{2}\\
	\end{pmatrix}.
\end{align}
Now by the Choi-Jamiolkowski isomorphism \cite{choi1975completely} and equation (\ref{eq99}), we have the desired convex decomposition of the channel $T_{t}^{(\theta_{k})}$ as a convex sum of two quasi extreme channels i.e.
\begin{equation}
	\label{eq101}
	T_{t}^{(\theta_{k})}(\rho)=\frac{1}{2}T_{(t,1)}^{(\theta_{k})}(\rho)+\frac{1}{2}T_{(t,2)}^{(\theta_{k})}(\rho).
\end{equation}
Using the Choi matrices in equation (\ref{eq100}), we can find the Kraus maps corresponding to the channels $T_{(t,1)}^{(\theta_{k})}$ and $T_{(t,2)}^{(\theta_{k})}$, we do this as follows. We make use of the following formula from \cite{wolf2012quantum}, for any Choi matrix $\tau=(T \otimes \mathbb{1})|\Omega\rangle\langle\Omega|$ we have $\tau \geq 0$ which implies,
\begin{equation}
	\label{eq102}
	\tau=\sum_{j}|\psi_{j}\rangle\langle \psi_{j}|=\sum_{j}(K_{j} \otimes \mathbb{1})|\Omega\rangle\langle \Omega|(K_{j} \otimes \mathbb{1})^{\dagger},
\end{equation}
where each $\ket{\psi_{j}}$ is the product of $j$-th eigenvector of $\tau$ with its corresponding eigenvalue. 
Using the formula above and equation (\ref{eq100}) we can obtain the Kraus maps for the channels $T_{(t,1)}^{(\theta_{k})}$ and $T_{(t,2)}^{(\theta_{k})}$. For the channel $T_{(t,1)}^{(\theta_{k})}$ we have,
\begin{align}
\label{eq103}
	T_{(t,1)}^{(\theta_{k})}(\rho)=\sum_{j=1}^{2}K_{j}^{(1)}\rho K_{j}^{(1)\dagger},
\end{align}
where
\begin{align}
	\label{eq104}
	K_{1}^{(1)}=\frac{1}{\sqrt{2}}\begin{pmatrix}
		ae^{-i\phi_{1}} & 0 \\
		0 & d\\
	\end{pmatrix}, && K_{2}^{(1)}=\frac{1}{\sqrt{2}}\begin{pmatrix}
		0 & be^{i\phi_{2}}\\
		c & 0\\
	\end{pmatrix}.
\end{align}
For the channel $T_{(t,2)}^{(\theta_{k})}$ we have,
\begin{align}
\label{eq105}
	T_{(t,2)}^{(\theta_{k})}(\rho)=\sum_{j=1}^{2}K_{j}^{(2)}\rho K_{j}^{(2)\dagger},
\end{align}
where
\begin{align}
	\label{eq106}
	K_{1}^{(2)}=\frac{1}{\sqrt{2}}\begin{pmatrix}
		ae^{i\phi_{1}} & 0 \\
		0 & d\\
	\end{pmatrix}, && K_{2}^{(2)}=\frac{1}{\sqrt{2}}\begin{pmatrix}
		0 & be^{-i\phi_{2}}\\
		c & 0\\
	\end{pmatrix}.
\end{align}
\subsection{Construction of circuits for the simulation of \texorpdfstring{$T_{t}^{(k)}$}{}}
Now that we have obtained a decomposition of $T_{t}^{(\theta_{k})}$ as a convex sum of quasi extreme channels we now begin to find a quantum circuit that can implement this convex sum of two quasi extreme channels so that we can simulate the channel $T_{t}^{(\theta_{k})}$ on a quantum computer. To do this we first need to find unitary operators that correspond to the quasi extreme channels, for this we make use of the Stinespring Representation of the channel \cite{stinespring1955positive}. For more information on the Stinespring representation for quantum channels refer to Appendix C.

\begin{figure*}
\centering
	\includegraphics[scale=0.30]{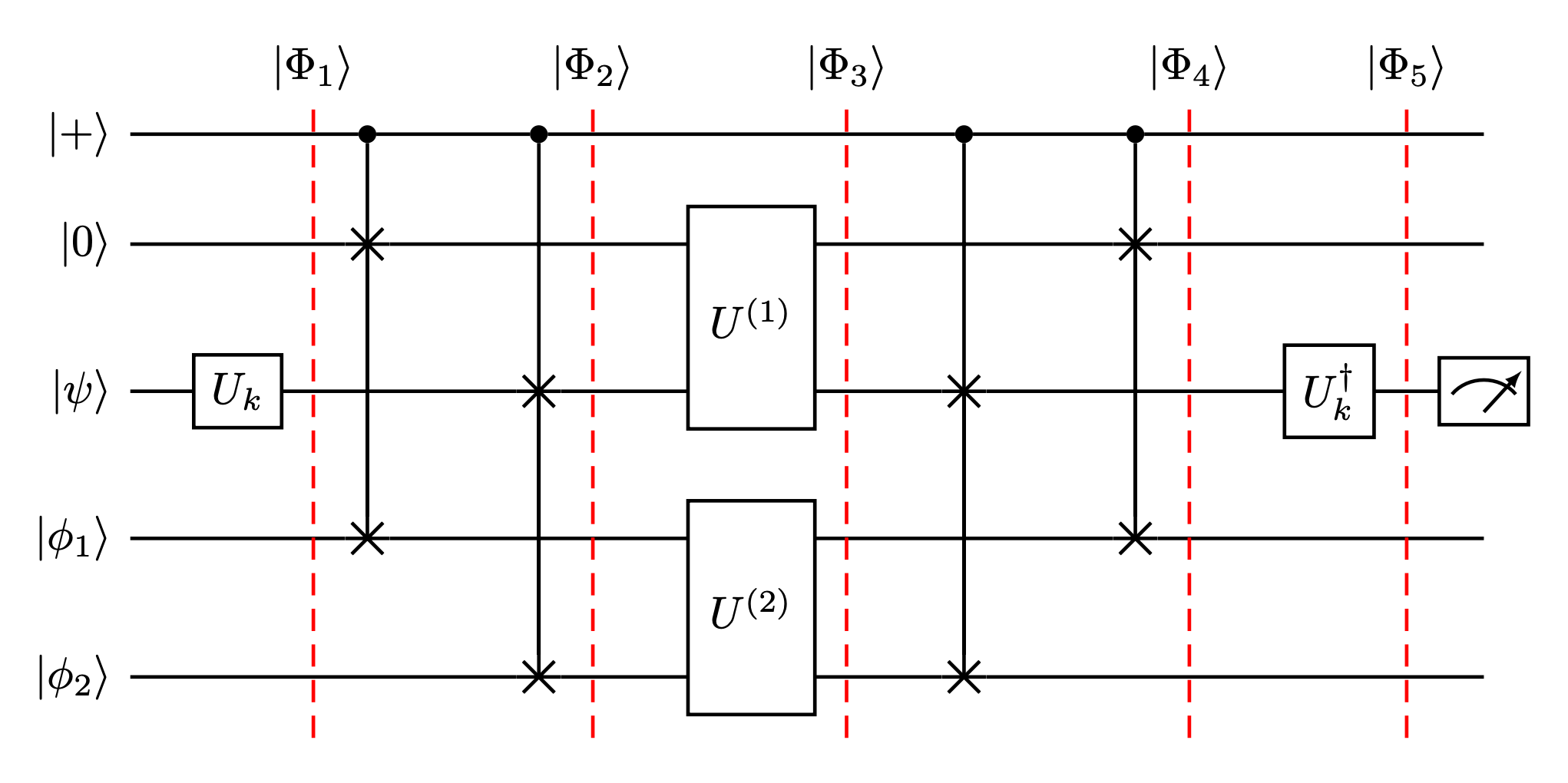}
	\caption{The quantum circuit to implement the channel $T^{(k)}_{t}$. Where $\ket{\psi}$ is the system qubit, the qubit in state $\ket{+}$ is called the ancilla lablled $a$, the qubit in $\ket{0}$ is the environment $E$ and the states $\ket{\phi_{1}}$ and $\ket{\phi_{2}}$ are arbitrary states needed for the quantum forking algorithm that implements the convex mixture of the extreme channels.}
	\label{FigConvexForking}
\end{figure*}
From the Stinespring representation and equations (\ref{eq103}) and (\ref{eq105}) we have,
\begin{equation}
	\label{eq107}
	T_{(t,1)}^{(\theta_{k})}(\rho)  =\mathrm{tr}_{E}[U^{(1)}(\ketbra{0} \otimes \rho )U^{(1)\dagger}],
	\end{equation}
	\begin{equation}
	\label{eq108}
		T_{(t,2)}^{(\theta_{k})}(\rho) =\mathrm{tr}_{E}[U^{(2)}(\ketbra{0} \otimes \rho )U^{(2)\dagger}],
	\end{equation}
where,
\begin{align}
	\label{eq109}
	U^{(1)}=\frac{1}{\sqrt{2}}\begin{pmatrix}
		ae^{-i\phi_{1}} & 0 & 0 & -c\\
		0 & d & -be^{-i\phi_{2}} & 0\\
		0 & be^{i\phi_{2}} & d & 0 \\
		c & 0 & 0 & ae^{i\phi_{1}}
	\end{pmatrix},
\end{align}
and 
\begin{align}
\label{eq110}
	U^{(2)}=\frac{1}{\sqrt{2}}\begin{pmatrix}
		ae^{i\phi_{1}} & 0 & 0 & -c\\
		0 & d & -be^{i\phi_{2}} & 0 \\
		0 & be^{-i\phi_{2}} & d & 0\\
		c & 0 & 0 & ae^{-i\phi_{1}} 
	\end{pmatrix}.
\end{align}
Now from equation (\ref{eq101}), we implement a convex sum of the channels in equations (\ref{eq107}) and (\ref{eq108}), which leads to,
\begin{align}
	\label{eq111}
	T_{t}^{(\theta_{k})}(\rho)&=\frac{1}{2}T_{(t,1)}^{(\theta_{k})}(\rho)+\frac{1}{2}T_{(t,2)}^{(\theta_{k})}(\rho),\nonumber\\
	&=\frac{1}{2}\mathrm{tr}_{E}[U^{(1)}(\ketbra{0} \otimes \rho  )U^{(1)\dagger}]+\frac{1}{2}\mathrm{tr}_{E}[U^{(2)}(\ketbra{0} \otimes \rho )U^{(2)\dagger}].\end{align}
Now to simulate the constituent channel $T_{t}^{(k)}$ we need to unitary conjugate the channel $T_{t}^{(\theta_{k})}$ bu $U_{k}$, so from equation (\ref{eq111}) we have that,
\begin{align}
\label{eq112}
	T_{t}^{(k)}(\rho)&= U_{k}^{\dagger}T_{t}^{(\theta_{k})}(U_{k}\rho U_{k}^{\dagger})U_{k},\nonumber\\
	&=U_{k}^{\dagger}(\frac{1}{2}T_{(t,1)}^{(\theta_{k})}(\rho)+\frac{1}{2}T_{(t,2)}^{(\theta_{k})}(\rho))U_{k},\nonumber\\
	&=U_{k}^{\dagger}(\frac{1}{2}\mathrm{tr}_{E}[U^{(1)}(\ketbra{0} \otimes U_{k}\rho U_{k}^{\dagger} )U^{(1)\dagger}]+\frac{1}{2}\mathrm{tr}_{E}[U^{(2)}(\ketbra{0} \otimes U_{k}\rho U_{k}^{\dagger} )U^{(2)\dagger}])U_{k},\nonumber\\
	&=\frac{1}{2}U_{k}^{\dagger}\mathrm{tr}_{E}[U^{(1)}(\ketbra{0} \otimes U_{k}\rho U_{k}^{\dagger} )U^{(1)\dagger}]U_{k}+\frac{1}{2}U_{k}^{\dagger}\mathrm{tr}_{E}[U^{(2)}(\ketbra{0} \otimes U_{k}\rho U_{k}^{\dagger} )U^{(2)\dagger}])U_{k}.
\end{align}
Equation (\ref{eq112}) tells us that we need a quantum circuit that can implement a convex sum of the two unitaries $U^{(1)}$ and $U^{(2)}$ and the unitary conjugation by $U_{k}$, for this we make use of quantum forking \cite{park2019parallel}. The theorem below shall outline how quantum forking is used to implement the convex sum of quasi-extreme channels.
\begin{theorem}
	Given a convex sum of two quantum channels as in equation (\ref{eq101}) and their Stinespring representations i.e. equations (\ref{eq107}) and (\ref{eq108}). The circuit shown in Figure \ref{FigConvexForking}. implements the convex sum of quasi extreme channels as well as unitary conjugation in equation (\ref{eq112}).
\end{theorem}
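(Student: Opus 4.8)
The plan is to feed the full multi-register input state into the circuit of Figure~\ref{FigConvexForking}, propagate it gate by gate, and then verify that the reduced state of the system register is exactly the right-hand side of equation~(\ref{eq112}). First I would fix notation for the input. With the system register $s$ carrying $\rho$, the ancilla $a$ in $\ket{+}$, the environment $E$ in $\ket{0}$, and the two forking registers prepared in $\ket{\phi_1}$ and $\ket{\phi_2}$, the input density operator is
\begin{equation*}
\rho_{\mathrm{in}}=\ketbra{+}_{a}\otimes\rho\otimes\ketbra{0}_{E}\otimes\ketbra{\phi_{1}}\otimes\ketbra{\phi_{2}}.
\end{equation*}
Writing $\ketbra{+}_{a}=\tfrac12\sum_{m,n\in\{0,1\}}\ketbra{m}{n}_{a}$ exposes the four ancilla coherences; the diagonal terms $m=n$ will generate the two quasi-extreme channels, while the burden of the proof is to show that the cross terms $m\neq n$ are annihilated by the final partial trace.

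I would then apply the gates in order: the single-qubit conjugation $U_k$ on $s$ (giving $U_k\rho U_k^{\dagger}$); the forking stage, where controlled-SWAP gates conditioned on $a$ route the system qubit into the first processing lane when $a=\ket{0}$ and into the second when $a=\ket{1}$; the uncontrolled application of $U^{(1)}$ in the first lane and $U^{(2)}$ in the second (it is precisely this use of bare rather than controlled unitaries that avoids the large C-NOT overhead); the inverse forking swaps; and finally $U_k^{\dagger}$ on $s$. For the diagonal ancilla terms, tracing out $E$ and invoking the Stinespring relations~(\ref{eq107}) and~(\ref{eq108}) yields $U_k^{\dagger}\,\mathrm{tr}_E[U^{(1)}(\ketbra{0}\otimes U_k\rho U_k^{\dagger})U^{(1)\dagger}]\,U_k$ from $m=n=0$ and the analogue with $U^{(2)}$ from $m=n=1$, each carrying the prefactor $\tfrac12$, matching equation~(\ref{eq112}) term by term.

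The heart of the argument, and the step I expect to be the main obstacle, is showing that the off-diagonal terms $\ketbra{0}{1}_{a}$ and $\ketbra{1}{0}_{a}$ drop out. The forking swaps entangle the ancilla with a which-lane record stored in the auxiliary and environment registers: in the $a=\ket{0}$ branch the processed data sits in lane one while $\ket{\phi_2}$ remains untouched in lane two, and vice versa for $a=\ket{1}$. Consequently each cross term carries overlaps between the distinct, mutually orthogonal lane configurations of the two branches, and tracing over the auxiliary and environment registers sends these coherences to zero. I would make this precise by verifying the orthogonality of the two branch configurations after the controlled-SWAPs and confirming that, after the inverse forking, the system re-emerges in the register acted on by $U_k^{\dagger}$. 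Once the cross terms are shown to vanish, the remaining partial trace over $a$, $E$, $\ket{\phi_1}$ and $\ket{\phi_2}$ factorizes and returns precisely the equal-weight convex sum in equation~(\ref{eq112}), completing the proof.
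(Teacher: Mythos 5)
Your overall strategy --- propagating the five-register input through the circuit of Figure \ref{FigConvexForking} and tracing down to the system qubit --- is exactly the paper's, and your handling of the diagonal ancilla terms is correct: they reproduce the two Stinespring dilations (\ref{eq107}) and (\ref{eq108}) with weight $\tfrac12$ each. The gap is in the step you yourself identify as the heart of the argument. You claim the cross terms $\ketbra{0}{1}_{a}$, $\ketbra{1}{0}_{a}$ vanish because the two branches leave \emph{mutually orthogonal} ``lane configurations'' in the environment and forking registers, and you propose to prove that orthogonality. This orthogonality is false in general. After the final controlled-SWAPs and the $U_{k}^{\dagger}$ gate, the two branch states of the non-ancilla registers are
\begin{equation*}
\ket{A}=\bigl[(\mathbb{1}\otimes U_{k}^{\dagger})U^{(1)}\ket{0}\ket{\psi'}\bigr]\otimes U^{(2)}\ket{\phi_{1}}\ket{\phi_{2}},
\qquad
\ket{B}=\bigl[(\mathbb{1}\otimes U_{k}^{\dagger})U^{(2)}\ket{0}\ket{\psi'}\bigr]\otimes U^{(1)}\ket{\phi_{1}}\ket{\phi_{2}},
\end{equation*}
so that $\braket{B}{A}=\bra{0}\bra{\psi'}U^{(2)\dagger}U^{(1)}\ket{0}\ket{\psi'}\cdot\bra{\phi_{1}}\bra{\phi_{2}}U^{(1)\dagger}U^{(2)}\ket{\phi_{1}}\ket{\phi_{2}}$, which has no reason to vanish: with $U^{(1)},U^{(2)}$ as in equations (\ref{eq109}) and (\ref{eq110}) and the forking registers prepared in $\ket{0}\ket{0}$, the second factor equals $\tfrac12(a^{2}e^{-2i\phi_{1}}+c^{2})$, generically nonzero; and in the degenerate case $\phi_{1}=\phi_{2}=0$ one even has $U^{(1)}=U^{(2)}$ and the two branch configurations can coincide exactly. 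So the verification you planned would fail, and no choice of $\ket{\phi_{1}},\ket{\phi_{2}}$ rescues it.

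The correct mechanism --- the one the paper uses in passing from equation (\ref{eq119}) to (\ref{eq121}) --- is simpler and involves the ancilla alone: since the ancilla only ever acts as a control, the final pure state is $\tfrac{1}{\sqrt{2}}(\ket{0}_{a}\ket{A}+\ket{1}_{a}\ket{B})$, and every cross term in $\rho_{tot}$ has the form $\ketbra{0}{1}_{a}\otimes M$ or its adjoint. Tracing over the ancilla gives $\mathrm{tr}_{a}[\ketbra{0}{1}]\,M=\braket{1}{0}\,M=0$, independently of the overlap $\braket{B}{A}$. (In the degenerate example above the coherences genuinely survive in the non-ancilla registers, yet the reduced system state is still correct --- which shows the cancellation cannot come from the registers you point to.) With this one repair --- trace the ancilla first and invoke orthonormality of $\{\ket{0},\ket{1}\}$, then trace the forking registers using $\mathrm{tr}[U^{(j)}\ketbra{\phi_{1}\phi_{2}}U^{(j)\dagger}]=1$, then trace $E$ via the Stinespring relations --- the rest of your argument goes through and coincides with the paper's proof of equation (\ref{eq112}).
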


\begin{proof}
	We can show this directly by calculating the output state of the circuit in Figure \ref{FigConvexForking}. and tracing out everything else except the system qubit. We note that the input states of the qubits are as follows: $\ket{+}=\frac{1}{\sqrt{2}}(\ket{0}+\ket{1})$ is called the ancilla qubit, $\ket{\psi}$ is the state of the system, $\ket{0}$ is the state of the environment and $\ket{\phi_{1}}$, $\ket{\phi_{2}}$ are arbitrary states that are needed to perform the quantum forking algorithm. The initial state of all the qubits is:
	\begin{align}
		\label{eq113}
		\ket{\Phi_{0}}= \ket{+}\ket{0}\ket{\psi}\ket{\phi_{1}}\ket{\phi_{2}}=\frac{1}{\sqrt{2}}(\ket{0}\ket{0}\ket{\psi}\ket{\phi_{1}}\ket{\phi_{2}}+\ket{1}\ket{0}\ket{\psi}\ket{\phi_{1}}\ket{\phi_{2}}).
	\end{align}
	We now calculate $\ket{\Phi_{1}}$, we do this by applying the operator $U_{k}$ to the the system qubit $\ket{\psi}$. Now by denoting $\ket{\psi'}=U_{k}\ket{\psi}$ we have,
\begin{align}
\label{eq114}
	\ket{\Phi_{1}}=\frac{1}{\sqrt{2}}(\ket{0}\ket{0}\ket{\psi'}\ket{\phi_{1}}\ket{\phi_{2}}+\ket{1}\ket{0}\ket{\psi'}\ket{\phi_{1}}\ket{\phi_{2}}).
\end{align}
Now a control swap gates swap the states $\ket{\psi'}$ and $\ket{\phi_{1}}$ when the ancilla is in the state $\ket{1}$ and similarly the states $\ket{0}$ and $\ket{\phi_{2}}$ are swapped when the ancilla is in the state $\ket{1}$, this leads to the state,
\begin{align}
	\label{eq115}
	\ket{\Phi_{2}}=\frac{1}{\sqrt{2}}(\ket{0}\ket{0}\ket{\psi'}\ket{\phi_{1}}\ket{\phi_{2}}+\ket{1}\ket{\phi_{1}}\ket{\phi_{2}}\ket{0}\ket{\psi'}).
\end{align}
We now apply the unitary operations $U^{(1)}$ and $U^{(2)}$ on the respective qubits, this yields,
\begin{equation}
	\label{eq116}
	\ket{\Phi_{3}}=\frac{1}{\sqrt{2}}(\ket{0}U^{(1)}(\ket{0}\ket{\psi'})U^{(2)}(\ket{\phi_{1}}\ket{\phi_{2}})+\ket{1}U^{(1)}(\ket{\phi_{1}}\ket{\phi_{2}})U^{(2)}(\ket{0}\ket{\psi'})).
\end{equation}
Now we apply the controlled swaps to swap the second and third qubits with the fourth and fifth qubits when the ancilla qubit is in the state $\ket{1}$, so we have:
\begin{equation}
	\label{eq117}
	\ket{\Phi_{4}}=\frac{1}{\sqrt{2}}(\ket{0}U^{(1)}(\ket{0}\ket{\psi'})U^{(2)}(\ket{\phi_{1}}\ket{\phi_{2}})+\ket{1}U^{(2)}(\ket{0}\ket{\psi'})U^{(1)}(\ket{\phi_{1}}\ket{\phi_{2}})).
\end{equation}
Lastly we apply the operation $U_{k}^{\dagger}$ to the system qubit which yields,
\begin{equation}
	\label{eq118}
	\ket{\Phi_{5}}=\frac{1}{\sqrt{2}}(\ket{0}(\mathbb{1} \otimes U_{k}^{\dagger})U^{(1)}(\ket{0}\ket{\psi'})U^{(2)}(\ket{\phi_{1}}\ket{\phi_{2}})+\ket{1}(\mathbb{1} \otimes U_{k}^{\dagger} )U^{(2)}(\ket{0}\ket{\psi'})U^{(1)}(\ket{\phi_{1}}\ket{\phi_{2}})).
\end{equation}
We now construct the density matrix that corresponds to the state of all the qubits after performing the circuit, denoted by $\rho_{tot}=\ketbra{\Phi_{5}}$, so we have,
\begin{align}
\label{eq119}
	\rho_{tot}&=\frac{1}{2} \big( \ketbra{0}{0}\otimes (\mathbb{1} \otimes U_{k}^{\dagger} )U^{(1)}(\ket{0}\ket{\psi'})(\bra{0}\bra{\psi'})U^{(1)\dagger}(\mathbb{1} \otimes U_{k})\otimes U^{(2)}(\ket{\phi_{1}}\ket{\phi_{2}})(\bra{\phi_{1}}\bra{\phi_{2}})U^{(2)\dagger}\nonumber\\
	\nonumber\\
	&+\ketbra{0}{1}\otimes (\mathbb{1} \otimes U_{k}^{\dagger} )U^{(1)}(\ket{0}\ket{\psi'})(\bra{0}\bra{\psi'})U^{(2)\dagger}(\mathbb{1} \otimes U_{k})\otimes U^{(2)}(\ket{\phi_{1}}\ket{\phi_{2}})(\bra{\phi_{1}}\bra{\phi_{2}})U^{(1)\dagger}\nonumber\\
	\nonumber\\
	&+\ketbra{1}{0}\otimes (\mathbb{1} \otimes U_{k}^{\dagger} )U^{(2)}(\ket{0}\ket{\psi'})(\bra{0}\bra{\psi'})U^{(1)\dagger}(\mathbb{1} \otimes U_{k})\otimes U^{(1)}(\ket{\phi_{1}}\ket{\phi_{2}})(\bra{\phi_{1}}\bra{\phi_{2}})U^{(2)\dagger}\nonumber\\
	\nonumber\\
	&+ \ketbra{1}{1}\otimes (\mathbb{1} \otimes U_{k}^{\dagger} )U^{(2)}(\ket{0}\ket{\psi'})(\bra{0}\bra{\psi'})U^{(2)\dagger}(\mathbb{1} \otimes U_{k})\otimes U^{(1)}(\ket{\phi_{1}}\ket{\phi_{2}})(\bra{\phi_{1}}\bra{\phi_{2}})U^{(1)\dagger} \big).\nonumber\\
	\nonumber\\
\end{align}
From the density matrix of the total system of qubits we can now trace out the environment $E$, the ancilla $a$ and the qubits in an arbitrary state, used for the forking, $\phi_{1}$ and $\phi_{2}$. Doing this partial trace leaves us with the state of the system qubit after performing the circuit. We can do this in a few steps: We can start by tracing out the ancilla, we do this by taking the trace over the first tensor factor from the left in $\rho_{tot}$ while using the fact that $\mathrm{tr}[\ketbra{a}{b}]=\braket{a}{b}$ this leads to,
\begin{align}
	\label{eq120}
	\mathrm{tr}_{a}[\rho_{tot}]&=\frac{1}{2} \big( \braket{0}{0}\otimes (\mathbb{1} \otimes U_{k}^{\dagger})U^{(1)}(\ket{0}\ket{\psi'})(\bra{0}\bra{\psi'})U^{(1)\dagger}(\mathbb{1} \otimes U_{k})\otimes U^{(2)}(\ket{\phi_{1}}\ket{\phi_{2}})(\bra{\phi_{1}}\bra{\phi_{2}})U^{(2)\dagger}\nonumber\\
	\nonumber\\
&+\braket{0}{1}\otimes (\mathbb{1} \otimes U_{k}^{\dagger})U^{(1)}(\ket{0}\ket{\psi'})(\bra{0}\bra{\psi'})U^{(2)\dagger}(\mathbb{1} \otimes U_{k})\otimes U^{(2)}(\ket{\phi_{1}}\ket{\phi_{2}})(\bra{\phi_{1}}\bra{\phi_{2}})U^{(1)\dagger}\nonumber\\
	\nonumber\\
	&+\braket{1}{0}\otimes (\mathbb{1} \otimes U_{k}^{\dagger})U^{(2)}(\ket{0}\ket{\psi'})(\bra{0}\bra{\psi'})U^{(1)\dagger}(\mathbb{1} \otimes U_{k})\otimes U^{(1)}(\ket{\phi_{1}}\ket{\phi_{2}})(\bra{\phi_{1}}\bra{\phi_{2}})U^{(2)\dagger}\nonumber\\
	\nonumber\\
	&+\braket{1}{1}\otimes (\mathbb{1} \otimes U_{k}^{\dagger})U^{(2)}(\ket{0}\ket{\psi'})(\bra{0}\bra{\psi'})U^{(2)\dagger}(\mathbb{1} \otimes U_{k})\otimes U^{(1)}(\ket{\phi_{1}}\ket{\phi_{2}})(\bra{\phi_{1}}\bra{\phi_{2}})U^{(1)\dagger} \big).\nonumber\\
	\nonumber\\
	\end{align}

By using the orthonormality of the states $\{\ket{0},\ket{1}\}$, we can simplify equation (\ref{eq120}),
\begin{align}
	\label{eq121}
	\mathrm{tr}_{a}[\rho_{tot}]&=\frac{1}{2} \big( (\mathbb{1} \otimes U_{k}^{\dagger})U^{(1)}(\ket{0}\ket{\psi'})(\bra{0}\bra{\psi'})U^{(1)\dagger}(\mathbb{1} \otimes U_{k})\otimes U^{(2)}(\ket{\phi_{1}}\ket{\phi_{2}})(\bra{\phi_{1}}\bra{\phi_{2}})U^{(2)\dagger}\nonumber\\
	&+ (\mathbb{1} \otimes U_{k}^{\dagger})U^{(2)}(\ket{0}\ket{\psi'})(\bra{0}\bra{\psi'})U^{(2)\dagger}(\mathbb{1} \otimes U_{k})\otimes U^{(1)}(\ket{\phi_{1}}\ket{\phi_{2}})(\bra{\phi_{1}}\bra{\phi_{2}})U^{(1)\dagger} \big).\nonumber\\
\end{align}
Now let us trace out the arbitrary qubits $\phi_{1}$ and $\phi_{2}$,
\begin{align}
	\label{eq122}
	\mathrm{tr}&_{\phi_{1}+\phi_{2}+a}[\rho_{tot}]=\mathrm{tr}_{\phi_{1}+\phi_{2}}[\mathrm{tr}_{a}[\rho_{tot}]],\nonumber\\
	&=\frac{1}{2} \big( (\mathbb{1} \otimes U_{k}^{\dagger})U^{(1)}(\ket{0}\ket{\psi'})(\bra{0}\bra{\psi'})U^{(1)\dagger}(\mathbb{1} \otimes U_{k})\otimes \mathrm{tr}[U^{(2)}(\ket{\phi_{1}}\ket{\phi_{2}})(\bra{\phi_{1}}\bra{\phi_{2}})U^{(2)\dagger}]\nonumber\\
	&+(\mathbb{1} \otimes U_{k}^{\dagger})U^{(2)}(\ket{0}\ket{\psi'})(\bra{0}\bra{\psi'})U^{(2)\dagger}(\mathbb{1} \otimes U_{k})\otimes \mathrm{tr}[U^{(1)}(\ket{\phi_{1}}\ket{\phi_{2}})(\bra{\phi_{1}}\bra{\phi_{2}})U^{(1)\dagger}] \big).\nonumber\\
\end{align}
Using the fact that,
\begin{equation}
	\label{eq123}
	\mathrm{tr}[U^{(2)}(\ket{\phi_{1}}\ket{\phi_{2}})(\bra{\phi_{1}}\bra{\phi_{2}})U^{(2)\dagger}]=1=\mathrm{tr}[U^{(1)}(\ket{\phi_{1}}\ket{\phi_{2}})(\bra{\phi_{1}}\bra{\phi_{2}})U^{(1)\dagger}],
\end{equation}
we can simplify equation (\ref{eq122}),
\begin{align}
	\label{eq124}
	\mathrm{tr}_{\phi_{1}+\phi_{2}+a}[\rho_{tot}]&=\frac{1}{2} \big( (\mathbb{1} \otimes U_{k}^{\dagger})U^{(1)}(\ket{0}\ket{\psi'})(\bra{0}\bra{\psi'})U^{(1)\dagger}(\mathbb{1} \otimes U_{k})\nonumber\\
	&+(\mathbb{1} \otimes U_{k}^{\dagger})U^{(2)}(\ket{0}\ket{\psi'})(\bra{0}\bra{\psi'})U^{(2)\dagger}(\mathbb{1} \otimes U_{k})\big).
\end{align}
We can now define the density matrix $\rho_{SE}=\mathrm{tr}_{\phi_{1}+\phi_{2}+a}[\rho_{tot}]$, which denotes the state of the system qubit and environment. We can now obtain the state of the system $\rho_{S}$, after performing the circuit, by tracing out the environment. From equation (\ref{eq124}) we have,
\begin{align}
\label{eq125}
	\rho_{S}&=\mathrm{tr}_{E}[\rho_{SE}],\nonumber\\
	&=\frac{1}{2}\big(\mathrm{tr}_{E}[(\mathbb{1} \otimes U_{k}^{\dagger})U^{(1)}(\ket{0}\ket{\psi'})(\bra{0}\bra{\psi'})U^{(1)\dagger}(\mathbb{1} \otimes U_{k})]\nonumber\\
	&+\mathrm{tr}_{E}[(\mathbb{1} \otimes U_{k}^{\dagger})U^{(2)}(\ket{0}\ket{\psi'})(\bra{0}\bra{\psi'})U^{(2)\dagger}(\mathbb{1} \otimes U_{k})\big).
\end{align}
Now by using the states $\{\ket{0},\ket{1}\}$ as an orthonormal basis for the environment we can write equation (\ref{eq125}) as,
\begin{align}
	\label{eq126}
	\rho_{S}&=\frac{1}{2}\bigg( \sum_{j=0}^{1}( \bra{j} \otimes \mathbb{1})(\mathbb{1} \otimes U_{k}^{\dagger})U^{(1)}(\ket{0}\ket{\psi'})(\bra{0}\bra{\psi'})U^{(1)\dagger}(\mathbb{1} \otimes U_{k})( \ket{j} \otimes \mathbb{1})\nonumber\\
	&+\sum_{j'=0}^{1}(\bra{j'} \otimes \mathbb{1})(\mathbb{1} \otimes U_{k}^{\dagger})U^{(2)}(\ket{0}\ket{\psi'})(\bra{0}\bra{\psi'})U^{(2)\dagger}(\mathbb{1} \otimes U_{k})(  \ket{j'} \otimes \mathbb{1}) \bigg).
\end{align}
By using the fact that $(\bra{j} \otimes \mathbb{1})(\mathbb{1} \otimes U_{k}^{\dagger})=(\mathbb{1} \otimes U_{k}^{\dagger})(\bra{j} \otimes \mathbb{1})$ and $(\mathbb{1} \otimes U_{k})( \ket{j} \otimes \mathbb{1})=(\ket{j} \otimes \mathbb{1})(\mathbb{1} \otimes U_{k})$, we can write $\rho_{S}$ as,
\begin{align}
	\label{eq127}
	\rho_{S}&=\frac{1}{2}\bigg( \sum_{j=0}^{1}(\mathbb{1} \otimes U_{k}^{\dagger})( \bra{j} \otimes \mathbb{1})U^{(1)}(\ket{0}\ket{\psi'})(\bra{0}\bra{\psi'})U^{(1)\dagger}( \ket{j} \otimes \mathbb{1})(U_{k}\otimes \mathbb{1})\nonumber\\
	&+\sum_{j'=0}^{1}(\mathbb{1} \otimes U_{k}^{\dagger})(\bra{j'} \otimes \mathbb{1})U^{(2)}(\ket{0}\ket{\psi'})(\bra{0}\bra{\psi'})U^{(2)\dagger}(  \ket{j'} \otimes \mathbb{1})(U_{k}\otimes \mathbb{1}) \bigg),\nonumber\\
	&=\frac{1}{2}\big( U_{k}^{\dagger}\mathrm{tr}_{E}[U^{(1)}(\ket{0}\ket{\psi'})(\bra{0}\bra{\psi'})U^{(1)\dagger}]U_{k}+U_{k}^{\dagger}\mathrm{tr}_{E}[U^{(2)}(\ket{0}\ket{\psi'})(\bra{0}\bra{\psi'})U^{(2)\dagger}]U_{k}\big).\end{align}
By observing that $(\ket{0}\ket{\psi'})(\bra{0}\bra{\psi'})=\ketbra{0}{0} \otimes \ketbra{\psi'}{\psi'} $, and from the definition of $\ket{\psi'}$ we have that $\ketbra{0}{0} \otimes \ketbra{\psi'}{\psi'}=\ketbra{0} \otimes U_{k}\ketbra{\psi}U_{k}^{\dagger}$. If we define $\rho=\ketbra{\psi}$, we can write $(\ket{0}\ket{\psi'})(\bra{0}\bra{\psi'})=\ketbra{0} \otimes U_{k} \rho U_{k}^{\dagger}$, this allows us to write equation (\ref{eq127}) as,
\begin{align}
\label{eq128}
	\rho_{S}&=\frac{1}{2}\big( U_{k}^{\dagger}\mathrm{tr}_{E}[U^{(1)}(\ketbra{0} \otimes U_{k}\rho U_{k}^{\dagger} )U^{(1)\dagger}]U_{k}+U_{k}^{\dagger}\mathrm{tr}_{E}[U^{(2)}(\ketbra{0} \otimes U_{k}\rho U_{k}^{\dagger} )U^{(2)\dagger}]U_{k}\big),\nonumber\\
	&=\frac{1}{2}\big(U_{k}^{\dagger}T_{(t,1)}^{(\theta_{k})}(U_{k}\rho U_{k}^{\dagger})U_{k} +U_{k}^{\dagger}T_{(t,2)}^{(\theta_{k})}(U_{k}\rho U_{k}^{\dagger})U_{k}\big),\nonumber\\
	&=U_{k}^{\dagger}\big( \frac{1}{2}T_{(t,1)}^{(\theta_{k})}(U_{k}\rho U_{k}^{\dagger})+\frac{1}{2}T_{(t,2)}^{(\theta_{k})}(U_{k}\rho U_{k}^{\dagger})\big)U_{k},\nonumber\\
	&=U_{k}^{\dagger}T_{t}^{(\theta_{k})}(U_{k}\rho U_{k}^{\dagger})U_{k},\nonumber\\
	&=T_{t}^{(k)}(\rho).
\end{align}
We get the second equality in equation (\ref{eq128}) from the Stinespring representation of channels $T_{(t,1)}^{(\theta_{k})}$ and $T_{(t,2)}^{(\theta_{k})}$ and the last equality follows from equation (\ref{eq112}). Hence we have shown that the circuit in Figure \ref{FigConvexForking}. implements the convex sum of quasi extreme channels and unitary conjugation in equation (\ref{eq112}), which completes the proof.
\end{proof}
Now that we have a circuit that can implement the constituent channel $T_{t}^{(k)}$ i.e. Figure \ref{FigConvexForking}. We need to decompose the unitary operations $U^{(1)}$ and $U^{(2)}$ into single qubit gates and controlled not (CNOT) gates. Before we can do this we observe that, $\frac{1}{2}(a^{2}+c^{2})=1$ and $\frac{1}{2}(b^{2}+d^{2})=1$ so we can define,
\begin{align}
\label{eq129}
	\cos(\alpha)&=\frac{1}{\sqrt{2}}a, && \sin(\alpha)=\frac{1}{\sqrt{2}}c,\nonumber\\
	\cos(\beta)&=\frac{1}{\sqrt{2}}b, && \sin(\beta)=\frac{1}{\sqrt{2}}d.
\end{align}
This allows us to rewrite the unitary operations $U^{(1)}$ and $U^{(2)}$ as,
\begin{equation}
	\label{eq130}
	U^{(1)}=\begin{pmatrix}
		e^{-i\phi_{1}}\cos(\alpha) & 0 & 0 & -\sin(\alpha)\\
		0 & \sin(\beta) & -e^{-i\phi_{2}}\cos(\beta) & 0 \\
		0 & e^{i\phi_{2}}\cos(\beta) & \sin(\beta) & 0\\
		\sin(\alpha) & 0 & 0 & e^{i\phi_{1}}\cos(\alpha)
	\end{pmatrix},
\end{equation}
and
\begin{equation}
	\label{eq131}
	U^{(2)}=\begin{pmatrix}
		e^{i\phi_{1}}\cos(\alpha) & 0 & 0 & -\sin(\alpha)\\
		0 & \sin(\beta) & -e^{i\phi_{2}}\cos(\beta) & 0 \\
		0 & e^{-i\phi_{2}}\cos(\beta) & \sin(\beta) & 0 \\
		\sin(\alpha) & 0 & 0 & e^{-i\phi_{1}}\cos(\alpha)
	\end{pmatrix}.
\end{equation}
Now let us start by decomposing the unitary matrix $U^{(1)}$, we observe that,
\begin{align}
	\label{eq132}
	U^{(1)}=U_{A}^{(1)}U_{B}^{(1)},
\end{align}
where
\begin{align}
	\label{eq133}
	U_{A}^{(1)}=\begin{pmatrix}
		e^{-i\phi_{1}}\cos(\alpha) & 0 & 0 & -\sin(\alpha)\\
		0 & 1 & 0 & 0 \\
		0 & 0 & 1 & 0\\
		\sin(\alpha) & 0 & 0 & e^{i\phi_{1}}\cos(\alpha)
	\end{pmatrix},
\end{align}
and 
\begin{align}
	\label{eq134}
	U_{B}^{(1)}=\begin{pmatrix}
		1 & 0 & 0 &0\\
		0 & \sin(\beta) & -e^{-i\phi_{2}}\cos(\beta) & 0 \\
		0 & e^{i\phi_{2}}\cos(\beta) & \sin(\beta) & 0\\
		0 & 0 & 0 & 1
	\end{pmatrix}.
\end{align}
Now using the methods outlined in Appendix B we can find quantum circuits for the operations $U_{A}^{(1)}$ and $U_{B}^{(2)}$. First we define the following $2 \times 2$ matrices, which will assist us in the decomposition,
\begin{align}
	\label{eq135}
	\Tilde{U}_{A}^{(1)}=\begin{pmatrix}
		e^{-i\phi_{1}}\cos(\alpha) & -\sin(\alpha)\\
		\sin(\alpha) & e^{i\phi_{1}}\cos(\alpha)
	\end{pmatrix}, && \Tilde{U}_{B}^{(1)}=\begin{pmatrix}
		\sin(\beta) & -e^{-i\phi_{2}}\cos(\beta)\\
		e^{i\phi_{2}}\cos(\beta) & \sin(\beta)
	\end{pmatrix}.
\end{align}
Now for the unitary matrix $U_{A}^{(1)}$ we can write,
\begin{align}
\label{eq136}
	U_{A}^{(1)}&=\begin{pmatrix}
		0 & 1 & 0 & 0\\
		1 & 0 & 0 & 0\\
		0 & 0 & 1 & 0\\
		0 & 0 & 0 & 1
	\end{pmatrix}\begin{pmatrix}
		1 & 0 & 0 & 0 \\
		0 & e^{-i\phi_{1}}\cos(\alpha) & 0 & -\sin(\alpha)\\
		0 & 0 & 1 & 0\\
		0 & \sin(\alpha) & 0 & e^{i\phi_{1}}\cos(\alpha)
	\end{pmatrix}\begin{pmatrix}
		0 & 1 & 0 & 0\\
		1 & 0 & 0 & 0\\
		0 & 0 & 1 & 0\\
		0 & 0 & 0 & 1
	\end{pmatrix}\nonumber\\
	\nonumber\\
	&=(\ketbra{0}\otimes X +\ketbra{1}\otimes \mathbb{1})(\mathbb{1}\otimes \ketbra{0}+\Tilde{U}_{A}^{(1)}\otimes \ketbra{1})(\ketbra{0}\otimes X +\ketbra{1}\otimes \mathbb{1}).
\end{align}
From equation (\ref{eq136}) we can find the circuit that corresponds to $U_{A}^{(1)}$, and it is shown in Figure \ref{FigcircUAB} (a). Now using the same method above we can find a circuit that implements the operation $U_{B}^{(1)}$,
\begin{align}
	\label{eq137}
	U_{B}^{(1)}&=\begin{pmatrix}
		0 & 1 & 0 & 0\\
		1 & 0 & 0 & 0\\
		0 & 0 & 1 & 0\\
		0 & 0 & 0 & 1
	\end{pmatrix}\begin{pmatrix}
	   \sin(\beta) & 0 & -e^{-i\phi_{2}}\cos(\beta) & 0 \\
		0 & 1 & 0 &0\\
	    e^{i\phi_{2}}\cos(\beta) & 0 & \sin(\beta) & 0\\
		0 & 0 & 0 & 1
	\end{pmatrix}\begin{pmatrix}
		0 & 1 & 0 & 0\\
		1 & 0 & 0 & 0\\
		0 & 0 & 1 & 0\\
		0 & 0 & 0 & 1
	\end{pmatrix},\nonumber\\
	\nonumber\\
	&=(\ketbra{0}\otimes X +\ketbra{1}\otimes \mathbb{1})(\Tilde{U}_{B}^{(1)}\otimes \ketbra{0}+\mathbb{1}\otimes \ketbra{1})(\ketbra{0}\otimes X +\ketbra{1}\otimes \mathbb{1}).
\end{align}
The circuit that implements $U_{B}^{(1)}$ can be seen in Figure \ref{FigcircUAB} (b). Now we can put these circuits together to get the circuit that implements the unitary operation $U^{(1)}$, this can be seen in Figure \ref{FigcircU1}. Since $(\ketbra{0}\otimes X +\ketbra{1} \otimes \mathbb{1})^{2}=\mathbb{1}$, we drop the two open CNOT's in between the circuits for $U_{B}^{(1)}$ and $U_{A}^{(1)}$, and get the simplified circuit for $U^{(1)}$ in Figure \ref{FigcircU1}.
\begin{figure*}
	\centering
	\includegraphics[scale=0.35]{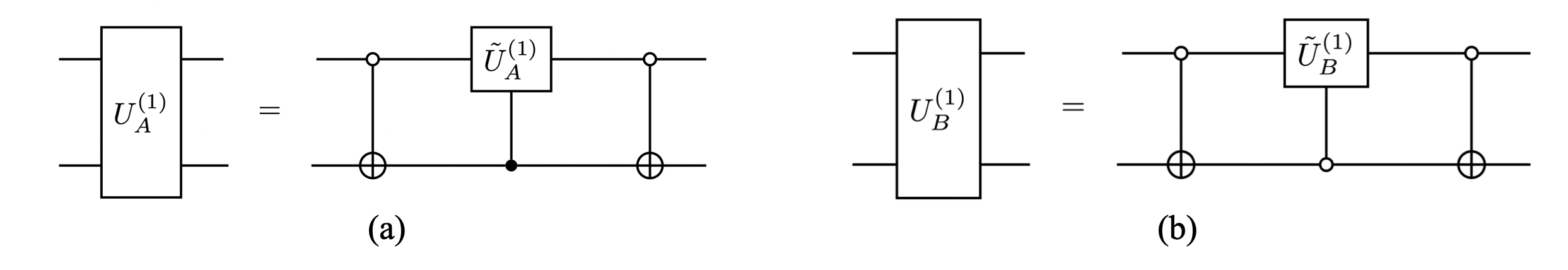}
	\caption{Circuit that implements the operators $U_{A}^{(1)}$ and $U_{A}^{(2)}$.}
	\label{FigcircUAB}
\end{figure*}

\begin{figure*}
	\centering
	\includegraphics[scale=0.35]{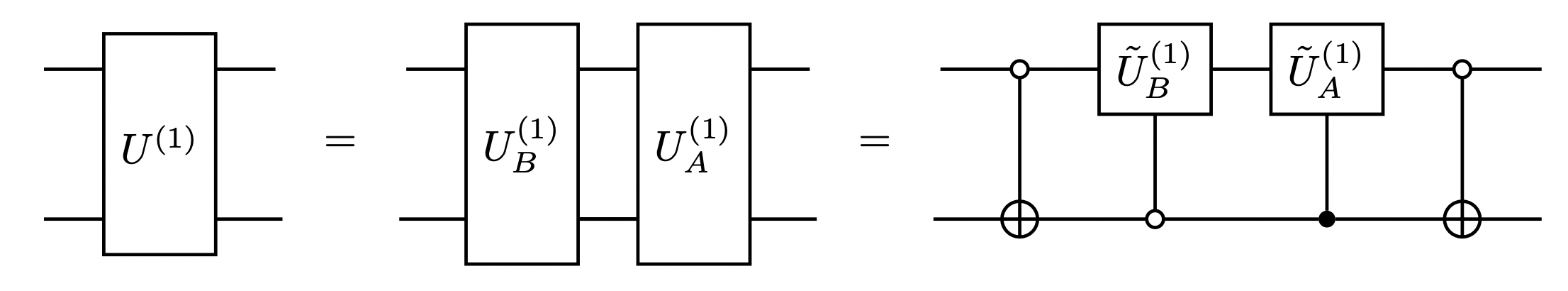}
	\caption{Circuit that implements the opertation $U^{(1)}$.}
	\label{FigcircU1}
\end{figure*}
We now want to decompose the controlled $\Tilde{U}_{A}^{(1)}$ operation into single qubit gates and CNOT's. We use the results from Appendix B to decompose this operation as follows. Since $\mathrm{det}(\Tilde{U}_{A}^{(1)})=1$, this implies that $\Tilde{U}_{A}^{(1)}\in \mathrm{SU}(2)$, them by Lemma B.2. we have,
\begin{align}
	\label{eq138}
	\Tilde{U}_{A}^{(1)}&=\begin{pmatrix}
		e^{\frac{-i\phi_{1}}{2}} & 0\\
		0 & e^{\frac{i\phi_{1}}{2}}
	\end{pmatrix}\begin{pmatrix}
		\cos(\alpha) & -\sin(\alpha)\\
		\sin(\alpha) & \cos(\alpha)
	\end{pmatrix}\begin{pmatrix}
		e^{\frac{-i\phi_{1}}{2}} & 0\\
		0 & e^{\frac{i\phi_{1}}{2}}
	\end{pmatrix},\nonumber\\
	\nonumber\\
	&=R_{z}(\phi_{1})R_{y}(2\alpha)R_{z}(\phi_{1}).
	\end{align}
Now from Lemma B.4. we can decompose the controlled $\Tilde{U}_{A}^{(1)}$ operation into single qubit and CNOT gates as seen in Figure \ref{FigcircCUA1}.\\
\begin{figure*}
	\centering
	\includegraphics[scale=0.3]{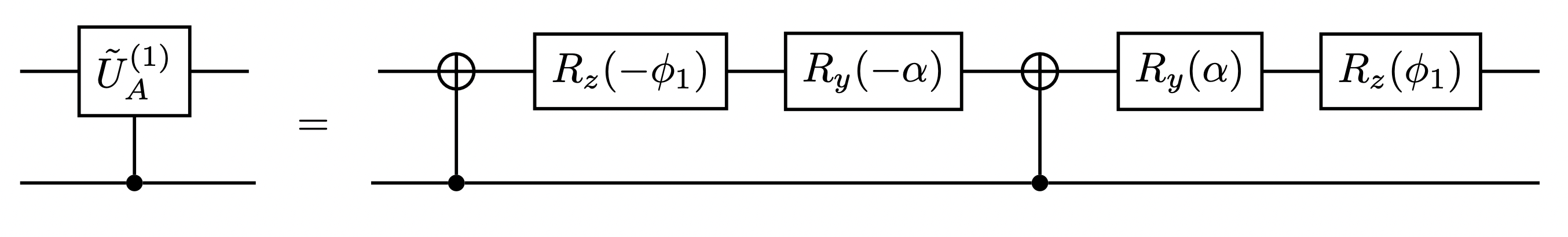}
	\caption{Circuit that implements the controlled $\Tilde{U}_{A}^{(1)}$operation.}
	\label{FigcircCUA1}
\end{figure*}
We can now decompose the open controlled $\Tilde{U}_{B}^{(1)}$ operation into single qubit gates and CNOT's. This is done in the same way as for the controlled $\Tilde{U}_{A}^{(1)}$. Since $\mathrm{det}(\Tilde{U}_{B}^{(1)})=1$, this implies that $\Tilde{U}_{B}^{(1)}\in \mathrm{SU}(2)$, them by Lemma B.2. we have,
\begin{align}
	\label{eq139}
	\Tilde{U}_{B}^{(1)}&=\begin{pmatrix}
		e^{\frac{-i\phi_{2}}{2}} & 0\\
		0 & e^{\frac{i\phi_{2}}{2}}
	\end{pmatrix}\begin{pmatrix}
		\sin(\beta) & -\cos(\beta)\\
		\cos(\beta) & \sin(\beta)
	\end{pmatrix}\begin{pmatrix}
		e^{\frac{i\phi_{2}}{2}} & 0\\
		0 & e^{\frac{-i\phi_{2}}{2}}
	\end{pmatrix},\nonumber\\
	\nonumber\\
	&=\begin{pmatrix}
		e^{\frac{-i\phi_{2}}{2}} & 0\\
		0 & e^{\frac{i\phi_{2}}{2}}
	\end{pmatrix}\begin{pmatrix}
		\cos(\frac{\pi}{2}-\beta) & -\sin(\frac{\pi}{2}-\beta)\\
		\sin(\frac{\pi}{2}-\beta) &\cos(\frac{\pi}{2}-\beta)
	\end{pmatrix}\begin{pmatrix}
		e^{\frac{i\phi_{2}}{2}} & 0\\
		0 & e^{\frac{-i\phi_{2}}{2}}
	\end{pmatrix},\nonumber\\
	\nonumber\\
	&=R_{z}(\phi_{2})R_{y}(2(\frac{\pi}{2}-\beta))R_{z}(-\phi_{2}).
\end{align}
Now from Lemma B.4. we have the circuit for the open controlled $\Tilde{U}_{B}^{(1)}$ operation in Figure \ref{FigcircCUB1}. Let us now decompose the unitary operation $U^{(2)}$. From equation (\ref{eq131}) we observe that,
\begin{figure*}
	\centering
	\includegraphics[scale=0.3]{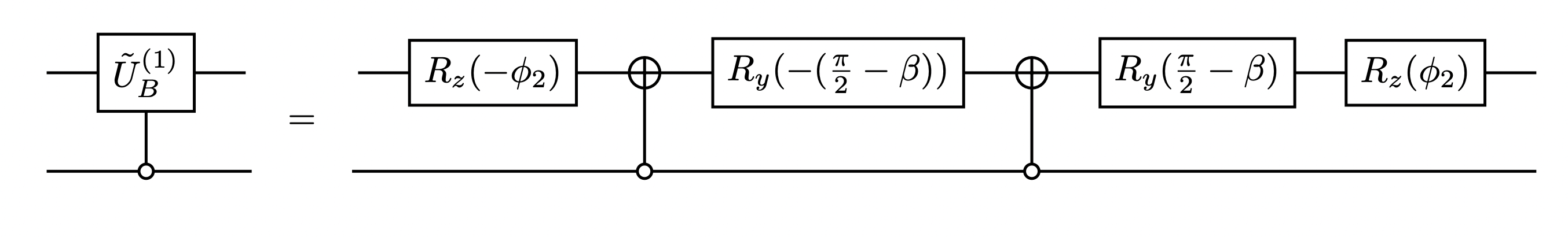}
	\caption{Circuit that implements the open controlled $\Tilde{U}_{B}^{(1)}$operation.}
	\label{FigcircCUB1}
\end{figure*}
\begin{align}
	\label{eq140}
	U^{(2)}=U_{A}^{(2)}U_{B}^{(2)},
\end{align}
where
\begin{align}
	\label{eq141}
	U_{A}^{(2)}=\begin{pmatrix}
		e^{i\phi_{1}}\cos(\alpha) & 0 & 0 & -\sin(\alpha)\\
		0 & 1 & 0 & 0 \\
		0 & 0 & 1 & 0 \\
		\sin(\alpha) & 0 & 0 & e^{-i\phi_{1}}\cos(\alpha)
	\end{pmatrix},
\end{align}
\begin{align}
	\label{eq142}
	U_{B}^{(2)}=\begin{pmatrix}
		1 & 0 & 0 & 0\\
		0 & \sin(\beta) & -e^{i\phi_{2}}\cos(\beta) & 0 \\
		0 & e^{-i\phi_{2}}\cos(\beta) & \sin(\beta) & 0 \\
		0 & 0 & 0 & 1
	\end{pmatrix}.
\end{align}
We can now define the following matrices,
\begin{align}
	\label{eq143}
	\Tilde{U}_{A}^{(2)}=\begin{pmatrix}
		e^{i\phi_{1}}\cos(\alpha) & -\sin(\alpha)\\
		\sin(\alpha) & e^{-i\phi_{1}}\cos(\alpha)
	\end{pmatrix}, && \Tilde{U}_{B}^{(2)}=\begin{pmatrix}
		\sin(\beta) & -e^{i\phi_{2}}\cos(\beta)\\
		e^{-i\phi_{2}}\cos(\beta) & \sin(\beta)
	\end{pmatrix}.
\end{align}
From the matrices in equation (\ref{eq143}), we can now decompose $U_{A}^{(2)}$ as follows,
\begin{align}
	\label{eq144}
	U_{A}^{(2)}=(\ketbra{0}\otimes X +\ketbra{1}\otimes \mathbb{1})(\mathbb{1} \otimes \ketbra{0} + \Tilde{U}_{A}^{(2)} \otimes \ketbra{1})(\ketbra{0}\otimes X +\ketbra{1}\otimes \mathbb{1}).
\end{align}
Figure \ref{FigcircUAB2}. (a) shows the circuit that implements $U_{A}^{(2)}$. Now let us decompose the unitary operation $U_{B^{(2)}}$,
\begin{figure*}
	\centering
	\includegraphics[scale=0.35]{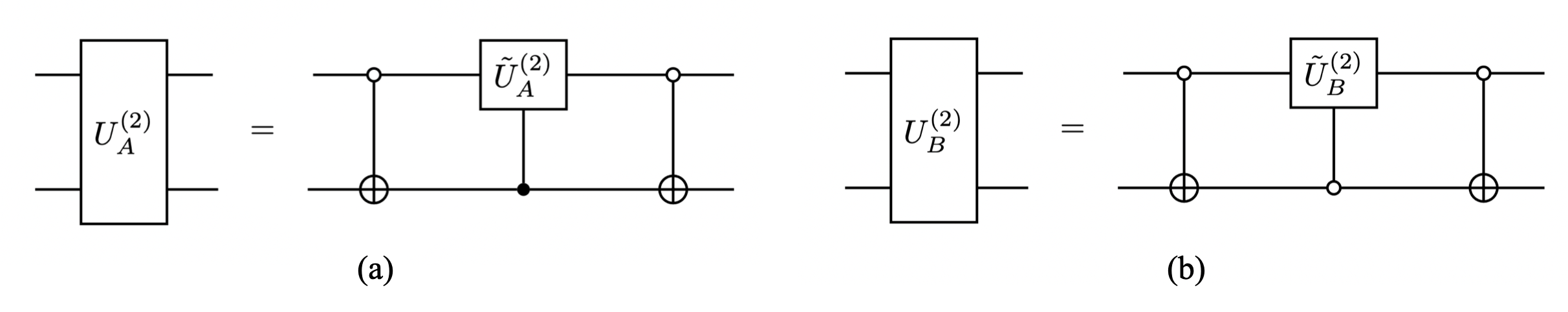}
	\caption{Circuit that implements the $U_{A}^{(2)}$ and $U_{B}^{(2)}$ unitary operations.}
	\label{FigcircUAB2}
\end{figure*}
\begin{align}
	\label{eq145}
	U_{B}^{(2)}=(\ketbra{0}\otimes X +\ketbra{1}\otimes \mathbb{1})(\mathbb{1} \otimes \ketbra{1}+\Tilde{U}_{B}^{(2)}\otimes \ketbra{0})(\ketbra{0}\otimes X +\ketbra{1}\otimes \mathbb{1}).
\end{align}
\begin{figure*}
	\centering
	\includegraphics[scale=0.34]{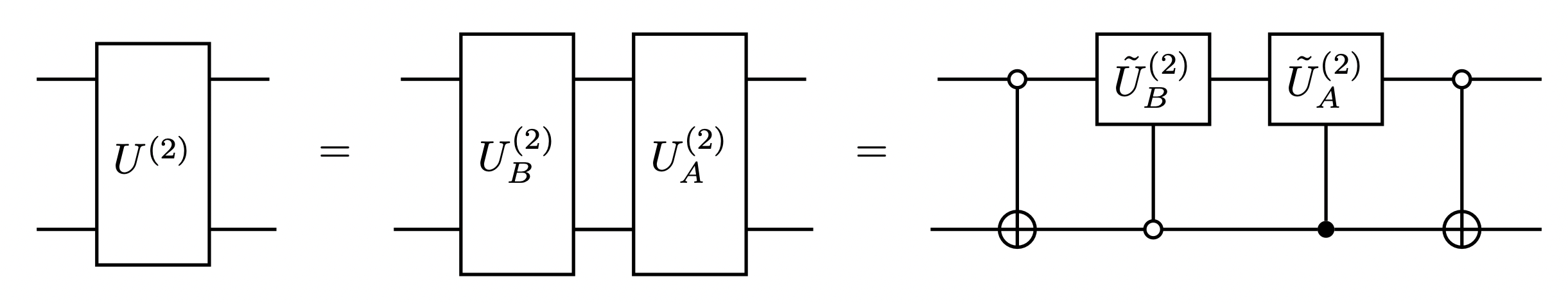}
	\caption{Circuit that implements $U^{(2)}$.}
	\label{FigcircU2}
\end{figure*}
Figure \ref{FigcircUAB2}. (b) shows the circuit that implements $U_{B}^{(2)}$. We can put this together and make the same simplification to the circuit as we did for the unitary operation $U^{(1)}$, where the two open controlled CNOT's are replaced by the identity operation. Now Figure \ref{FigcircU2}. shows the circuit that implements the unitary operation $U^{(2)}$.\\
We can now decompose the controlled $\Tilde{U}_{A}^{(2)}$ operation and the open controlled $\Tilde{U}_{B}^{(2)}$  operation using the methods from Appendix B in a similar way to the decomposition used above for $U^{(1)}$. We obtain the circuit for the controlled $\Tilde{U}_{A}^{(2)}$ operation, which is shown in Figure \ref{FigcircCUA2} and the circuit that implements the open control $\Tilde{U}_{B}^{(2)}$ operation is shown in Figure \ref{FigcircCUB2}.\\\
\begin{figure*}
	\centering
	\includegraphics[scale=0.34]{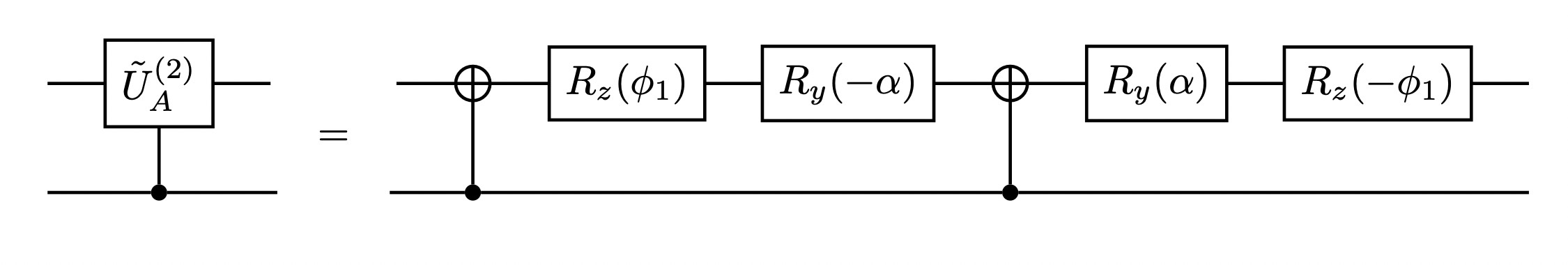}
	\caption{Circuit that implements controled $\Tilde{U}_{A}^{(2)}$.}
	\label{FigcircCUA2}
\end{figure*}
\begin{figure*}[h]
	\centering
	\includegraphics[scale=0.34]{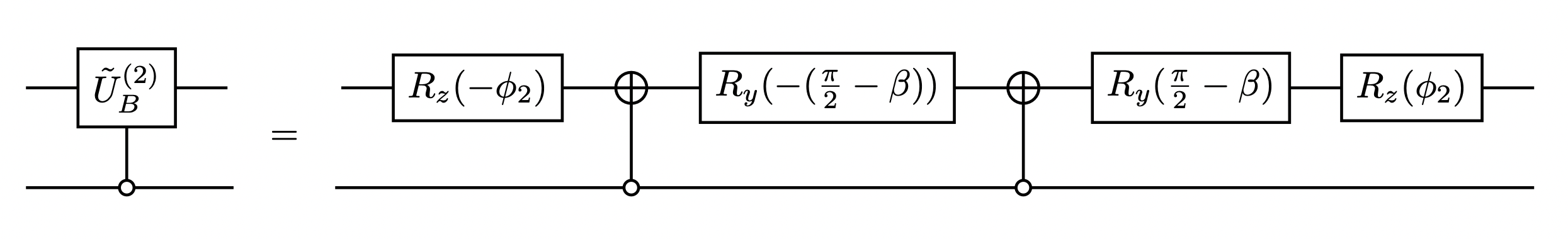}
	\caption{Circuit that implements open control $\Tilde{U}_{B}^{(2)}$.}
	\label{FigcircCUB2}
\end{figure*}
This completes the decomposition of the operations $U^{(1)}$ and $U^{(2)}$ into single qubit and CNOT gates. We can now use these circuits in the the circuit shown in Figure \ref{FigConvexForking} to simulate the constituent channel $T_{t}^{(k)}$.

\section{Summary}

We now combine the results in the previous sections and state the full algorithm for the simulation of a single qubit open quantum system, which requires $O(\frac{(\Lambda t)^{3/2}}{\epsilon^{1/2}})$ gates as a solution to the problem defined in Section 2:

(1) Given the generator $\mathcal{L}$ as defined in Section 3, the spectral decomposition of the GKS matrix $A$ yields the decomposition of $\mathcal{L}$,
\begin{align}
	\mathcal{L}=\sum_{k=0}^{3}\lambda_{k}\mathcal{L}_{k},
\end{align}
with $C_{k} \in \mathrm{SO(3)}$ and $\theta_{K}\in [-\frac{\pi}{4},\frac{\pi}{4}]$ specifying the decomposition so that,
\begin{align}
	A_{k}=C_{k}^{T}A(\theta_{k})C_{k}.
\end{align}
for all k=1,2,3. 
(2) Next we choose a precision $\epsilon \geq 0$ so that we can calculate the number of implementations $N$ of $S_{2}(\tau)$, with 
\begin{align}
	N\geq \frac{(4t\Lambda)^{3/2}}{(3\epsilon)^{1/2}}.
\end{align}
(3) Implement $S_{2}(\tau)$ $N$ times using,
\begin{align}
	 T_{t}^{(k)}(\rho)=(\mathcal{U}_{k}^{\dagger}T_{t}^{(\theta_{k})}\mathcal{U}_{k})(\rho)=U_{k}^{\dagger}[T_{t}^{(\theta_{k})}(U_{k}\rho U_{k}^{\dagger})]U_{k}.
\end{align}
where $\lambda_{k}$, $\Lambda$ and $\tau$ have been incorporated into $t'$. $U_{k}$ is obtained from $C_{k}$ as per Section 3, and $T^{(\theta_{k})}_{t'}$ is implemented via quantum forking as per Section 5. One should note that to implement products of $T^{(k)}_{t}$ we use the same circuit as in Fig. \ref{FigConvexForking} but for each implementation of $T^{(k)}_{t}$ we use a new qubit for the environment and arbitrary states $\ket{\phi_{1}}$ and $\ket{\phi_{2}}$. 

Now that we have summarised the full algorithm, one should be aware of the implication this has for simulating Markovian open quantum systems. This algorithm provides a framework for digitally simulating any single qubit Markovian open quantum system, up to a high enough precision allowing us to study the dynamics of systems for which their master equation cannot be analytically integrated. 

This tutorial set out to pedagogically introduce the tools and techniques needed to construct algorithms for the simulation of Markovian open quantum systems by looking at an algorithm for the simplest case of the simulation of a single qubit open quantum system. Of course, the reader is urged to use the ideas and techniques taught here to study the more general cases of simulating Markovian open quantum systems of arbitrary and finite dimension \cite{sweke2015universal} as well as the case of simulating non Markovian open quantum systems \cite{sweke2016digital}.

In light of the results presented in this tutorial one should also mention the possible avenues of extension. The first major issue that was recognised in developing algorithms to simulate open quantum systems and an issue that plagues this algorithm as well, is the fact that we can only approximate our channel $T_{t}$ up to second order with SLT product formulas so as not to violate complete positivity. This constraint arises from the fact that you cannot use higher order SLT product formulas for coefficients that are all positive, as proven by Suzuki in \cite{suzuki1990fractal}. Given this information one could look into methods for approximating the channel $T_{t}$ to higher orders that do not violate complete positivity. Another area of extension would be with regards to the simulation of Markovian open quantum systems with arbitrary dimension \cite{sweke2015universal}, where on could possibly tackle the open problem of convexly decomposing the universal semigroup of arbitrary dimensional channels into extreme channels \cite{ruskai2007some}.

\section*{Acknowledgements}
This work is based upon research supported by
the National Research Foundation of the Republic of
South Africa. Support from the NICIS (National Integrated Cyber Infrastructure System) e-research grant
QICSA is kindly acknowledged. We would like to thank
Ms. S. M. Pillay for her assistance in proofreading the
manuscript.


\bibliographystyle{IEEEtran}
\bibliography{References.bib}

\begin{thebibliography}{10}
\providecommand{\url}[1]{#1}
\csname url@samestyle\endcsname
\providecommand{\newblock}{\relax}
\providecommand{\bibinfo}[2]{#2}
\providecommand{\BIBentrySTDinterwordspacing}{\spaceskip=0pt\relax}
\providecommand{\BIBentryALTinterwordstretchfactor}{4}
\providecommand{\BIBentryALTinterwordspacing}{\spaceskip=\fontdimen2\font plus
\BIBentryALTinterwordstretchfactor\fontdimen3\font minus
  \fontdimen4\font\relax}
\providecommand{\BIBforeignlanguage}[2]{{%
\expandafter\ifx\csname l@#1\endcsname\relax
\typeout{** WARNING: IEEEtran.bst: No hyphenation pattern has been}%
\typeout{** loaded for the language `#1'. Using the pattern for}%
\typeout{** the default language instead.}%
\else
\language=\csname l@#1\endcsname
\fi
#2}}
\providecommand{\BIBdecl}{\relax}
\BIBdecl

\bibitem{lloyd1996universal}
S.~Lloyd, ``Universal quantum simulators,'' \emph{Science}, vol. 273, no. 5278,
  pp. 1073--1078, 1996.

\bibitem{campbell2019random}
E.~Campbell, ``Random compiler for fast hamiltonian simulation,''
  \emph{Physical review letters}, vol. 123, no.~7, p. 070503, 2019.

\bibitem{childs2012hamiltonian}
A.~M. Childs and N.~Wiebe, ``Hamiltonian simulation using linear combinations
  of unitary operations,'' \emph{Quantum Information \& Computation}, vol.~12,
  no. 11-12, pp. 901--924, 2012.

\bibitem{childs2018toward}
A.~M. Childs, D.~Maslov, Y.~Nam, N.~J. Ross, and Y.~Su, ``Toward the first
  quantum simulation with quantum speedup,'' \emph{Proceedings of the National
  Academy of Sciences}, vol. 115, no.~38, pp. 9456--9461, 2018.

\bibitem{childs2019faster}
A.~M. Childs, A.~Ostrander, and Y.~Su, ``Faster quantum simulation by
  randomization,'' \emph{Quantum}, vol.~3, p. 182, 2019.

\bibitem{childs2021theory}
A.~M. Childs, Y.~Su, M.~C. Tran, N.~Wiebe, and S.~Zhu, ``Theory of trotter
  error with commutator scaling,'' \emph{Physical Review X}, vol.~11, no.~1, p.
  011020, 2021.

\bibitem{berry2007efficient}
D.~W. Berry, G.~Ahokas, R.~Cleve, and B.~C. Sanders, ``Efficient quantum
  algorithms for simulating sparse hamiltonians,'' \emph{Communications in
  Mathematical Physics}, vol. 270, no.~2, pp. 359--371, 2007.

\bibitem{berry2015simulating}
D.~W. Berry, A.~M. Childs, R.~Cleve, R.~Kothari, and R.~D. Somma, ``Simulating
  hamiltonian dynamics with a truncated taylor series,'' \emph{Physical review
  letters}, vol. 114, no.~9, p. 090502, 2015.

\bibitem{berry2015hamiltonian}
D.~W. Berry, A.~M. Childs, and R.~Kothari, ``Hamiltonian simulation with nearly
  optimal dependence on all parameters,'' in \emph{2015 IEEE 56th annual
  symposium on foundations of computer science}.\hskip 1em plus 0.5em minus
  0.4em\relax IEEE, 2015, pp. 792--809.

\bibitem{berry2020time}
D.~W. Berry, A.~M. Childs, Y.~Su, X.~Wang, and N.~Wiebe, ``Time-dependent
  hamiltonian simulation with $l^{1}$-norm scaling,'' \emph{Quantum}, vol.~4,
  p. 254, 2020.

\bibitem{papageorgiou2012efficiency}
A.~Papageorgiou and C.~Zhang, ``On the efficiency of quantum algorithms for
  hamiltonian simulation,'' \emph{Quantum Information Processing}, vol.~11,
  no.~2, pp. 541--561, 2012.

\bibitem{low2019hamiltonian}
G.~H. Low and I.~L. Chuang, ``Hamiltonian simulation by qubitization,''
  \emph{Quantum}, vol.~3, p. 163, 2019.

\bibitem{suzuki1990fractal}
M.~Suzuki, ``Fractal decomposition of exponential operators with applications
  to many-body theories and monte carlo simulations,'' \emph{Physics Letters
  A}, vol. 146, no.~6, pp. 319--323, 1990.

\bibitem{suzuki1991general}
------, ``General theory of fractal path integrals with applications to
  many-body theories and statistical physics,'' \emph{Journal of Mathematical
  Physics}, vol.~32, no.~2, pp. 400--407, 1991.

\bibitem{breuer2002theory}
H.-P. Breuer, F.~Petruccione \emph{et~al.}, \emph{The theory of open quantum
  systems}.\hskip 1em plus 0.5em minus 0.4em\relax Oxford University Press on
  Demand, 2002.

\bibitem{rivas2012open}
A.~Rivas and S.~F. Huelga, \emph{Open quantum systems}.\hskip 1em plus 0.5em
  minus 0.4em\relax Springer, 2012, vol.~10.

\bibitem{georgescu2014quantum}
I.~M. Georgescu, S.~Ashhab, and F.~Nori, ``Quantum simulation,'' \emph{Reviews
  of Modern Physics}, vol.~86, no.~1, p. 153, 2014.

\bibitem{brown2010using}
K.~L. Brown, W.~J. Munro, and V.~M. Kendon, ``Using quantum computers for
  quantum simulation,'' \emph{Entropy}, vol.~12, no.~11, pp. 2268--2307, 2010.

\bibitem{sweke2014simulation}
R.~Sweke, I.~Sinayskiy, and F.~Petruccione, ``Simulation of single-qubit open
  quantum systems,'' \emph{Physical Review A}, vol.~90, no.~2, p. 022331, 2014.

\bibitem{bacon2001universal}
D.~Bacon, A.~M. Childs, I.~L. Chuang, J.~Kempe, D.~W. Leung, and X.~Zhou,
  ``Universal simulation of markovian quantum dynamics,'' \emph{Physical Review
  A}, vol.~64, no.~6, p. 062302, 2001.

\bibitem{sweke2015universal}
R.~Sweke, I.~Sinayskiy, D.~Bernard, and F.~Petruccione, ``Universal simulation
  of markovian open quantum systems,'' \emph{Physical Review A}, vol.~91,
  no.~6, p. 062308, 2015.

\bibitem{childs2016efficient}
A.~M. Childs and T.~Li, ``Efficient simulation of sparse markovian quantum
  dynamics,'' \emph{arXiv preprint arXiv:1611.05543}, 2016.

\bibitem{cleve2016efficient}
R.~Cleve and C.~Wang, ``Efficient quantum algorithms for simulating lindblad
  evolution,'' \emph{arXiv preprint arXiv:1612.09512}, 2016.

\bibitem{hu2020quantum}
Z.~Hu, R.~Xia, and S.~Kais, ``A quantum algorithm for evolving open quantum
  dynamics on quantum computing devices,'' \emph{Scientific reports}, vol.~10,
  no.~1, pp. 1--9, 2020.

\bibitem{gaikwad2022simulating}
A.~Gaikwad, K.~Dorai \emph{et~al.}, ``Simulating open quantum dynamics on an
  nmr quantum processor using the sz.-nagy dilation algorithm,'' \emph{arXiv
  preprint arXiv:2201.07687}, 2022.

\bibitem{kamakari2022digital}
H.~Kamakari, S.-N. Sun, M.~Motta, and A.~J. Minnich, ``Digital quantum
  simulation of open quantum systems using quantum imaginary--time evolution,''
  \emph{PRX Quantum}, vol.~3, no.~1, p. 010320, 2022.

\bibitem{suri2022two}
N.~Suri, J.~Barreto, S.~Hadfield, N.~Wiebe, F.~Wudarski, and J.~Marshall,
  ``Two-unitary decomposition algorithm and open quantum system simulation,''
  \emph{arXiv preprint arXiv:2207.10007}, 2022.

\bibitem{lindblad1976generators}
G.~Lindblad, ``On the generators of quantum dynamical semigroups,''
  \emph{Communications in Mathematical Physics}, vol.~48, no.~2, pp. 119--130,
  1976.

\bibitem{gorini1976completely}
V.~Gorini, A.~Kossakowski, and E.~C.~G. Sudarshan, ``Completely positive
  dynamical semigroups of n-level systems,'' \emph{Journal of Mathematical
  Physics}, vol.~17, no.~5, pp. 821--825, 1976.

\bibitem{park2019parallel}
D.~K. Park, I.~Sinayskiy, M.~Fingerhuth, F.~Petruccione, and J.-K.~K. Rhee,
  ``Parallel quantum trajectories via forking for sampling without
  redundancy,'' \emph{New Journal of Physics}, vol.~21, no.~8, p. 083024, 2019.

\bibitem{kraus1971general}
K.~Kraus, ``General state changes in quantum theory,'' \emph{Annals of
  Physics}, vol.~64, no.~2, pp. 311--335, 1971.

\bibitem{choi1975completely}
M.-D. Choi, ``Completely positive linear maps on complex matrices,''
  \emph{Linear algebra and its applications}, vol.~10, no.~3, pp. 285--290,
  1975.

\bibitem{jamiolkowski1972linear}
A.~Jamio{\l}kowski, ``Linear transformations which preserve trace and positive
  semidefiniteness of operators,'' \emph{Reports on Mathematical Physics},
  vol.~3, no.~4, pp. 275--278, 1972.

\bibitem{ruskai2002analysis}
M.~B. Ruskai, S.~Szarek, and E.~Werner, ``An analysis of completely-positive
  trace-preserving maps on m2,'' \emph{Linear algebra and its applications},
  vol. 347, no. 1-3, pp. 159--187, 2002.

\bibitem{hall2013lie}
B.~C. Hall, ``Lie groups, lie algebras, and representations,'' in \emph{Quantum
  Theory for Mathematicians}.\hskip 1em plus 0.5em minus 0.4em\relax Springer,
  2013, pp. 333--366.

\bibitem{gilmore2008lie}
R.~Gilmore, \emph{Lie groups, physics, and geometry: an introduction for
  physicists, engineers and chemists}.\hskip 1em plus 0.5em minus 0.4em\relax
  Cambridge University Press, 2008.

\bibitem{schuller2015lectures}
F.~P. Schuller, ``Lectures on the geometric anatomy of theoretical physics,''
  \emph{Institute for Quantum Gravity, Friedrich-Alexander Universit{\"a}t
  Erlangen-N{\"u}rnberg}, 2015.

\bibitem{agler1988positive}
J.~Agler, W.~Helton, S.~McCullough, and L.~Rodman, ``Positive semidefinite
  matrices with a given sparsity pattern,'' \emph{Linear algebra and its
  applications}, vol. 107, pp. 101--149, 1988.

\bibitem{stinespring1955positive}
W.~F. Stinespring, ``Positive functions on c*-algebras,'' \emph{Proceedings of
  the American Mathematical Society}, vol.~6, no.~2, pp. 211--216, 1955.

\bibitem{wolf2012quantum}
M.~M. Wolf, ``Quantum channels \& operations: Guided tour,'' \emph{Lecture
  notes available at http://www-m5. ma. tum. de/foswiki/pub M}, vol.~5, 2012.

\bibitem{horn1994topics}
R.~A. Horn, R.~A. Horn, and C.~R. Johnson, \emph{Topics in matrix
  analysis}.\hskip 1em plus 0.5em minus 0.4em\relax Cambridge university press,
  1994.

\bibitem{horn1990matrix}
------, \emph{Matrix Analysis}.\hskip 1em plus 0.5em minus 0.4em\relax
  Cambridge University Press, 1990.

\bibitem{pendersen1979}
G.~Pedersen, \emph{$C^{*}$-Algebras and Their Automorphism Groups}.\hskip 1em
  plus 0.5em minus 0.4em\relax Academic press, New York, 1979.

\bibitem{sweke2016digital}
R.~Sweke, M.~Sanz, I.~Sinayskiy, F.~Petruccione, and E.~Solano, ``Digital
  quantum simulation of many-body non-markovian dynamics,'' \emph{Physical
  Review A}, vol.~94, no.~2, p. 022317, 2016.

\bibitem{ruskai2007some}
M.~B. Ruskai, ``Some open problems in quantum information theory,'' \emph{arXiv
  preprint arXiv:0708.1902}, 2007.

\bibitem{barenco1995elementary}
A.~Barenco, C.~H. Bennett, R.~Cleve, D.~P. DiVincenzo, N.~Margolus, P.~Shor,
  T.~Sleator, J.~A. Smolin, and H.~Weinfurter, ``Elementary gates for quantum
  computation,'' \emph{Physical review A}, vol.~52, no.~5, p. 3457, 1995.

\bibitem{nielsen2002quantum}
M.~A. Nielsen and I.~Chuang, ``Quantum computation and quantum information,''
  2002.

\end{thebibliography}

\appendix

\section{Quantum Circuit Decomposition}
We need some useful results from \cite{barenco1995elementary}, for the decomposition of the unitary matrices corresponding to the quasi-extreme channels. We also make use of the quantum computing notation where we shall denote the Pauli matrices as:\\
\begin{align}
\label{eqB1}
	\sigma_{1}=X, && \sigma_{2}=Y, && \sigma_{3}=Z.
\end{align}
\textbf{Definition B.1.} We define the following unitary matrices,
\begin{align}
\label{eqB2}
	R_{y}(\theta)& \equiv \exp(-i\frac{\theta}{2}Y)=\cos(\frac{\theta}{2})\mathbb{1}-i\sin(\frac{\theta}{2})Y=\begin{pmatrix}
		\cos(\frac{\theta}{2}) & -\sin(\frac{\theta}{2})\\
		\\
		\sin(\frac{\theta}{2}) & \cos(\frac{\theta}{2})
	\end{pmatrix},\\
	\nonumber\\
	R_{z}(\theta)& \equiv \exp(-i\frac{\theta}{2}Z)=\cos(\frac{\theta}{2})\mathbb{1}-i\sin(\frac{\theta}{2})Z=\begin{pmatrix}
		e^{-i\frac{\theta}{2}} & 0\\
		0 & e^{i\frac{\theta}{2}}
	\end{pmatrix},\\
	\nonumber\\
	P(\delta)& \equiv \begin{pmatrix}
		e^{i\delta} & 0\\
		0 & e^{i\delta}\\
	\end{pmatrix}.
\end{align} 

We shall make use of the following lemmas to decompose any $2 \times 2$ unitary matrix.\\
\textbf{Lemma B.2.} Suppose $U$ is a $2 \times 2$ unitary matrix. Then there exists $\alpha,\beta,\delta,\theta \in \mathbb{R}$ such that,
\begin{equation}
\label{eqB3}
U=P(\delta)R_{z}(\alpha)R_{y}(\theta)R_{z}(\beta).
\end{equation}
\begin{proof}
	Since $U$ is unitary its rows and columns are orthonormal, from which it follows that there exists $\alpha,\beta,\delta,\theta \in \mathbb{R}$ such that,
	\begin{equation}
		\label{eqB4}
		U=\begin{pmatrix}
			e^{i(\delta-\frac{\alpha}{2}-\frac{\beta}{2})}\cos(\frac{\theta}{2}) & -e^{i(\delta-\frac{\alpha}{2}+\frac{\beta}{2})}\sin(\frac{\theta}{2})\\
			e^{i(\delta+\frac{\alpha}{2}-\frac{\beta}{2})}\sin(\frac{\theta}{2}) & e^{i(\delta+\frac{\alpha}{2}+\frac{\beta}{2})}\cos(\frac{\theta}{2})
		\end{pmatrix}.
	\end{equation}
	Now equation (\ref{eqB3}) follows directly from the definitions of $R_{y},R_{z},P$ and matrix multiplication.
\end{proof}
\textbf{Lemma B.3.} Suppose $U$ is a $2\times 2$ unitary matrix. Then there exists $2 \times 2$ matrices $A,B,C$ such that $ABC=\mathbb{1}$ and $U=P(\delta)AXBXC$.\\
\begin{proof}
	First we observe that, $XYX=-Y$ which implies that,\\
	\begin{align}
		\label{eqB5}
		XR_{y}(\theta)X&=X(\cos(\frac{\theta}{2})\mathbb{1}-i\sin(\frac{\theta}{2})Y)X\nonumber\\
		&=\cos(\frac{\theta}{2})\mathbb{1}-i\sin(\frac{\theta}{2})XYX\nonumber\\
		&=\cos(\frac{\theta}{2})\mathbb{1}+i\sin(\frac{\theta}{2})Y\nonumber\\
		&=R_{y}(-\theta).
	\end{align}
Now if we set $A \equiv R_{z}(\alpha)R_{y}(\frac{\theta}{2})$, $B \equiv R_{y}(-\frac{\theta}{2})R_{z}(\frac{-(\beta +\alpha )}{2})$ and $C \equiv R_{z}(\frac{(\beta -\alpha)}{2})$, we have,
\begin{align}
	\label{eqB6}
	ABC&=R_{z}(\alpha)R_{y}\left( \frac{\theta}{2}\right)R_{y}\left(-\frac{\theta}{2}\right)R_{z}\left(-\frac{\beta +\alpha }{2}\right)R_{z}\left(\frac{\beta -\alpha}{2}\right)\nonumber\\
	&=R_{z}(\alpha)R_{y}\left(\frac{\theta}{2}-\frac{\theta}{2}\right)R_{z}\left(\frac{-\beta -\alpha }{2}+\frac{\beta -\alpha}{2}\right)\nonumber\\
	&=R_{z}(\alpha)R_{y}(0)R_{z}(-\alpha)\nonumber\\
	&=R_{z}(\alpha)R_{z}(-\alpha)\nonumber\\
	&=\mathbb{1}
\end{align}
Since $X^{2}=\mathbb{1}$ and from equation (\ref{eqB5}) we see that,
\begin{align}
	\label{eqB7}
	XBX&=XR_{y}\left(-\frac{\theta}{2}\right)R_{z}\left(-\frac{(\beta +\alpha )}{2}\right)X\nonumber\\
	&=XR_{y}\left(-\frac{\theta}{2}\right)XXR_{z}\left(-\frac{(\beta +\alpha )}{2}\right)X\nonumber\\
	&=R_{y}\left(\frac{\theta}{2}\right)R_{z}\left(\frac{(\beta +\alpha )}{2}\right).
\end{align}
Thus,
\begin{align}
	\label{eqB8}
	AXBXC&=R_{z}(\alpha)R_{y}\left(\frac{\theta}{2}\right)XR_{y}\left(-\frac{\theta}{2}\right)R_{z}\left(-\frac{(\beta +\alpha )}{2}\right)XR_{z}\left(\frac{\beta -\alpha}{2}\right)\nonumber\\
	&=R_{z}(\alpha)R_{y}\left(\frac{\theta}{2}\right)R_{y}\left(\frac{\theta}{2}\right)R_{z}\left(\frac{(\beta +\alpha )}{2}\right)R_{z}\left(\frac{(\beta -\alpha)}{2}\right)\nonumber\\
	&=R_{z}(\alpha)R_{y}(\theta)R_{z}(\beta).
\end{align}
Now by Lemma B.2. we have that,
\begin{equation}
	\label{eqB9}
	U=P(\delta)AXBXC,
\end{equation}
as required.
\end{proof}
Now we require a way to decompose a controlled unitary operation. We have the following lemma which tells us how to do this.\\
\\
\textbf{Lemma B.4.} For any $2 \times 2$ unitary matrix $U$, a controlled unitary gate $c_{1}(U)$ can be simulated with the quantum circuit in Figure \ref{FigCU}. with $A,B,C \in \mathrm{SU}(2)$, if and only if $U \in \mathrm{SU}(2)$. \\
\begin{figure}
\centering
	\includegraphics[scale=0.2]{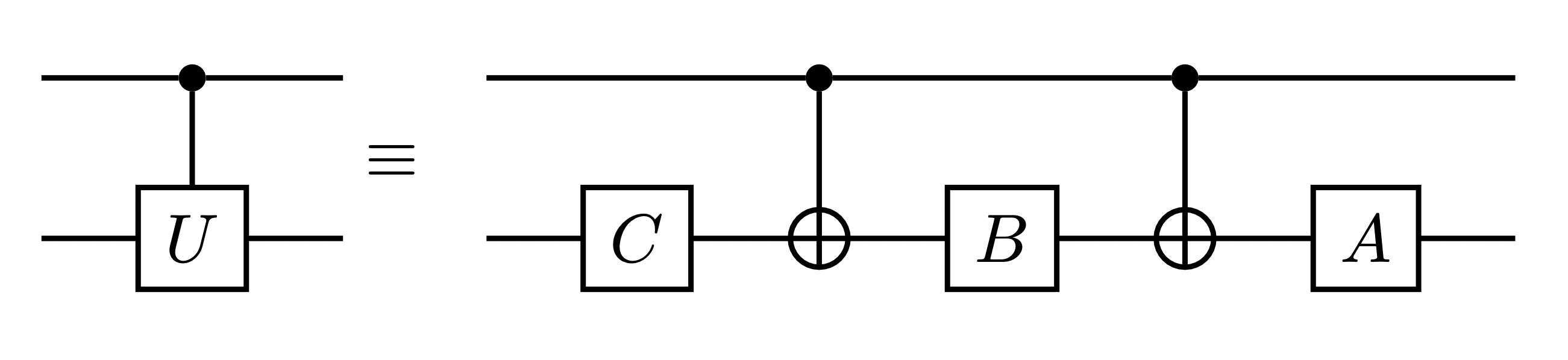}
	\caption{Quantum circuit implementing the controlled unitary gate $c_{1}(U)$.}
	\label{FigCU}
\end{figure}
\begin{proof}
	For the "if" part let $A,B$ and $C$ be defined as in lemma B.3. If the value of the control qubit is 0 then $ABC=\mathbb{1}$ is applied to the target qubit. If the value of the control qubit is 1 then $AXBXC=U$ is applied to the target qubit.\\
	\\
	For the "only if" part note that $ABC=\mathbb{1}$ must hold if the output of the circuit is correct when the control qubit is 0. Also if the circuit simulates a $c_{1}(U)$ gate then $AXBXC=U$. Therefore since $\det(AXBXC)=1$, $U$ must also be special unitary and hence $U \in \mathrm{SU}(2)$.
\end{proof}

\section{Stinespring Representation of the Quantum Channel}
When we constructed the unitary operators that implement the quasi extreme channels we make use of the Stinespring representation of the quantum channel $T_{t}$ \cite{stinespring1955positive,wolf2012quantum,nielsen2002quantum}, this appendix shall provide a brief outline of the Stinespring representation as well as an example of its use.

\begin{theorem}
	Let $T: \mathcal{B(H}_{s})\rightarrow \mathcal{B(H}_{s})$ be a CPTP map where $\mathcal{H}_{s}$ is the Hilbert space of the system. Then there exists a Hilbert space $\mathcal{H}_{E}$ called the environment whose and a unitary operator $U$ acting on the joint space $\mathcal{H}_{s} \otimes \mathcal{H}_{E}$
	and a quantum state $\ketbra{0} \in \mathcal{B(H}_{E})$ such that

\begin{align}
		T(\rho)=\mathrm{tr}_{E}[U(\ketbra{0}\otimes \rho)U^{\dagger}], \hspace{3mm} \forall \rho \in \mathcal{B(H}_{s})
	\end{align}
where $\dim(\mathcal{H}_{E})\geq \dim(\mathcal{H}_{s})^{2}$  and the representation is unique up to a unitary equivalence.

\end{theorem}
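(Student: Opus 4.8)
The plan is to build the dilation directly from the Kraus representation (\ref{eq3}), which is available for any CPTP map $T$ together with the completeness relation $\sum_{j=1}^{r}K_{j}^{\dagger}K_{j}=\mathbb{1}_{s}$ and the bound $r=\mathrm{rank}(\tau)\leq \dim(\mathcal{H}_{s})^{2}$ noted just after (\ref{eq3}). First I would fix an orthonormal basis $\{\ket{j}\}$ of an environment space $\mathcal{H}_{E}$ with $\dim(\mathcal{H}_{E})\geq \dim(\mathcal{H}_{s})^{2}\geq r$, and define a linear map $V:\mathcal{H}_{s}\rightarrow \mathcal{H}_{E}\otimes \mathcal{H}_{s}$ by
\begin{equation}
V\ket{\psi}=\sum_{j=1}^{r}\ket{j}\otimes K_{j}\ket{\psi}.
\end{equation}
The completeness relation then gives $V^{\dagger}V=\sum_{i,j}\braket{i}{j}K_{i}^{\dagger}K_{j}=\sum_{j}K_{j}^{\dagger}K_{j}=\mathbb{1}_{s}$, so $V$ is an isometry.

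Next I would promote $V$ to a genuine unitary. One defines $U$ on the subspace $\ket{0}\otimes \mathcal{H}_{s}$ by $U(\ket{0}\otimes\ket{\psi})=V\ket{\psi}$; since $V$ is an isometry this maps a $\dim(\mathcal{H}_{s})$-dimensional subspace isometrically onto its image. The orthogonal complement of $\ket{0}\otimes\mathcal{H}_{s}$ and the orthogonal complement of the image both have dimension $(\dim(\mathcal{H}_{E})-1)\dim(\mathcal{H}_{s})$, since the ambient space $\mathcal{H}_{E}\otimes\mathcal{H}_{s}$ has the same total dimension in either count. Hence $U$ extends to a unitary on all of $\mathcal{H}_{E}\otimes\mathcal{H}_{s}$ by mapping one complement isometrically onto the other. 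This is precisely where the hypothesis $\dim(\mathcal{H}_{E})\geq \dim(\mathcal{H}_{s})^{2}$ is used: it guarantees enough basis vectors $\ket{j}$ to accommodate all $r$ Kraus operators.

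I would then confirm the dilation formula by tracing out the environment. Using $U(\ket{0}\otimes\ket{\psi})=V\ket{\psi}$ and the basis $\{\ket{k}\}$ of $\mathcal{H}_{E}$,
\begin{align}
\mathrm{tr}_{E}[U(\ketbra{0}\otimes \rho)U^{\dagger}]&=\mathrm{tr}_{E}[V\rho V^{\dagger}],\nonumber\\
&=\sum_{k}(\bra{k}\otimes \mathbb{1}_{s})V\rho V^{\dagger}(\ket{k}\otimes \mathbb{1}_{s}),\nonumber\\
&=\sum_{k}K_{k}\rho K_{k}^{\dagger}=T(\rho),
\end{align}
where the orthonormality of $\{\ket{k}\}$ collapses the double sum over Kraus indices to a single sum. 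This reproduces (\ref{eq3}) and establishes existence.

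Finally, for uniqueness up to unitary equivalence I would invoke the unitary freedom of Kraus decompositions: any two minimal Kraus families $\{K_{j}\}$ and $\{K_{i}'\}$ representing the same $T$ are related by $K_{i}'=\sum_{j}u_{ij}K_{j}$ for a unitary $(u_{ij})$, and this matrix induces a unitary on $\mathcal{H}_{E}$ intertwining the two isometries, hence the two dilations. The main obstacle is not any single computation but the isometry-extension step: one must argue carefully that an isometry defined on a proper subspace extends to a global unitary, and carry the dimension bookkeeping so that the stated bound $\dim(\mathcal{H}_{E})\geq \dim(\mathcal{H}_{s})^{2}$ emerges naturally from $r\leq \dim(\mathcal{H}_{s})^{2}$.
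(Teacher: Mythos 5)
Your proposal is correct and takes essentially the same route as the paper's Appendix C: you construct the isometry $V\ket{\psi}=\sum_{j}\ket{j}\otimes K_{j}\ket{\psi}$ from the Kraus operators (this is exactly the paper's ``embed the Kraus operators in the first block-column of $U$'') and verify the dilation formula by tracing out the environment. You are in fact more thorough than the paper, which merely asserts that the remaining block columns can be populated to make $U$ unitary and never addresses the uniqueness claim, whereas you justify the unitary extension by counting dimensions of the orthogonal complements and sketch uniqueness via the unitary freedom of Kraus decompositions.
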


Given the Kraus representation of the channel $T$ as in equation (\ref{eq3}) one can embed the Kraus in the first block-column of $U$ that acts on the system as well as auxilary qubits that emulate an environment and we populate the rest of block columns so that $U$ is unitary. Now we have,
\begin{align}
	U= \begin{pmatrix}
		K_{1} & \dots & \dots \\
		K_{2} & \dots & \dots \\
		\vdots & \ddots & \vdots\\
		K_{r} & \dots & \dots \\
	\end{pmatrix}
\end{align} 
where $r \leq \dim(\mathcal{H}_{s})^{2}$.  This is done so that when we apply the unitary $U$ to an initial state $\ket{0} \otimes \ket{\psi}$ we get,
\begin{align}
	U(\ket{0} \otimes \ket{\psi})=\sum_{j=1}^{r} \ket{j} \otimes K_{j}\ket{\psi},
\end{align}
where $\{\ket{j}\}_{j=1}^{r}$ is an orthonormal basis for the environment. Now by taking the outer product of this state and tracing out the environment we get the action of the channel $T$ on the initial state $\ketbra{\psi}$ i.e. $T(\ketbra{\psi})$, we can demonstrate this as follows,
\begin{align}
	\mathrm{tr}_{E}\left\{\sum_{j,j'=1}^{r} \ketbra{j}{j'} \otimes K_{j}\ketbra{\psi}K_{j'}^{\dagger}\right\}&=\sum_{j,j'=1}^{r} \mathrm{tr}_{E}\left\{\ketbra{j}{j'}\right\}K_{j}\ketbra{\psi}K_{j'}^{\dagger}\nonumber\\
	&=\sum_{j,j'=1}^{r} \braket{j}{j'}K_{j}\ketbra{\psi}K_{j'}^{\dagger}\nonumber\\
	&=\sum_{j,j'=1}^{r} \delta_{jj'}K_{j}\ketbra{\psi}K_{j'}^{\dagger}\nonumber\\
&=\sum_{j=1}^{r}K_{j}\ketbra{\psi}K_{j}^{\dagger}\nonumber\\
	&=T(\ketbra{\psi}).
\end{align}
Now that we have seen how to use the Stinespring representation of the channel to construct a unitary that acts on the total system plus environment, given the Kraus representation of the channel. Let us see an example of the Stinespring representation for the simple amplitude damping channel.\\
\\
\textbf{Example:} \textit{(Amplitude Damping for a single qubit)} Let’s consider the amplitude damping channel $T_{AD}$, with Kraus operators,
\begin{align}
	K_{0}= \ketbra{0}+ \sqrt{1-p}\ketbra{1}, && K_{1}=\sqrt{p}\ketbra{0}{1}
\end{align}
where in this case $\ket{0}=(1,0)^{T}$ and  $\ket{1}=(0,1)^{T}$. Using the Stinespring representation we get the unitary,
\begin{align}
	U_{AD}=\begin{pmatrix}
		1 & 0 & 0 & 0\\
		0 & \sqrt{1-p} & -\sqrt{p} & 0\\
		0 & \sqrt{p} & \sqrt{1-p} & 0\\
		0 & 0 & 0 & 1\\
	\end{pmatrix},
\end{align}
for which a circuit can be constructed using the methods in appendix B, yielding the circuit in Fig. \ref{FigAD}, with $\theta=2\arcsin(\sqrt{p})$ and $R_{y}(\theta)=\exp(-i\frac{\theta}{2}Y)$.
\begin{figure*}
\centering
	\includegraphics[scale=0.30]{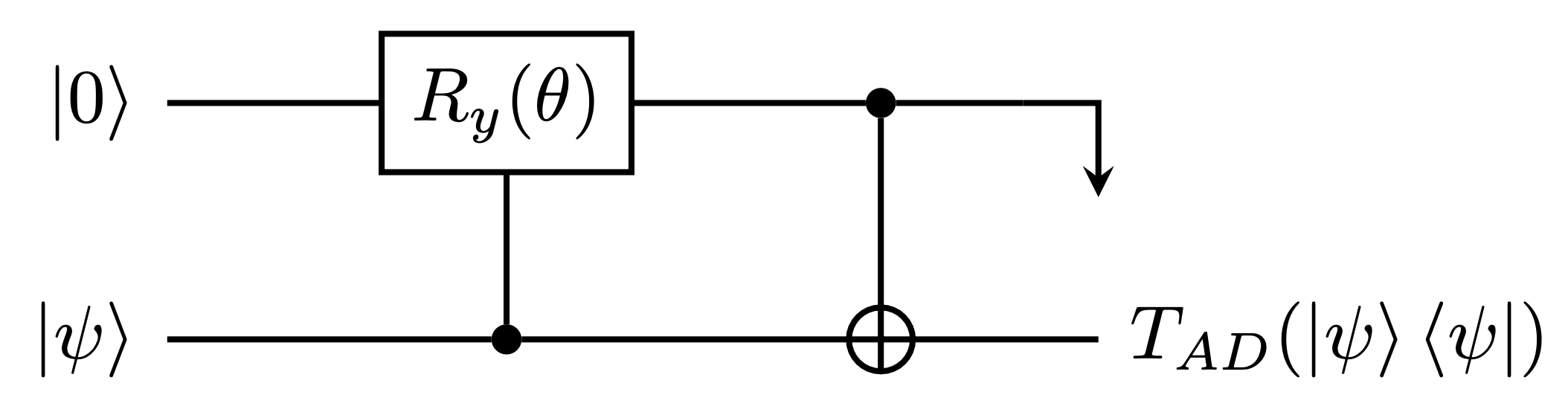}
	\caption{The quantum circuit that implements the amplitude damping channel $T_{AD}$, where the first qubit is the environment in the state $\ket{0}$ and the second qubit is the system. We observe that the arrow on the environment refers to the partial trace taken over the environment.}
	\label{FigAD}
\end{figure*}

\end{document}